\documentclass{article}
\usepackage[a4paper, margin=1in]{geometry}
\usepackage{color}
\usepackage{booktabs}
\usepackage{comment}
\usepackage{amsmath}
\usepackage{amssymb}
\usepackage{tabularx}
\usepackage{hyperref}
\usepackage{authblk}
\usepackage{caption}
\usepackage{subcaption}
\usepackage{graphicx}
\usepackage{algpseudocode}
\usepackage[ruled,vlined]{algorithm2e}
\usepackage[skip=0.7em]{parskip}
\usepackage{xcolor}
\usepackage{soul}
\usepackage{enumitem}
\usepackage{amsthm}
\newtheorem{theorem}{Theorem}
\newtheorem{proposition}{Proposition}

\newtheorem{lemma}{Lemma}
\newtheorem{claim}{Claim}
\newtheorem{corollary}{Corollary}

\hypersetup{
	unicode=false,          
	pdftoolbar=true,        
	pdfmenubar=true,        
	pdffitwindow=false,     
	pdfstartview={FitH},    
	pdftitle={My title},    
	pdfauthor={Author},     
	pdfsubject={Subject},   
	pdfcreator={Creator},   
	pdfproducer={Producer}, 
	pdfkeywords={keywords}, 
	pdfnewwindow=true,      
	colorlinks=true,        
	linkcolor=blue,         
	citecolor=blue,         
	filecolor=magenta,      
	urlcolor=cyan           
}

\title{
    \textbf{Social welfare optimisation under institutional reward and punishment}
}
\author[1,2]{Van An Nguyen}
\author[1,2]{Vuong Khang Huynh}

\author[1,2]{Huu Loi Bui}
\author[1,2]{Hai Anh Ha}
\author[1,2]{Quang Dung Le}
\author[1,2]{Tan Dat Nguyen}
\author[1,2]{Ngoc Ngu Nguyen}
\author[3]{Zhao Song}
\author[4]{Manh Hong Duong} 
\author[1,2,$\star$]{Le Hong Trang}
\author[3,$\star$]{The Anh Han }

\affil[1]{Faculty of Computer Science and Engineering, Ho Chi Minh City University of
Technology (HCMUT), Vietnam}
\affil[2]{Vietnam National University - Ho Chi Minh City (VNU-HCM), Vietnam}
\affil[3]{School of Computing, Engineering and Digital Technologies, Teesside University, United Kingdom}
\affil[4]{School of Mathematics, University of Birmingham, Birmingham, United Kingdom}

\affil[$\star$]{Corresponding authors: The Anh Han (Email: t.han@tees.ac.uk), Le Hong Trang (Email: lhtrang@hcmut.edu.vn)}
\date{ }

\begin{document}

\maketitle

\begin{abstract}

Institutional incentives are widely used to promote cooperation among autonomous, self-regarding agents, from human societies to multi-agent and AI systems. Existing work typically treats incentive design as a bi-objective problem: minimise institutional cost while achieving a high long-run frequency of cooperation. Whether such schemes also maximise social welfare—total population payoff net of institutional expenditure—has remained largely unexplored.
We develop a welfare-centric framework for institutional incentives in finite, well-mixed populations playing a social dilemma (Donation Game and Public Goods Game), considering both rewards for cooperators and punishments for defectors. For each mechanism, we derive explicit expressions for expected social welfare and characterise how it depends on incentive efficiency and selection intensity. Analytically, we identify parameter regimes where social welfare has a single optimal incentive level and regimes with qualitative phase transitions, in which welfare becomes non-monotonic with multiple local optima. We prove that any welfare-maximising incentive is either zero or  concentrated around a simple closed-form target, and we provide an efficient algorithm to compute these optima. Comparing reward and punishment, we further derive close-formed conditions under which  reward outperform punishment in terms of social welfare for any given budget.
Overall, our results reveal a systematic gap between incentives optimised for cost or cooperation frequency and those that maximise welfare. 

\end{abstract}
\tableofcontents

\section{Introduction}
The evolution of cooperation has long been a central puzzle in evolutionary biology, the social sciences, and multi-agent systems \cite{sigmund2010calculus,han2022emergent,paiva2018engineering,tuyls2007evolutionary,perc2017statistical}. In many strategic settings, such as one-shot interactions in the Prisoner’s Dilemma and the Public Goods Game \cite{archetti2012game,axelrod1981evolution,doebeli2005models}, classical evolutionary theory predicts that selection on individual fitness typically favours selfish behaviour \cite{sigmund2010calculus,hofbauer1998evolutionary}. Yet, cooperative behaviour remains widespread in both humans and other animals \cite{fehr2004social,nowak2006five}. This apparent tension between individual and collective benefits has motivated extensive research into mechanisms that enable, stabilise, and shape cooperation in social dilemmas \cite{perc2017statistical,HAN2026_social_welfare,szabo2007evolutionary,xia2023reputation,liu2026dynamic}.

To address this puzzle \cite{perc2017statistical,rand2011evolution}, numerous mechanisms have been proposed, including spatial structure \cite{szabo2007evolutionary,szolnoki2011phase}, kin and group selection \cite{hamilton1964genetical,traulsen2006evolution}, direct and indirect reciprocity \cite{nowak:2005:nature,xia2023reputation}, and institutional incentives \cite{fehr2000cooperation,sasaki2012take,chen2015first,sun2023state}. A particularly important line of work studies external institutional influence, where a central authority invests resources to steer populations towards cooperation \cite{chen2015first,duong2023cost,sigmund2001reward}. For example, an external institution, such as an international body or a local authority, can conditionally reward cooperative individuals or punish defective ones  based either on global statistics  or  local spatial neighbourhood information  \cite{cimpeanu2021cost,han2018fostering,wang2019exploring,sigmund2010social,brandt2006punishing,wang2021incentive,sun2023state}. 

In well-mixed populations, the analysis of these institutional interventions has traditionally been framed as a bi-objective optimisation problem \cite{han2018interference,duong2023cost}. In this framework, a decision-maker conditionally rewards cooperators to guarantee a desired cooperation level while simultaneously minimising the interference cost. Subsequent work has developed rigorous stochastic analyses of such institutional incentives, characterising optimal schemes across different selection intensities and revealing sharp phase transitions in cost efficiency \cite{duonghan2021cost,duong2026cost}. Consequently, prior models predominantly evaluate the success of external interventions through these two specific performance criteria: maximising the frequency of cooperation and minimising the institutional investment \cite{wang2019exploring,gross2025hidden}.

Despite these analytical advances, existing models evaluating external interventions largely overlook a crucial holistic metric: \textit{social welfare}. Social welfare captures the system-level impact of these interventions, broadly defined as the total population payoff minus the external institutional investment \cite{HanInterface2024Welfare}. From a societal perspective, cooperation is valuable only insofar as it results in net benefits once the costs of promoting it are taken into account \cite{karsu2015inequity,nyman2006efficiency,kaneko1979nash,song2026emergence}. Focusing solely on the dual metrics of cooperation levels or budgetary savings risks endorsing interventions that are formally successful by those measures yet ultimately detrimental to overall welfare \cite{HAN2026_social_welfare}. Addressing this gap is essential not only for theoretical completeness but also for practical relevance, especially in contexts such as public policy, distributed systems, and resource allocation \cite{paiva2018engineering,nyman2006efficiency,HAN2026_social_welfare}.

Therefore, in this paper, we \textit{analytically} study the optimisation of social welfare under costly institutional incentives in well-mixed populations. We examine whether schemes that minimise cost and maximise cooperation also maximise social welfare, and whether welfare‑maximisation leads to smaller, larger, or qualitatively different investments than cost‑focused approaches..

We explicitly embed social welfare into the analytical framework of institutional incentives in finite well-mixed populations. This enables a systematic comparison between cost-based and welfare-based optima and reveals when they coincide and when they diverge, including cases where maximising cooperation or minimising cost leads away from welfare-maximising policies. We adopt the well-established social dilemmas games, namely the Donation Game and the Public Goods Game \cite{sigmund2010calculus}, studying how social welfare optimisation responds to variations in institutional incentives over a broad parameter range. By treating social welfare as the central optimisation objective, this work provides a more realistic and policy-relevant framework for understanding how cooperative behaviour should be engineered in biological, social, and artificial systems \cite{HAN2026_social_welfare}.

\subsection*{Organisation of the paper} The rest of the paper is organized as follows. In Section \ref{sec: models and methods}, we introduce the evolutionary game models and mathematically formulate the problem of optimising the social welfare for both institutional reward and punishment. Section \ref{sec: main results} presents the main analytical results while Section \ref{sec: numerics} provides numerical findings and validations. Further discussion and outlook is given in Section \ref{sec: summary}. Detailed proofs of the main analytical  results are provided in Appendix, namely Appendix \ref{sec: appendix1} (for reward) and Appendix \ref{sec: appendix2} (for punishment). Additional numerical simulations with different parameters' values to show the robustness of our results are also provided in Appendix \ref{sec: appendix2}.

\section{Models and Methods}
\label{sec: models and methods}
This section describes the evolutionary game model used to formulate the bi-objective optimisation of institutional incentives \cite{duonghan2021cost,han2018interference}. We then derive the social welfare functions for both institutional reward and punishment for a well-mixed population setting.

\subsection{Evolutionary game models}

We consider a finite, well-mixed population of $N$ players, who interact with each other using a cooperation dilemma, namely the Donation Game or Public Goods Game. The population evolves according to the Fermi strategy update rule~\cite{sigmund2010social}: a player $X$ with fitness $f_X$ adopts the strategy of another player $Y$ with fitness $f_Y$ with probability
\[
P_{X,Y}=\left(1+e^{-\beta(f_Y-f_X)}\right)^{-1},
\]
where $\beta>0$ denotes the selection intensity.

The population dynamics are modelled as an absorbing Markov chain over the state space $\{S_0,S_1,\dots,S_N\}$, where $S_i$ represents the state with $i$ cooperators. The homogeneous states $S_0$ and $S_N$ are absorbing, while $S_1,\dots,S_{N-1}$ are transient. Let $\Pi_C(i) \text{ and } \Pi_D(i)$ be the expected payoffs of a cooperative player (C\text{-}player) and a defective player (D\text{-}player) in state $S_i$ of the population, respectively. Let $U=\{u_{ij}\}_{i,j=1}^{N-1}$ be the transition matrix among transient states. In the absence of institutional incentives, for $1\le i\le N-1$, the transition probabilities are
\begin{equation}
\label{eq: transition probabilities0}
\begin{aligned}
u_{i,i\pm k} &= 0, \qquad k\ge 2, \\
u_{i,i+1} &= \frac{N-i}{N}\frac{i}{N}
\left(1+e^{-\beta[\Pi_C(i)-\Pi_D(i)]}\right)^{-1}, \\
u_{i,i-1} &= \frac{N-i}{N}\frac{i}{N}
\left(1+e^{\beta[\Pi_C(i)-\Pi_D(i)]}\right)^{-1}, \\
u_{i,i} &= 1-u_{i,i+1}-u_{i,i-1}.
\end{aligned}
\end{equation}


\subsection{Social dilemmas: Donation and Public Goods games}
Our analysis will be carried out for cooperation dilemmas in both pairwise and multi-player settings, described below.
\subsubsection{Donation Game (DG)}

\noindent The Donation Game is a special case of the Prisoners' Dilemma \cite{sigmund2010calculus}, where cooperation corresponds to providing the co\text{-}player with a benefit $b$ at a personal cost $c$, with $b>c$, while defection yields no benefit and incurs no cost. The payoff matrix of the game (for the row player) is given by
\[
 \bordermatrix{~ & C & D\cr
                  C & b-c & -c \cr
                  D & b & 0  \cr
                 }.
\]

\noindent Let $\pi_{X,Y}$ denote the payoff of a player using strategy $X \in \{C,D\}$ when interacting with a player using strategy $Y \in \{C,D\}$. In a well\text{-}mixed population of size $N$, at state $S_i$ where there are $i$ cooperators, the expected payoffs of a C\text{-}player and a D\text{-}player are given by
\begin{equation*}
\begin{aligned}
\Pi_C(i) &= \frac{(i-1)\pi_{C,C} + (N-i)\pi_{C,D}}{N-1}
         = \frac{(i-1)(b-c) + (N-i)(-c)}{N-1}, \\
\Pi_D(i) &= \frac{i\pi_{D,C} + (N-i-1)\pi_{D,D}}{N-1}
         = \frac{ib}{N-1}.
\end{aligned}
\end{equation*}

\noindent Therefore, the payoff difference between cooperation and defection is
\[
\delta = \Pi_C(i) - \Pi_D(i) = -\Big(c + \frac{b}{N-1}\Big),
\]
which is negative and independent of the population state $S_i$, in accordance with the general assumption introduced earlier.

\subsubsection{Public Goods Game (PGG)}
\noindent In the Public Goods Game, individuals interact in groups of size $n$ \cite{hauert2007via}. Each player can either cooperate by contributing an amount $c>0$ to a common pool, or defect by contributing nothing. The total contribution within a group is multiplied by an enhancement factor $r$, with $1<r<n$, and the resulting amount is equally shared among all group members, independently of their strategies. Since defectors benefit from the public good without paying the cost, the game constitutes a social dilemma.

\noindent In a well\text{-}mixed population of size $N$, at state $S_i$ where there are $i$ cooperators, groups are formed by multivariate hypergeometric sampling. Hence, the expected payoffs of a C\text{-}player and a D\text{-}player are given by
\begin{equation*}
\begin{aligned}
\Pi_C(i)
&= \sum^{n-1}_{k=0}
\frac{\dbinom{i-1}{k}\dbinom{N-i}{\,n-1-k\,}}{\dbinom{N-1}{\,n-1\,}}
\left(\frac{(k+1)rc}{n} - c \right)
= \frac{rc}{n}\left(1 + (i-1)\frac{n-1}{N-1}\right) - c, \\[6pt]
\Pi_D(i)
&= \sum^{n-1}_{k=0}
\frac{\dbinom{i}{k}\dbinom{N-1-i}{\,n-1-k\,}}{\dbinom{N-1}{\,n-1\,}}
\frac{krc}{n}
= \frac{rc(n-1)}{n(N-1)}\, i .
\end{aligned}
\end{equation*}

\noindent Therefore, the payoff difference between cooperation and defection is
\[
\delta = \Pi_C(i) - \Pi_D(i)
= -c \left(1 - \frac{r(N-n)}{n(N-1)} \right),
\]
which is negative and independent of the population state $S_i$.

\subsection{Cost optimisation under institutional incentives}

To reward a cooperator (respectively, punish a defector), the institution has to spend an amount $\theta$ (per-capita incentive), such that the payoff of the targeted individual increases  by $a \theta$ (respectively, a decrease by $\hat{a} \theta $), where $a$ and $\hat{a}$ denote the efficiencies of reward and punishment, respectively.


We derive the expected cost of providing institutional incentives \cite{han2018cost,duonghan2021cost}. 
Under institutional incentives, for $1\le i\le N-1$, the transition probabilities in \eqref{eq: transition probabilities0} are modified as follows for  reward (and similarly for punishment):
\begin{equation}
\label{eq: transition probabilities}
\begin{aligned}
u_{i,i\pm k} &= 0, \qquad k\ge 2, \\
u_{i,i+1} &= \frac{N-i}{N}\frac{i}{N}
\left(1+e^{-\beta[\Pi_C(i)-\Pi_D(i)+a \theta]}\right)^{-1}, \\
u_{i,i-1} &= \frac{N-i}{N}\frac{i}{N}
\left(1+e^{\beta[\Pi_C(i)-\Pi_D(i)+a \theta]}\right)^{-1}, \\
u_{i,i} &= 1-u_{i,i+1}-u_{i,i-1}.
\end{aligned}
\end{equation}

Let $\mathcal{N}=(I-U)^{-1}=(n_{ik})_{i,k=1}^{N-1}$ denote the fundamental matrix of this chain. The entry $n_{ik}$ gives the expected number of visits to state $S_k$ when starting from state $S_i$. Since mutants can appear with equal probability in $S_0$ and $S_N$, the expected number of visits to $S_j$ is,
$
\frac{1}{2}(n_{1j}+n_{N-1,j})
$. Thus, the expected costs of using only reward and only punishment are given by 
\begin{equation}
\label{eq: Er and Ep}
E_r(\theta)=\frac{\theta}{2}\sum_{j=1}^{N-1}(n_{1j}+n_{N-1,j})j,
\qquad
E_p(\theta)=\frac{\theta}{2}\sum_{j=1}^{N-1}(n_{1j}+n_{N-1,j})(N-j).
\end{equation}


Now, we derive the cooperation frequency under institutional incentives \cite{han2018cost}. Since the population consists of two strategies, the fixation probabilities of a single cooperator in a population of defectors and vice versa are given by 
\begin{equation*}
\begin{aligned}
\rho_{D,C}
&=\left(1+\sum_{i=1}^{N-1}
\prod_{k=1}^{i}
\frac{1+e^{\beta[\Pi_C(k)-\Pi_D(k)+a\theta]}}
{1+e^{-\beta[\Pi_C(k)-\Pi_D(k)+a\theta]}}
\right)^{-1}, \\
\rho_{C,D}
&=\left(1+\sum_{i=1}^{N-1}
\prod_{k=1}^{i}
\frac{1+e^{\beta[\Pi_D(k)-\Pi_C(k)-a\theta]}}
{1+e^{-\beta[\Pi_D(k)-\Pi_C(k)-a\theta]}}
\right)^{-1}.
\end{aligned}
\end{equation*}

As the stationary frequency of cooperation is given by
$\frac{\rho_{D,C}}{\rho_{D,C}+\rho_{C,D}}$, 
maximising this frequency is equivalent to maximising
\begin{equation}
\label{eq:max}
\max_{\theta}\left(\frac{\rho_{D,C}}{\rho_{C,D}}\right).
\end{equation}
This ratio simplifies as 
\begin{eqnarray}
\nonumber
\frac{\rho_{D,C}}{\rho_{C,D}}
&=&\prod_{k=1}^{N-1}
\frac{u_{k,k-1}}{u_{k,k+1}}
=\prod_{k=1}^{N-1}
\frac{1+e^{\beta[\Pi_C(k)-\Pi_D(k)+a\theta]}}
{1+e^{-\beta[\Pi_C(k)-\Pi_D(k)+a\theta]}} \\
\nonumber
&=& e^{\beta\sum_{k=1}^{N-1}(\Pi_C(k)-\Pi_D(k)+a\theta)} \\
\label{eq:max_Q_prime}
&=& e^{\beta(N-1)(\delta+a\theta)}
\end{eqnarray}

Given that a minimal   level of population cooperation $\omega\in[0,1]$ is required. That is, the following inequality must hold,
$
\frac{\rho_{D,C}}{\rho_{D,C}+\rho_{C,D}}\ge \omega
$. It follows from \eqref{eq:max_Q_prime} that
\begin{equation}
\label{eq:omega_fraction}
\theta \ge \theta_\omega
= \frac{1}{a(N-1)\beta}
\log\!\left(\frac{\omega}{1-\omega}\right)-\delta.
\end{equation}


\subsection{Social welfare optimisation under institutional incentives}
Below we derive the population social welfare under institutional reward and punishment. 

\subsubsection{Institutional reward} Recall that $a \in [0,+\infty)$  represents the efficiency of the reward mechanism, i.e. a (per-capita) institutional cost of $\theta$ yields a payoff increase of $a\theta$ for a targeted cooperator. The reward mechanism is deemed cost-efficient when $a \ge 1$. Our analysis examines the mathematical properties of social welfare under three distinct efficiency regimes: $a = 1$, $a < 1$, and $a > 1$.

We begin by deriving $SW(\theta)$, the expected total social welfare across all population states. For a specific state $S_i$ containing $i$ cooperators, social welfare is calculated as the total payoff of all players in that state (denoted by $P_i$) strictly net of the total institutional cost (denoted by $\theta_i$).


We define $\Delta:=\Pi_D(i)/i$, that is
\begin{equation*}
\Delta=\begin{cases}
    \frac{b}{N-1}\quad \text{in Donation Game},\\
    \frac{rc(n-1)}{n(N-1)}\quad \text{in Public Goods Game}.
\end{cases}
\end{equation*}
Note that both $\delta$ and $\Delta$ are independent of the states of the population. We obtain,
$P_i = i\big[\Pi_C(i) + a\theta\big] + (N - i)\,\Pi_D(i)$. 
As $\theta_i=i\theta$, the population social welfare in state $S_i$ is:
\begin{align*}
SW_i(\theta)&= P_i - \theta_i \notag \\
 &= i\big[\Pi_C(i) + a\theta\big] + (N - i)\,\Pi_D(i) - i\theta \notag \\
 &= i\Pi_C(i) + (N - i)\Pi_D(i) + i\theta(a-1) \notag \\
 &= i\big(\Pi_C(i) - \Pi_D(i)\big) + N\Pi_D(i) + i(a-1)\theta \notag \\
 &= i\delta + N(i\Delta) + i(a-1)\theta \notag \\
 &=i\big(\delta + N\Delta + (a-1)\theta \big).
\end{align*}
To evaluate the expected population-level social welfare, we aggregate $SW_i(\theta)$ over all transient states $S_i$ ($i = 1$ to $N-1$), weighted by the expected visitation rates ($\frac{n_{1,i} + n_{N-1,i}}{2}$) derived from the Fundamental Matrix ($\mathcal{N}$) defined in \cite{duonghan2021cost}. 
\begin{align}
SW(\theta)&=\frac{1}{2}\sum_i SW_i(\theta)(n_{1,i} + n_{N-1,i})\nonumber\\
&= \frac{1}{2}\sum_{i}i\big(\delta + N\Delta+(a-1)\theta \big) 
(n_{1,i} + n_{N-1,i})\nonumber\\ 
&= \frac{1}{2}(\delta + N\Delta + (a-1)\theta) 
\sum_{i}i (n_{1,i} + n_{N-1,i})\nonumber\\
&= \frac{N^2}{2}\frac{f(x)}{g(x)} \big(\delta + N\Delta + (a-1)\theta\big),
 \label{eq:SW-theta-final}
\end{align}
where $x:=\beta(a\theta + \delta)$, $f(x)$ and $g(x)$ are two functions defined below \cite{duonghan2021cost}: 
\begin{align}
f(x) &= (1 + e^x)\left[ \left(1 + e^x + \cdots + e^{(N-2)x}\right) H_N 
+ e^{(N-1)x} \sum_{j=1}^{N-1} \frac{e^{-jx}}{j} \right] \label{fx}, \\
g(x) &= 1 + e^x + \cdots + e^{(N-1)x}. \label{gx}
\end{align}


The main objective of this paper is to study the mathematical problem of optimising the total expected social welfare (under institutional reward)
\begin{equation}
 \label{eq: max SW} \max_{\theta>0}SW(\theta).  
\end{equation}
\subsubsection{Institutional punishment}
For punishment, the institution incurs a cost $\theta$, resulting in a reduction of $\hat{a}\theta$ to the defector's payoff. In this case, $
\hat{P}_i = i\,\Pi_C(i) + (N - i)\big[ \Pi_D(i) - \hat{a}\theta \big].$ Combining with the total institutional cost $\hat{\theta}_i = (N - i)\theta$ yields aggregate the social welfare in state $S_i$ as:
\begin{equation*}
\widehat{SW}_i(\theta) = \hat{P}_i - \hat{\theta}_i =  i\,\Pi_C(i) + (N - i)\big[ \Pi_D(i) - (1+\hat{a})\theta \big]
\end{equation*}
By similar computations as in the reward case, we obtain the following expression for the total expected social welfare in the punishment incentive (see Appendix \ref{sec: punishment calculations} for the  detailed calculations)
\begin{equation}
\widehat{SW}(\theta)= \frac{N^2}{2}\frac{f(x)}{g(x)} \big(\delta + N\Delta) - \frac{N^2}{2}\frac{\hat{f}(x)}{g(x)}(1+\hat{a})\theta, \label{eq:swp0}   
\end{equation}
where
\[
\hat{f}(x) = (1 + e^x)\left[ \left(1 + e^x + \cdots + e^{(N-2)x}\right) H_N 
+ \sum_{j=1}^{N-1} \frac{e^{(j-1)x}}{j} \right].
\]
The corresponding social welfare optimisation problem (under institutional punishment) is  defined  as
\begin{equation} 
\label{eq: max SWp}\max_{\theta>0}\widehat{SW}(\theta).
\end{equation}

\section{Main results}
\label{sec: main results}
The aim of this paper is to provide a rigorous analysis of the total expected social welfare $SW$ and $\widehat{SW}$ and the associated optimisation problems \eqref{eq: max SW} and \eqref{eq: max SWp}. From a mathematical point of view, these are  non-trivial optimisation problems since the objective functions
are generally non-convex and depend on several parameters, namely the population and group sizes $N$
and $n$ (which can be arbitrarily large), the payoff entries, the strength of selection as well as  the efficiency of
the institutional incentives. We obtain both analytical results, analysing qualitative properties of the social welfare functions, and numerical results, offering algorithms to practically compute the optimal institutional cost.

\subsection{Reward}
The first result of our paper is the following theorem, which shows that both the efficiency of the reward mechanism ($a$) and the strength of selection ($\beta$) exert non-trivial effects on the qualitative behaviour of $SW(\theta)$. When $a=1$ (zero-sum transfer), $SW$ is increasing and then decreasing for all $\beta$. However, when $a\neq 1$ (non-zero sum transfer), the monotonicity of $SW$ undergoes an intriguing phase-transition phenomenon as $a$ and $\beta$ vary. We provide analytical formula for computing the critical thresholds and the optimiser in each scenario.
\begin{theorem}
(Behaviour and optimisation of the total expected social welfare)
\label{thm: qualitative SW}

\textbf{(1) Zero-sum transfer (${a=1}$)} When $a=1$, there exists a unique $\theta^*>0$ such that $SW(\theta)$ is increasing on $(0,\theta^*)$ and decreasing on $(\theta^*,+\infty)$. Thus, $SW$ has a unique global maximiser at $\theta=\theta^*$.  

\textbf{(2) Efficient institutional reward ($a>1$)}. We define a threshold:    \begin{equation}
            \beta^* = -\frac{F^*}{\mathcal{K}}>0 \label{beta_star}.
        \end{equation}
   
\begin{enumerate}
\item[(i)] For $\beta \le \beta^*$, there exists a threshold $\theta_0 > 0$ such that $SW(\theta)$ is non-decreasing on $(\theta_0, +\infty)$.
\item[(ii)] \textit{(behaviour above the threshold value)} For $\beta > \beta^*$, the number of changes of the sign of $dSW(\theta)/d\theta$ is at least two for all $N$ and there exists an $N_0$ such that the number of changes is exactly two for $N \le N_0$. As a consequence, for $N \le N_0$, there exists $\theta_1 < \theta_2$ such that, for $\beta > \beta^*$, $SW(\theta)$ is increasing when $\theta < \theta_1$, decreasing when $\theta_1 < \theta < \theta_2$ and increasing when $\theta > \theta_2$. 
\end{enumerate}
\textbf{(3) Inefficient institutional reward ($a<1$)}
\begin{enumerate}
        \item[(i)] There exist a threshold $a^* \in (0, 1)$ such that $SW(\theta)$ is strictly decreasing on $(\theta_0, +\infty)$ when $a^*<a<1$.
        \item[(ii)] \textit{(behaviour under the threshold value)} For $0<a<a^*$ and $\beta < \beta^*$, $SW(\theta)$ is non-decreasing on $(\theta_0, +\infty)$. Consequently:
        \[
        \max_{\theta \ge \theta_0} SW(\theta)=SW(\theta_0)
        \]

        \item[(iii)] \textit{(behaviour above the threshold value)} For $0<a<a^*$ and $\beta > \beta^*$, the number of changes of the sign of $dSW(\theta)/d\theta$ is at least two for all $N$ and there exists an $N_0$ such that the number of changes is exactly two for $N \le N_0$. As a consequence, for $N \le N_0$, there exist $\theta_1 < \theta_2$ such that, for $\beta > \beta^*$, $SW(\theta)$ is decreasing when $\theta < \theta_1$, increasing when $\theta_1 < \theta < \theta_2$ and decreasing when $\theta > \theta_2$. Thus, for $N \le N_0$:
        \[
        \max_{\theta \ge \theta_0}SW(\theta)=\max\{SW(\theta_0),SW(\theta_2)\}  
        \]

        \item[(iv)] Moreover, for sufficiently large $\beta$ and small $\theta$, $SW(\theta)$ is increasing as $\theta \to 0^+$ (see proof in  Lemma \ref{lem:lower_bound_c} in Appendix).
    \end{enumerate}
\end{theorem}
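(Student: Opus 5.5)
We reduce everything to the sign of $dSW/d\theta$ written in the variable $x=\beta(a\theta+\delta)$. Since $d x/d\theta=a\beta>0$, the map $\theta\mapsto x$ is an increasing affine bijection from $(0,\infty)$ onto $(\beta\delta,\infty)$. Set $G(x):=f(x)/g(x)$ and $\theta=(x/\beta-\delta)/a$. By \eqref{eq:SW-theta-final},
\begin{equation*}
\frac{2}{N^2}\frac{dSW}{d\theta}=a\beta\,G'(x)\Big(\delta+N\Delta+\tfrac{a-1}{a}\big(\tfrac{x}{\beta}-\delta\big)\Big)+(a-1)G(x).
\end{equation*}
Since $G>0$ (it is a rescaled expected number of visits), the sign of the derivative equals the sign of
\begin{equation*}
\Phi(x):=\frac{G'(x)}{G(x)}\Big(a\beta(\delta+N\Delta)-(a-1)\beta\delta+(a-1)x\Big)+(a-1).
\end{equation*}
Note that $\delta+N\Delta>0$ in both games. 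For the DG, $\delta+N\Delta=b-c$. For the PGG it equals $c(r-1)$, because $N\Delta=rc(n-1)N/(n(N-1))$ and adding $\delta$ gives $c(r-1)$ after simplification. The first step is to collect the known properties of $G$ from \cite{duonghan2021cost}: $G$ is symmetric in the sense that $f(-x)/g(-x)$ relates to $f(x)/g(x)$ through $e^{x}$ factors, $G$ is bounded, $G(x)\to NH_{N-1}$-type constants as $x\to\pm\infty$, and $G'$ has a single sign change (at $x=0$) with $G'(x)/G(x)\to0$ exponentially as $|x|\to\infty$. With these, the whole problem becomes a study of the auxiliary function $\Phi$, or equivalently of $L(x):=G'(x)/G(x)$ against a linear function.

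\textbf{Case $a=1$.} Here $\Phi(x)=\beta(\delta+N\Delta)L(x)$, so the sign of $dSW/d\theta$ is the sign of $G'(x)$. Hence $SW$ increases while $x<0$ and decreases for $x>0$. Because $\beta\delta<0$, the point $x=0$ lies in the admissible range, which gives $\theta^*=-\delta$ (that is, $\theta^*=c+b/(N-1)$ in the DG). Thus part (1) follows directly once we prove that $G'$ changes sign exactly once, at $0$. To prove that, we would write $G'$ as $e^{x}$ times a polynomial in $e^{x}$ and use the symmetry $x\mapsto -x$ together with a Descartes sign count. This is exactly the structure of the cost-function analysis in \cite{duonghan2021cost}.

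\textbf{Case $a\neq1$.} Rewrite $\Phi=0$ as $-1/L(x)=x+\kappa$, where $\kappa=\beta\big[a(\delta+N\Delta)/(a-1)-\delta\big]$. The constants $F^*$ and $\mathcal{K}$ in \eqref{beta_star} are, I expect, respectively an extremal value of an auxiliary function built from $x+1/L(x)$ and a constant proportional to $a(\delta+N\Delta)/(a-1)-\delta$. With that reading, the dichotomy $\beta\lessgtr\beta^*$ is the question of whether the horizontal level $-\kappa$ meets the graph of $F(x):=-x-1/L(x)$. For large $|x|$ we have $1/L$ of order $\pm e^{|x|}$, so $F$ tends to $\mp\infty$ at the two ends. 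The sign of $a-1$ fixes the orientation of the inequality.

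\begin{enumerate}
\item For $a>1$: when $\beta\le\beta^*$, the level is not attained on $x>\beta\delta$ beyond some point $\theta_0$, so $\Phi\ge0$ there. When $\beta>\beta^*$, an intermediate value argument between the limits and the extremum gives at least two crossings.
\item For $a<1$: the factor $(a-1)x$ in $\Phi$ makes the derivative negative for large $x$. The threshold $a^*$ is the value at which $a(\delta+N\Delta)/(1-a)+\delta$ crosses the extremal value of $F$. This yields (3)(i)–(iii) by the same crossing count with the signs reversed.
\item For (3)(iv): as $\beta\to\infty$ with $\theta$ small, $x\approx\beta\delta\to-\infty$ while $L(x)$ is bounded below away from zero relative to $1/\beta$. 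Hence the term $a\beta(\delta+N\Delta)L$ dominates $(a-1)$ and makes $\Phi>0$.
\end{enumerate}

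\textbf{Main obstacle.} The hardest step is the claim that the number of sign changes is \emph{exactly} two for $N\le N_0$. After multiplying $\Phi$ by $G\cdot g^2e^{-x}$, it becomes an exponential polynomial of the form $\sum_k (p_k+q_kx)e^{kx}$. I would bound its number of real zeros by the generalized Descartes rule for exponential polynomials, counting sign changes in the coefficient sequence of $(p_k,q_k)$. For small $N$ this count can be verified to be two, perhaps with a direct finite check for each $N\le N_0$. Since the count grows with $N$ in general, this also explains why the statement only asserts "at least two" for all $N$. Establishing the lower bound "at least two" is easy from the endpoint limits and the interior extremum, whereas the upper bound is where the proof depends genuinely on $N$.
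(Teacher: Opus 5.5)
Your reduction to the sign of $\Phi$ is sound and coincides with the paper's. Your $F:=-x-1/L$ is the paper's $F(u)=fg/(uP)-x$, and $\Phi=0$ iff $F=\kappa=-\beta\mathcal{K}$ (the level is $\kappa$, not $-\kappa$). The gap is in the qualitative facts you feed into it.

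\textbf{Case $a=1$.} $G'(x)=-uP(u)/g(x)^2$ does \emph{not} change sign at $x=0$. Writing $f=\sum_j\eta_ju^j$, one gets $\operatorname{sign}G'(0)=\operatorname{sign}\sum_j\bigl(j-\tfrac{N-1}{2}\bigr)\eta_j>0$, because $\eta_j>\eta_{N-1-j}$ for $j>(N-1)/2$. For example, for $N=3$ the peak of $f/g$ sits at $u=(1+\sqrt{21})/4$. There is no symmetry to exploit either: $x\mapsto-x$ sends $f/g$ to $\hat f/g$, not to itself. The sign change is at $x_0=\log u_0>0$, where $u_0>1$ is the unique positive root of $P$ (property (P2), from \cite{duonghan2021cost}). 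So $\theta^*=-\delta$ is false; the maximiser is $\log(u_0)/\beta-\delta$. Part (1) survives only if you invoke (P2) in place of your symmetry claim.

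\textbf{Case $a\neq1$.} Since $L$ vanishes at $x_0$, $F$ has a pole there, and that pole is what drives parts (2)--(3), not the behaviour as $x\to\pm\infty$. The paper's $\theta_0$ is the point where $u=u_0$. On $u>u_0$, $dSW/d\theta$ has the sign of $(a-1)\bigl(F(u)+\beta\mathcal{K}\bigr)$, and there $F>0$, $F\to+\infty$ both as $u\to u_0^+$ and as $u\to\infty$, and $F$ attains a positive minimum $F^*$ (properties (F1)--(F3)).
\begin{itemize}
\item This gives $\beta^*=-F^*/\mathcal{K}$ and at least two crossings when $F^*+\beta\mathcal{K}<0$.
\item ``Exactly two'' for $N\le N_0$ is just (F4), the unimodality of $F$ on $(u_0,\infty)$.
\item Your picture of $F$ running from $-\infty$ to $+\infty$ would produce only one crossing by the intermediate value theorem.
\end{itemize}
Your $a^*$ is also misidentified. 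It is the zero of $\mathcal{K}=(\delta+aN\Delta)/(1-a)$, i.e.\ $a^*=-\delta/(N\Delta)$, which is $(c(N-1)+b)/(Nb)$ in the DG. Part (3)(i) then holds for \emph{every} $\beta$ simply because $F>0$ and $\mathcal{K}>0$. A threshold obtained by comparing with $F^*$ would necessarily depend on $\beta$, since the relevant quantity is $\beta\mathcal{K}$.

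\textbf{Part (3)(iv).} Your key estimate is false, and it contradicts your own remark that $L\to0$ exponentially. For $0\le\theta\le-\delta/(2a)$ we have $x\le\beta\delta/2<0$, so $u\le1<u_0$ and $0<L(x)\le Ce^{x}$. Meanwhile the bracket in $\Phi$ equals $a\beta\bigl(\delta+N\Delta+(a-1)\theta\bigr)=O(\beta)$. Hence the first term of $\Phi$ is $O(\beta e^{\beta\delta/2})\to0$, and $\Phi\to a-1<0$. So for $a<1$ and large $\beta$, $SW$ is \emph{decreasing} near $0$. That is exactly what the appendix proves (Claim~\ref{claim:decrease_near_zero}, used in Lemma~\ref{lem:lower_bound_c}). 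The word ``increasing'' in (iv) therefore conflicts with its own cited proof, and no argument can establish it.

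\textbf{Part (3)(ii).} Your ``signs reversed'' reasoning gives \emph{non-increasing} on $(\theta_0,\infty)$, because $a-1<0$ and $F+\beta\mathcal{K}>0$ there. That, not ``non-decreasing'', is what is consistent with the stated conclusion $\max_{\theta\ge\theta_0}SW=SW(\theta_0)$.
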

Plots of the qualitative behaviour of the social welfare $SW$ as a function of $\theta$ for various parameters as well as numerical calculations of the critical thresholds $\theta^*$ and $\beta^*$ are presented in Figure \ref{fig:thm1-reward-dg} for Donation Game and Figure \ref{fig:thm1-reward-pgg} for Public Goods Games.

Our second result is the following theorem which shows that the maximiser, $\theta^*$, of the optimisation problem \eqref{eq: max SW} is either zero or located around a specific value, $\theta_\infty=-\frac{\delta}{a}$.
\begin{theorem}[localsation around $\theta_\infty$]
\label{thm:localization}
Let $\theta^\star$ be a maximiser of the social welfare objective $SW(\theta)$ over $\theta \ge 0$, and define
\[
\theta_\infty = -\frac{\delta}{a}.
\]
Then either $\theta^\star = 0$, or $\theta^\star>0$ and
\[
\bigl|\theta^\star - \theta_\infty\bigr| = \mathcal{O}\!\left(\frac{1}{a\beta}\right).
\]
\end{theorem}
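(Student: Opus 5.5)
Our approach is to change variables from $\theta$ to $x=\beta(a\theta+\delta)$, so that $\theta=\theta_\infty+\frac{x}{a\beta}$. The claim $|\theta^\star-\theta_\infty|=\mathcal{O}(1/(a\beta))$ then becomes the statement that any positive maximiser corresponds to $x^\star=\mathcal{O}(1)$, with the constant independent of $\beta$ and $a$, although it may depend on $N$ and the payoffs. In these variables, \eqref{eq:SW-theta-final} reads
\[
SW=\frac{N^2}{2}\,R(x)\Bigl(\delta+N\Delta+\tfrac{a-1}{a}\Bigl(-\delta+\tfrac{x}{\beta}\Bigr)\Bigr),\qquad R(x):=\frac{f(x)}{g(x)} .
\]

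\textbf{Step 1: asymptotics of $R$.} From \eqref{fx}--\eqref{gx} we get $R(x)\to NH_{N-1}\cdot$const as $x\to+\infty$ (the leading term is $e^{(N-1)x}(H_N+H_{N-1})/e^{(N-1)x}$) and $R(x)\to H_N+H_{N-1}$ as $x\to-\infty$. More precisely, we will show $R$ is bounded and monotone between these two plateaus. Because its derivative decays like $e^{-|x|}$, $R(x)=R_{\pm\infty}+\mathcal{O}(e^{-|x|})$. This is a transient-time function: it has a sigmoid shape centred at $x=0$, which is the neutral-drift point $a\theta=-\delta$.

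\textbf{Step 2: derivative and first-order condition.} At an interior maximiser, $\frac{d}{dx}SW=0$, which gives
\[
R'(x)\,L(x)+R(x)\,\frac{a-1}{a\beta}=0,\qquad L(x)=\delta+N\Delta+\tfrac{a-1}{a}\Bigl(-\delta+\tfrac{x}{\beta}\Bigr).
\]
When $a=1$ this reduces to $R'(x^\star)=0$ or $L=0$. Since $\delta+N\Delta>0$ (by the social dilemma assumption $b>c$, and $r>1$), we would instead argue directly with Theorem \ref{thm: qualitative SW}(1). For $a\neq1$ we rewrite the condition as $R'(x)/R(x)=-\frac{a-1}{a\beta L(x)}$. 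Using $|R'/R|\le Ce^{-|x|}$, this forces $e^{-|x|}\gtrsim \frac{|a-1|}{a\beta|L(x)|}$. This alone only yields $|x|\lesssim\log(\beta|L|)$, so a critical point may be a local minimum or a $\theta_1$-type point far away (cf. Theorem \ref{thm: qualitative SW}(2)(ii), (3)(iii)).

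\textbf{Step 3: comparing values rather than critical points.} This is where the real work lies. We will show that a \emph{global} maximiser with $|x|\ge K$ is either beaten by $\theta=0$ or does not exist. Take $x\le -K$, meaning the incentive is below the neutral threshold. There $R\approx R_{-\infty}$, so $SW$ is essentially affine in $\theta$ with slope $\frac{N^2}{2}R_{-\infty}(a-1)$ plus exponentially small corrections. It is therefore maximised at an endpoint: $\theta=0$ or $x\approx -K$. Now take $x\ge K$. There $SW\approx\frac{N^2}{2}R_{+\infty}(\delta+N\Delta+(a-1)\theta)$. For $a<1$ this decreases, so the sup is attained near $x=K$. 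For $a>1$, $SW$ increases without bound (Theorem \ref{thm: qualitative SW}(2)(i)), so no finite maximiser exists and the statement is vacuous or should be read on a bounded domain. Within $|x|\le K$, the bound $|\theta-\theta_\infty|\le K/(a\beta)$ holds directly.

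\textbf{Main obstacle.} The hardest step is making $K$ uniform in $\beta$: the correction terms $(a-1)x/(a\beta)$ interact with the exponentially small tails of $R'$. The approach we would try first is to bound the tails explicitly, with $|R-R_{\pm\infty}|\le C_Ne^{-|x|}$, and choose $K$ so that $C_Ne^{-K}\,|\delta+N\Delta|$ is dominated by the linear gain $|a-1|K/(a\beta)\cdot R_{\pm\infty}$. If this forces $K\sim\log\beta$, we would accept a logarithmic factor inside the $\mathcal{O}$, or read the $\mathcal{O}$ as being for fixed $\beta$ in the regime of interest.
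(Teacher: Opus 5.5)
You handle $a\ge 1$ correctly: for $a>1$ no maximiser exists, and for $a=1$ Theorem~\ref{thm: qualitative SW}(1) gives $x^\star=\log u_0$. For $a<1$, however, the proposal does not reach the stated rate, and offering to absorb a $\log\beta$ factor, or to read the $\mathcal{O}$ at fixed $\beta$, concedes the theorem.

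The first gap is in Step~1. By Proposition~\ref{prop:f and g}, $R=\sum_j\eta_ju^j/\sum_ju^j$, so the plateaus are $\eta_0=H_N+\frac{1}{N-1}$ as $x\to-\infty$ and $\eta_{N-1}=1+H_N$ as $x\to+\infty$. Moreover, $R$ is \emph{not} a monotone sigmoid. Since $dS/du=-P(u)/g(x)^2$, property (P2) says $R$ increases for $u<u_0$ and decreases for $u>u_0$; for instance, $R(0)=2H_N$ lies above both plateaus when $N\ge 3$. This sign is exactly what gives a $\beta$-independent upper bound. A maximiser satisfies $SW(\theta^\star)\ge SW(0)>0$, so $L>0$ there. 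In $\frac{dSW}{d\theta}=\frac{N^2}{2}\bigl(a\beta R'(x)L+(a-1)R(x)\bigr)$ both terms are negative once $u\ge u_0$. Hence $x^\star<\log u_0$, i.e.\ $\theta^\star-\theta_\infty<\frac{\log u_0}{a\beta}$. Under your sigmoid picture, by contrast, $R'L=R(1-a)/(a\beta)$ would have a root near $x\approx\log\beta$ that is a local maximum. Nothing in Step~3 excludes it, and the ``approximately affine'' comparison cannot do better than $K\sim\log\beta$ on the right.

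The second gap is on the left, and there the obstacle you name is real for the method you chose. Where $L$ is bounded away from zero, $dSW/dx$ is genuinely positive on $-\log\beta+\mathcal{O}(1)<x\le -K$. So no argument based on the sign of the derivative, or on balancing $C_Ne^{-K}$ against the loss over a window of width $K/(a\beta)$, works with $K$ independent of $\beta$. The missing idea is a global comparison of values with the endpoint $\theta=0$. The loss accumulated from $\theta=0$ is proportional to $\theta$ itself, of order $\theta_\infty=-\delta/a$, not $K/(a\beta)$.
\begin{itemize}
\item The inequality $SW(\theta)>SW(0)$ forces $(\delta+N\Delta)\bigl(R(\theta)-R(0)\bigr)>(1-a)\theta R(\theta)>(1-a)\theta\eta_0$, by Lemma~\ref{obs_eta0}.
\item The left side is at most $(\delta+N\Delta)L_Se^{x}$, where $L_S$ is the Lipschitz constant of $S$ on $[0,1]$ from Lemma~\ref{obs_lipschitz}.
\item Writing $x=-\log\varepsilon$, this becomes $\varepsilon\bigl(-\delta-\frac{\log\varepsilon}{\beta}\bigr)<\frac{(\delta+N\Delta)L_Sa}{(1-a)\eta_0}$. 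This fails for all admissible $\varepsilon\ge\varepsilon_0$, with $\varepsilon_0$ independent of $\beta$.
\item A derivative argument is needed only on $[0,\frac{1}{a\beta}]$, where $\theta$ itself is tiny. There it works because $dR/d\theta\le a\beta e^{\beta\delta+1}L_S$ is exponentially small in $\beta$ (Claims~\ref{claim:decrease_near_zero} and~\ref{claim:bound_away_from_zero}, Lemma~\ref{lem:lower_bound_c}).
\end{itemize}
Alternatively, your claim that the left region is ``maximised at an endpoint'' is true, but for a second-order reason. At any critical point, $d^2SW/dx^2\propto L\,(R''-2R'^2/R)$. In the left tail, $R''-2R'^2/R=(\eta_1-\eta_0)e^{x}+\mathcal{O}(e^{2x})>0$, so such critical points are local minima. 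Either way you obtain $-\mathcal{O}(1)<x^\star<\log u_0$ with constants independent of $\beta$, which is the theorem.
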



Based on Theorem \ref{thm:localization}, we develop the following algorithm (Algorithm 1) to approximate the optimal incentive $\theta^*$. It is applicable to both the Donation Game and the Public Goods Game described.

\begin{algorithm}[H]
\caption{Restricted Interval Grid Search with Upper and Lower Bounds Cutoff}
\label{alg:pgg_main}
\KwIn{Game parameters $b, c$; Parameters $N \in \mathbb{Z}^{+}$, $a, \beta \in \mathbb{R}^{+}$ ($a < 1$). Search radius $r > 0$ and number of steps $N_{steps} \in \mathbb{Z}^{+}$}
\KwOut{$\theta^{*}$ maximising $SW$ within valid bounds}

\textbf{Step 1: Compute Bound and Starting Point}

$\;\;\;\;\;\; \text{Compute values of } \delta \text{ and } \Delta$

$\;\;\;\;\;\;\theta_{limit} \gets \frac{\delta + N\Delta}{1-a}$ \tcp*[r]{Define the hard upper limit}

$\;\;\;\;\;\;\theta_{start} \gets \min \left( \frac{-\delta}{a}, \theta_{limit} \right)$;

\textbf{Step 2: Initialize Grid Search}

$\;\;\;\;\;\;\theta_{best} \gets 0$;

$\;\;\;\;\;\;SW_{best} \gets SW(0)$;

$\;\;\;\;\;\;\Delta \theta \gets 2r / N_{steps}$ \tcp*[r]{Calculate step size}

\textbf{Step 3: Evaluate Interval}

\For{$k \gets 0$ \textbf{to} $N_{steps}$}{
    $\theta_{curr} \gets \max(0, (\theta_{start} - r) + k \cdot \Delta\theta);$

    \If{$\theta_{curr} > \theta_{limit}$}{
        \textbf{break} \tcp*[r]{Terminate search if limit is exceeded}
    }

    $val \gets SW(\theta_{curr})$;

    \If{$val > SW_{best}$}{
        $SW_{best} \gets val$;

        $\theta_{best} \gets \theta_{curr}$;
    }
}

\textbf{Step 4: Return Optimal Incentive}

\Return{$\theta_{best}$};
\end{algorithm}

Theorem \ref{thm: qualitative SW} has shown the influence of the strength of selection to the total expected social welfare. The following theorem further characterises the asymptotic behaviour of $SW$ in the neutral and strong selection limits, namely when $\beta$ tends to zero and infinity respectively.

The convergence of the optimiser $\theta^*$ to $\theta_\infty$ is numerically presented in Figure \ref{fig:thm2_4_convergence}.\\

\begin{theorem}(Neutral and strong selection limits) \label{thm: selection-limits}

$SW(\theta)$ exhibits linear behaviour in the limits $\beta \to 0^+$ and $\beta \to +\infty$. More precisely, we have
\begin{align*}
\lim_{\beta\to 0^+} SW(\theta)&=N^2 H_N (\delta+N\Delta+(a-1)\theta),\\
\lim_{\beta \to +\infty}SW(\theta)&=\begin{cases}
      N^2 H_N (\delta+N\Delta+(a-1)\theta)\quad\text{for}\quad \theta=-\frac{\delta}{a},\\
      \frac{N^2}{2}(H_N+1)(\delta+N\Delta+(a-1)\theta)\quad\text{for}\quad \theta>-\frac{\delta}{a},\\
      \frac{N^2}{2}\Big(H_N+\frac{1}{N-1}\Big) (\delta+N\Delta+(a-1)\theta)\quad\text{for}\quad \theta<-\frac{\delta}{a}.
    \end{cases}     
\end{align*}
In the above formulae, $H_N$ denotes the harmonic number $H_N=\sum_{j=1}^{N-1}\frac{1}{j}.$
\end{theorem}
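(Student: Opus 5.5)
Since $SW(\theta)=\frac{N^2}{2}\frac{f(x)}{g(x)}\bigl(\delta+N\Delta+(a-1)\theta\bigr)$ with $x=\beta(a\theta+\delta)$, and the linear factor does not depend on $\beta$, the whole statement reduces to computing the limits of the ratio $f(x)/g(x)$ as $\beta\to0^+$ and $\beta\to+\infty$ with $\theta$ fixed. The plan is to evaluate $f/g$ in three regimes: at $x=0$, as $x\to+\infty$, and as $x\to-\infty$. The sign of $a\theta+\delta$ decides which regime applies when $\beta\to\infty$: it is zero when $\theta=-\delta/a$, positive when $\theta>-\delta/a$, and negative when $\theta<-\delta/a$.

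\emph{Neutral limit.} As $\beta\to0^+$, $x\to0$ for every fixed $\theta$. Since $f$ and $g$ are continuous and $g(0)=N\neq0$, we can substitute $x=0$ directly. This gives $f(0)=2\bigl[(N-1)H_N+\sum_{j=1}^{N-1}\frac1j\bigr]=2NH_N$, so $f(0)/g(0)=2H_N$. Multiplying by $\frac{N^2}{2}$ yields $N^2H_N(\cdots)$. The case $\theta=-\delta/a$ in the strong-selection limit is the same computation, because there $x\equiv0$ for all $\beta$.

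\emph{Strong limit, $x\to+\infty$.} Divide numerator and denominator by $e^{(N-1)x}$. Then $g(x)e^{-(N-1)x}\to1$. In $f$, the product $(1+e^x)(1+\cdots+e^{(N-2)x})H_N$ has leading term $e^{(N-1)x}H_N$. The term $(1+e^x)e^{(N-1)x}\sum_j e^{-jx}/j$ contributes $e^{(N-1)x}\cdot e^{x}\cdot e^{-x}/1$ at leading order, namely $e^{(N-1)x}$, and all remaining terms are of lower order. Hence $f/g\to H_N+1$.

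\emph{Strong limit, $x\to-\infty$.} Here $g\to1$, and $(1+e^x)(1+\cdots)H_N\to H_N$. The delicate term is $e^{(N-1)x}\sum_{j=1}^{N-1}e^{-jx}/j=\sum_{j}e^{(N-1-j)x}/j$. Its $j=N-1$ term equals $\frac1{N-1}$, while every other term tends to $0$. So $f/g\to H_N+\frac1{N-1}$, and multiplying by $\frac{N^2}{2}$ gives the stated formulas.

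The main obstacle is only the careful bookkeeping of dominant exponentials in the last two cases. In particular, one must see that the $j=1$ term survives when $x\to+\infty$ and the $j=N-1$ term survives when $x\to-\infty$, since these are exactly where the extra constants $1$ and $\frac1{N-1}$ come from. Once the limits of $f/g$ are in hand, the linearity in $\theta$ is immediate because the limiting factor is constant on each region.
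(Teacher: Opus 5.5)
Your proof is correct and uses the same reduction as the paper. The linear factor does not depend on $\beta$, so everything comes down to the limit of $f(x)/g(x)$ at $x=0$, as $x\to+\infty$, or as $x\to-\infty$, according to the sign of $a\theta+\delta$ (with $a>0$ implicitly assumed, as in the statement). The only difference is in how the three values are obtained. The paper computes $f(0)/g(0)=\sum_j\eta_j/N=2H_N$ from the coefficient form $f=\sum_j\eta_j u^j$ and cites an earlier proposition for the strong-selection limits. Your dominant-exponential bookkeeping verifies those limits directly: they are exactly the top and constant coefficients $\eta_{N-1}=1+H_N$ and $\eta_0=H_N+\frac{1}{N-1}$.
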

In Figures \ref{fig:thm3-reward-dg} and \ref{fig:thm3-reward-pgg} we numerically demonstrate the asymptotic limits of the social welfare described in Theorem \ref{thm: selection-limits} respectively for Donation Game and Public Goods Game.

\subsection{Institutional punishment vs institutional reward}
We now consider institutional punishment \footnote{A counterpart of Theorem \ref{thm: qualitative SW} for the punishment could be obtained, but we omit it and only present Theorem \ref{thm:localization_p} since it provides an algorithm to compute the optimiser in practice.}.
The following theorem is the counterpart of Theorem \ref{thm:localization} for this type of incentive. It shows the localisation property of the optimal of $SW(\theta)$ for the punishment.
\begin{theorem}
\label{thm:localization_p}
Let $\theta^\star$ be a maximiser of the social welfare function  $\widehat{SW}(\theta)$ over $\theta \ge 0$, and define
\[
\hat{\theta}_\infty = -\frac{\delta}{\hat{a}}.
\]
Then either $\theta^\star = 0$, or $\theta^\star>0$ and
\[
\bigl|\theta^\star - \hat{\theta}_\infty\bigr| = \mathcal{O}\!\left(\frac{\log \beta}{\hat{a}\beta}\right).
\]
\end{theorem}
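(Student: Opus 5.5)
Write $x=\beta(\hat a\theta+\delta)$, so that $x=0$ corresponds exactly to $\theta=\hat\theta_\infty$. The aim is to show that any positive critical point with $|x|\gg \log\beta$ cannot be a global maximiser. We rewrite
\[
\widehat{SW}(\theta)=\frac{N^2}{2}\left[A\,\frac{f(x)}{g(x)}-(1+\hat a)\,\theta\,\frac{\hat f(x)}{g(x)}\right],\qquad A:=\delta+N\Delta,
\]
and first record the asymptotics of the two weights $\phi(x)=f(x)/g(x)$ and $\hat\phi(x)=\hat f(x)/g(x)$ as $x\to\pm\infty$. These are the same expansions behind Theorem \ref{thm: selection-limits}. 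As $x\to+\infty$, $\phi\to H_N+1$ and $\hat\phi\to H_N+\frac{1}{N-1}$, the punished states being visited rarely once the population is cooperative. As $x\to-\infty$ the two limits are interchanged. In both directions the corrections are of order $e^{-|x|}$. We also note that $\hat\phi$ is decreasing in $x$, so more cooperation means fewer defectors to punish, and that $\phi',\hat\phi'=\mathcal O(e^{-|x|})$.

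\textbf{Step 1: differentiate and split into regimes.} We differentiate in $\theta$:
\[
\frac{2}{N^2}\frac{d\widehat{SW}}{d\theta}=\beta\hat a\bigl[A\phi'(x)-(1+\hat a)\theta\hat\phi'(x)\bigr]-(1+\hat a)\hat\phi(x).
\]
The last term is bounded below by the positive constant $(1+\hat a)\min\hat\phi$. The bracket is $\mathcal O\bigl(\beta(1+\theta)e^{-|x|}\bigr)$.

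\emph{Regime $x\le -C\log\beta$, i.e. $\theta\le\hat\theta_\infty-\frac{C\log\beta}{\hat a\beta}$.} Here $\theta$ is bounded, and for $C>1$ the bracket term is $o(1)$. The derivative is therefore strictly negative, so no interior maximiser lies in this range. Any maximiser in $[0,\hat\theta_\infty)$ away from the window is then the endpoint $\theta=0$.

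\emph{Regime $x\ge C\log\beta$.} For $\theta$ bounded, say $\theta\le 2\hat\theta_\infty$, the same argument gives a strictly negative derivative. For large $\theta$ we use $e^{-x}\le e^{-\beta\hat a(\theta-\hat\theta_\infty)}$. This kills the linear factor $\theta$, since $\beta\theta e^{-\beta\hat a(\theta-\hat\theta_\infty)}\to0$ uniformly once $\theta\ge 2\hat\theta_\infty$. So $\widehat{SW}$ is strictly decreasing throughout this regime.

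\textbf{Step 2: conclude.} By Step 1, $\widehat{SW}$ is strictly decreasing outside the window $|x|\le C\log\beta$. Hence a maximiser is either at $0$ or satisfies $|x|\le C\log\beta$, which is $|\theta^\star-\hat\theta_\infty|\le \frac{C\log\beta}{\hat a\beta}$.

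\textbf{Main obstacle.} The delicate point is making the exponential bounds on $\phi'$ and $\hat\phi'$ uniform, with explicit $N$-dependent constants. This is where the $\log\beta$ loss enters: we need $\beta e^{-|x|}$ small enough to be dominated by the constant $\min\hat\phi$, which forces $|x|\gtrsim\log\beta$. It is also where the problem differs from Theorem \ref{thm:localization}. There the cost term lacked the unbounded non-constant factor $\theta\hat\phi(x)$, and the sign of $a-1$ entered only through a constant. I would obtain the bounds by writing $\phi$ and $\hat\phi$ as ratios of polynomials in $e^{x}$ (respectively $e^{-x}$) with positive coefficients and expanding in the leading terms.
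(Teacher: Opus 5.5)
Your proof is correct, but it follows a different route from the paper's.

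\textbf{Your route.} You get both sides of the window from a single estimate: the uniform bound $|\phi'(x)|,|\hat\phi'(x)|\le C_N e^{-|x|}$, combined with the floor $\hat\phi>\eta_0$. The bound holds because $\phi'=-uP(u)/g^2$ is $O(u)$ at $0$ and $O(1/u)$ at $\infty$, and $\hat\phi(x)=\phi(-x)$. Together these make $\widehat{SW}$ strictly decreasing wherever $|x|\ge C\log\beta$ with $C>1$.

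\textbf{The paper's route.} It treats the two sides asymmetrically.
\begin{itemize}
\item On the right it works with the critical-point equation \eqref{eq:swp_derivative} for $u>u_0$. It imports from \cite{duonghan2021cost} the identity $\hat P(u)=-u^{2N-4}P(1/u)$, which fixes the sign of $\hat P$, and the bound $\hat Q(u)\ge\frac{4}{3(N-1)}(1+u)u\hat P(u)$. These force $\tilde\theta>\frac{4(1+\tilde u)}{3\hat a\beta(N-1)}$ at any critical point, hence $\tilde u<\hat a\beta v^*$.
\item On the left it does not use the sign of the derivative at all. It compares values with $\widehat{SW}(0)$, as in Lemma \ref{lem:lower_bound_c}. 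The gain $A(\phi(x)-\phi(\beta\delta))$, with $A=\delta+N\Delta$, is only $O(e^{x})$, while the cost $(1+\hat a)\theta\hat\phi(x)$ is of order one. This yields the sharper one-sided bound $\theta^\star\ge\hat\theta_\infty-\log p_l/(\hat a\beta)$.
\end{itemize}

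\textbf{What each buys.} A derivative-sign argument like yours cannot reach the paper's left bound in general. For $x\to-\infty$ the bracket behaves like $u\,[A(\eta_1-\eta_0)-(1+\hat a)\theta(\eta_{N-2}-\eta_{N-1})]$. This can be positive, for instance in the Donation Game with large $b/c$, and then $\widehat{SW}'$ already turns positive near $x\approx-\log\beta$. So the paper's route gives a tighter left bound, at the price of imported polynomial inequalities. Yours is self-contained and symmetric, and it makes clear that the $\log\beta$ loss comes from $\beta\theta\hat\phi'(x)$ competing with $\hat\phi(x)$. The weaker $\mathcal O(\log\beta/(\hat a\beta))$ you get on the left is still all the statement requires.

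\textbf{Side remarks.} Three remarks in your write-up are inaccurate, though none of them is used in the argument.
\begin{itemize}
\item $\hat\phi$ is not decreasing in $x$: since $\hat\phi(x)=\phi(-x)$, it increases for $u<1/u_0$.
\item $\min\hat\phi$ should be $\inf\hat\phi=\eta_0$, which is not attained.
\item The reward objective does contain a $\theta$-weighted cost term, namely $(1-a)\theta\phi(x)$. What differs there is that this term shares the weight $\phi$, so $\phi'$ multiplies the factor $\delta+N\Delta-(1-a)\theta$, which is positive on $I$. Here, by contrast, for $u>u_0$ the terms $A\phi'<0$ and $-(1+\hat a)\theta\hat\phi'>0$ have opposite signs. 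That sign conflict is why punishment loses the $\log\beta$ factor and reward does not.
\end{itemize}
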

Compared with the convergence rate for institutional  reward  obtained in Theorem \ref{thm:localization}, the convergence for the punishment one is  slower because of the extra logarithmic factor.

The following theorem shows that if the reward efficiency is sufficiently high compared to that of the punishment, then the institution can always adjust the investment cost so that the reward incentive outperforms the punishment one.\\

\begin{theorem} \label{thm:eff}
Let $a$ and $\hat{a}$ be the efficiencies of  institutional reward and punishment, respectively. For $N \ge 3$, it satisfies that $\widehat{SW}(\theta) \le SW(\hat{a}\theta/a)$ for all $\theta \ge 0$ if and only if 
\[
a \ge \frac{\hat{a} \eta_{N-1}}{\eta_0 + \hat{a}(\eta_0 + \eta_{N-1})},
\]
where 
\[
\eta_0 = \frac{1}{N-1} + H_{N}\quad \text{and}\quad
    \eta_{N-1} = 1 + H_{N},
\]   
Specifically, when the two types of institutional incentives are equally efficient, i.e. $a = \hat{a}$, reward leads to a higher level of social welfare than punishment whenever 
\[
a \ge \frac{\eta_{N-1}-\eta_0}{\eta_{N-1}+\eta_0} = \frac{N-2}{N + 2(N-1)H_N}.
\]
\end{theorem}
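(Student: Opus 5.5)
The proof compares the two welfare functions at matched arguments, so that they share the same $x$. Take the reward level $\theta'=\hat a\theta/a$. Then the reward argument is $x=\beta(a\theta'+\delta)=\beta(\hat a\theta+\delta)$, which is exactly the argument appearing in $\widehat{SW}(\theta)$. The common factor $\frac{N^2}{2}\frac{f(x)}{g(x)}(\delta+N\Delta)$ then cancels between \eqref{eq:SW-theta-final} and \eqref{eq:swp0}, and
\begin{equation*}
SW(\hat a\theta/a)-\widehat{SW}(\theta)=\frac{N^2\theta}{2g(x)}\Big[(a-1)\frac{\hat a}{a}f(x)+(1+\hat a)\hat f(x)\Big].
\end{equation*}
Since $g>0$ and $\theta\ge 0$, the inequality for all $\theta\ge0$ is equivalent to the bracket being nonnegative for every $x$ in the range $[\beta\delta,+\infty)$. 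This range is unbounded above.

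If $a\ge1$ the bracket is trivially positive, and the stated threshold is $<1$, so that case is consistent. For $a<1$ the condition reads
\begin{equation*}
\frac{f(x)}{\hat f(x)}\le \frac{a(1+\hat a)}{\hat a(1-a)}\quad\text{for all } x\ge\beta\delta .
\end{equation*}
So the key object is the ratio $f/\hat f$. Write $y=e^x$ and cancel the common factor $(1+y)$. Both brackets are then polynomials of degree $N-2$ in $y$ with positive coefficients:
\begin{equation*}
f:\ A_k=H_N+\frac{1}{N-1-k},\qquad \hat f:\ B_k=H_N+\frac{1}{k+1},\qquad k=0,\dots,N-2 .
\end{equation*}
Here $A_k$ is read off from $e^{(N-1)x}\sum_j e^{-jx}/j$ with $k=N-1-j$, and $B_k$ from $\sum_j e^{(j-1)x}/j$ with $k=j-1$.

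Letting $x\to+\infty$, the leading coefficients give $f/\hat f\to A_{N-2}/B_{N-2}=(1+H_N)/(H_N+\frac{1}{N-1})=\eta_{N-1}/\eta_0$.

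\emph{Necessity.} Because $x$ can be taken arbitrarily large, the condition forces $\eta_{N-1}/\eta_0\le a(1+\hat a)/(\hat a(1-a))$. Rearranging gives $a(\eta_0+\hat a\eta_0+\hat a\eta_{N-1})\ge \hat a\eta_{N-1}$, which is exactly the stated threshold.

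\emph{Sufficiency.} I need the uniform bound $f(x)/\hat f(x)\le \eta_{N-1}/\eta_0$ for all real $x$. This is the main step, and I will do it coefficientwise by showing $\eta_0A_k\le\eta_{N-1}B_k$ for each $k$. Set $u=\frac{1}{N-1-k}\le 1$ and $v=\frac{1}{k+1}\ge\frac{1}{N-1}$, and expand both sides. The $H_N^2$ terms cancel, and it suffices to check two inequalities:
\begin{itemize}
\item $u+\frac1{N-1}\le 1+v$, which holds because $u\le 1$ and $\frac{1}{N-1}\le v$;
\item $\frac{u}{N-1}\le v$, which is equivalent to $(N-1)(N-1-k)\ge k+1$ and holds since $N-1\ge k+1$ and $N-1-k\ge1$.
\end{itemize}
Summing the coefficientwise inequalities against the positive weights $y^k$ gives $\eta_0 f\le\eta_{N-1}\hat f$ for every $x$. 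The supremum is therefore the limit value and is not attained, which yields sufficiency. The hypothesis $N\ge3$ keeps the threshold strictly positive ($\eta_{N-1}>\eta_0$) and the comparison nondegenerate.

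\emph{Special case $a=\hat a$.} The condition becomes $a^2(\eta_0+\eta_{N-1})+a\eta_0\ge a\eta_{N-1}$, that is, $a\ge(\eta_{N-1}-\eta_0)/(\eta_{N-1}+\eta_0)$. Substituting the definitions of $\eta_0$ and $\eta_{N-1}$ and multiplying through by $N-1$ gives $\frac{N-2}{N+2(N-1)H_N}$.
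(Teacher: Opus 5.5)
Your proof is correct and follows essentially the same route as the paper. You evaluate the two welfare functions at matched arguments so that the common $(\delta+N\Delta)f/g$ term cancels, then bound $f/\hat f$ above by $\eta_{N-1}/\eta_0$ through a coefficientwise (mediant-type) comparison, and rearrange.

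Your write-up is more complete than the paper's in two places. First, you check the coefficient comparison explicitly on the factored coefficients $A_k,B_k$, whereas the paper only asserts the pairwise bound on $\eta_j/\hat\eta_j$. Second, you prove the ``only if'' direction from the $x\to+\infty$ limit. The paper's proof omits that direction and only hints at it in its tightness remark.
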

Comparisons between the reward and punishment incentives are numerically demonstrated in Figure \ref{fig:thm5-dynamics}.
\subsection*{Idea of the proofs} 
The technically detailed proofs of the above theorems are deferred to Appendices  \ref{sec: appendix1} and \ref{sec: appendix2}. Here we provide the underlying ideas of these proofs.

The proofs of Theorems \ref{thm: qualitative SW}-\ref{thm: selection-limits}-\ref{thm:eff} rely on the analytically explicit formulas of the objective functions, which are the total expected social welfare $SW$ and $\widehat{SW}$ given in \eqref{eq:SW-theta-final}-\eqref{eq:swp0}. We are able to derive explicitly these formulas because of the crucial fact mentioned earlier that in Donation Game and Public Goods Game the payoff difference between cooperation and defection is independent of the population states. Although these expressions are still complicated, they enable us to compute the derivative of the objective functions. Finding the roots of the derivative functions then boils down to finding the roots of certain polynomials, see Equations \eqref{eq: def of P} and \eqref{eq: def F} below, which has been studied in \cite{duonghan2021cost}. The qualitative behaviour, especially the monotonic properties, of the objective functions is then followed by a thorough analysis of the sign of the derivative functions.

The later step is also the key step in the proofs of Theorems \ref{thm:localization}-\ref{thm:localization_p}. By characterising the sign of the derivative functions in appropriate intervals, we are able to provide lower and upper estimates for the location of their roots, which are the optimisers of the corresponding optimisation problems. 
\section{Numerical analyses and validation}
\label{sec: numerics}
This section analyses the behaviour of the expected social welfare ($SW$) with respect to the incentive impact $\theta$, under different incentive regimes and game  parameters. To further verify the analytical results (Theorems 1-5) presented above, we focus on:
\begin{enumerate}
    \item The effect of efficiency parameter $a$ and selection intensity $\beta$ on the variation of the $SW$ curve for the reward case, notably on $\left[\theta_0, +\infty\right)$.
    \item Approximate linearity of $SW(\theta)$  under neutral and strong selection limits.
    \item Convergence of the optimal value of $\theta$, i.e. $\theta^*$, with respect to selection intensity $\beta$.
    \item Comparison between the social welfare functions, i.e. $SW(\theta)$ and $\widehat{SW}(\theta)$, under institutional reward and punishment policies.
    \item Comparison of the optimal incentive levels $\theta^*$ obtained under social welfare maximisation, institutional cost minimisation, and minimum frequency of cooperation constraints.  
\end{enumerate}

\subsection{Impact of varying incentive efficiency  $a$ and selection intensity $\beta$}

\begin{figure}[tbp]
    \centering
    \includegraphics[width=1\linewidth]{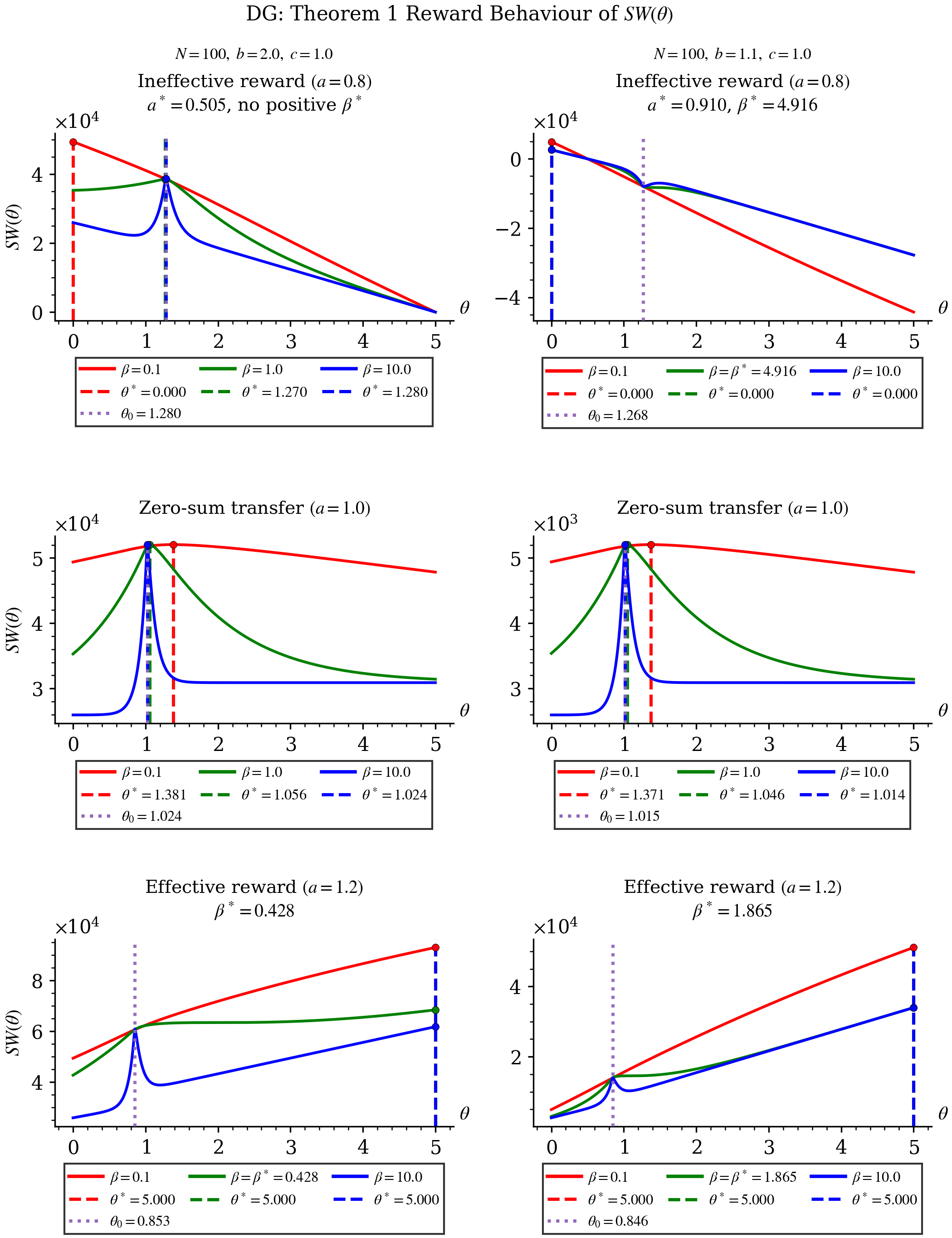}
    \caption{
    \textbf{In the Donation Game, depending on the selection intensity, the relationship between social welfare and the institutional incentive transitions is either monotonic behaviour or exhibits a clear extremum.}
    Social welfare $SW(\theta)$ as a function of the per-capita institutional cost $\theta$, for reward in the Donation Game (DG). The shape of the social welfare function undergoes qualitative transitions with incentive efficiency ($a$) and selection intensity ($\beta$). As $a$ increases, $SW$ changes from predominantly decreasing ($a<1$), to nearly flat ($a=1$), and eventually to increasing ($a>1$). Increasing $\beta$ further reveals a threshold $\beta^*$: below $\beta^*$, $SW$ is monotonic, whereas above $\beta^*$ the curve develops additional extrema, indicating a phase transition in welfare-maximising incentive levels.
    }
    \label{fig:thm1-reward-dg}
\end{figure}

\begin{figure}[tbp]
    \centering
    \includegraphics[width=1\linewidth]{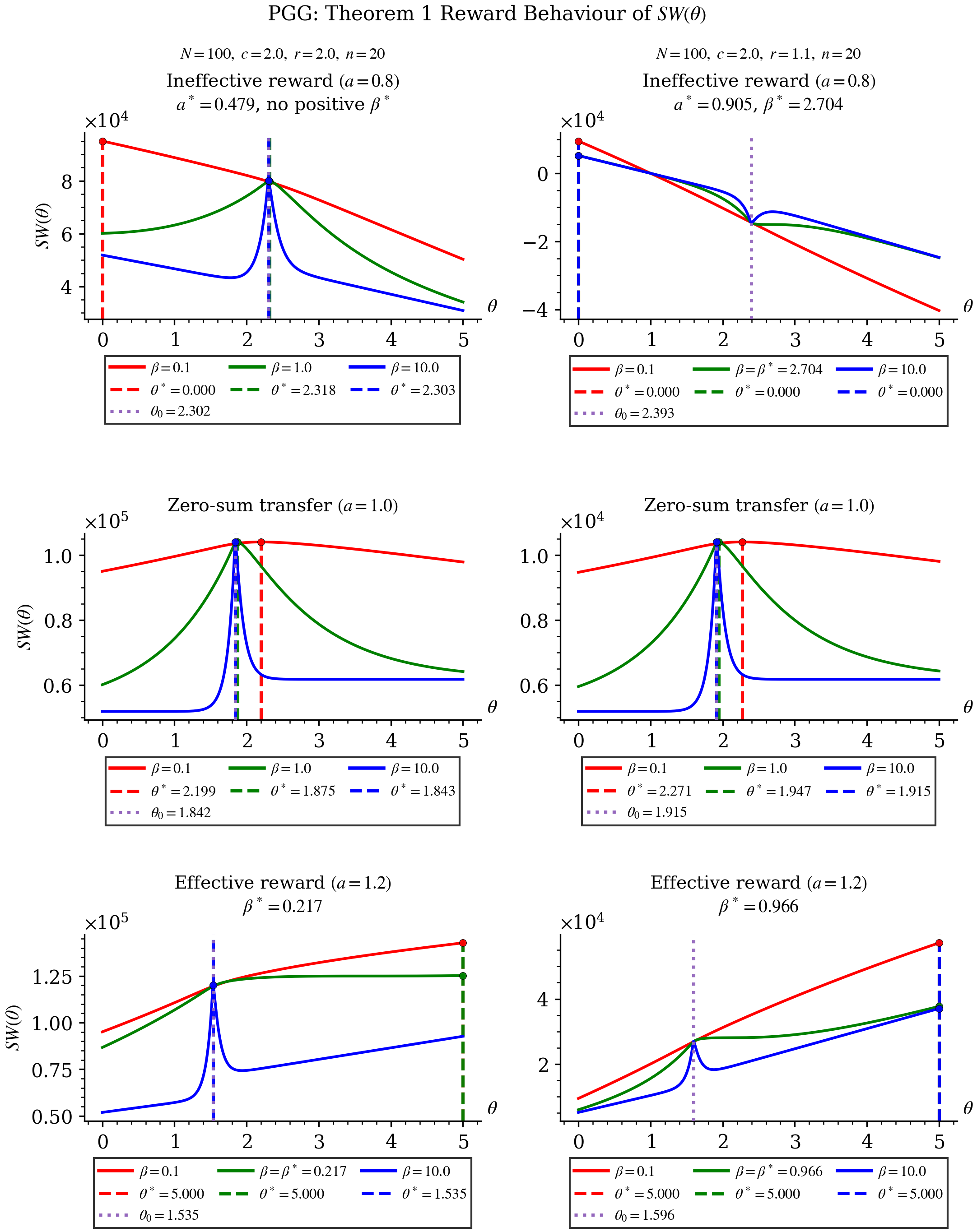}
    \caption{
    \textbf{In the Public Goods Game, depending on the selection intensity, the relationship between social welfare and the institutional incentive transitions is either monotonic behaviour or exhibits a clear extremum.}
    Shown are the numerical results for the Public Goods Game (PGG). The figures demonstrate the changes in the overall tendency of the $SW$ curve with varying efficiency parameter (from downward-sloping for $a<1$, to nearly level at $a=1$ and eventually upward-sloping when $a>1$), and the behaviour of the curve around threshold $\beta^*$.}
    \label{fig:thm1-reward-pgg}
\end{figure}

Increasing $a$ progressively reshapes the $SW$ curves, changing their overall tendency with respect to $\theta$ from decreasing to nearly flat and eventually increasing. This reflects a transition from a dissipative regime ($a<1$), where incentives reduce the overall population welfare, to an amplifying regime ($a>1$), where incentives enhance this outcome.

Increasing $\beta$ sharpens the structure of the $SW$ curve and changes its behaviour. As shown in Figures \ref{fig:thm1-reward-dg} and \ref{fig:thm1-reward-pgg}, when  $\beta$ is below the threshold defined in Theorem 1 (i.e. $\beta < \beta^*$), it  results in a monotonic $SW$ curve on $[\theta_0,+\infty)$. For $\beta > \beta^*$, the local extrema at $\theta_0$ can be observed clearly. The $SW$ function develops a distinct local extrema beyond $\theta_0$, indicating a phase transition. With $\beta \approx \beta^*$, it is unclear whether the $SW$ will exhibit a phase transition or a monotonic behaviour for all cases. This suggests that the change in the curve’s behaviour as $\beta$ approaches and surpasses the threshold is gradual and continuous, rather than abrupt.

The resulting observations in the behaviour of the $SW$ curve with respect to $a$ and $\beta$ further supports the analytical findings from Theorem \ref{thm: qualitative SW}.

\subsection{Linearity under neutral and strong selection limits}
\begin{figure}[tbp]
    \centering
    \includegraphics[width=1\linewidth]{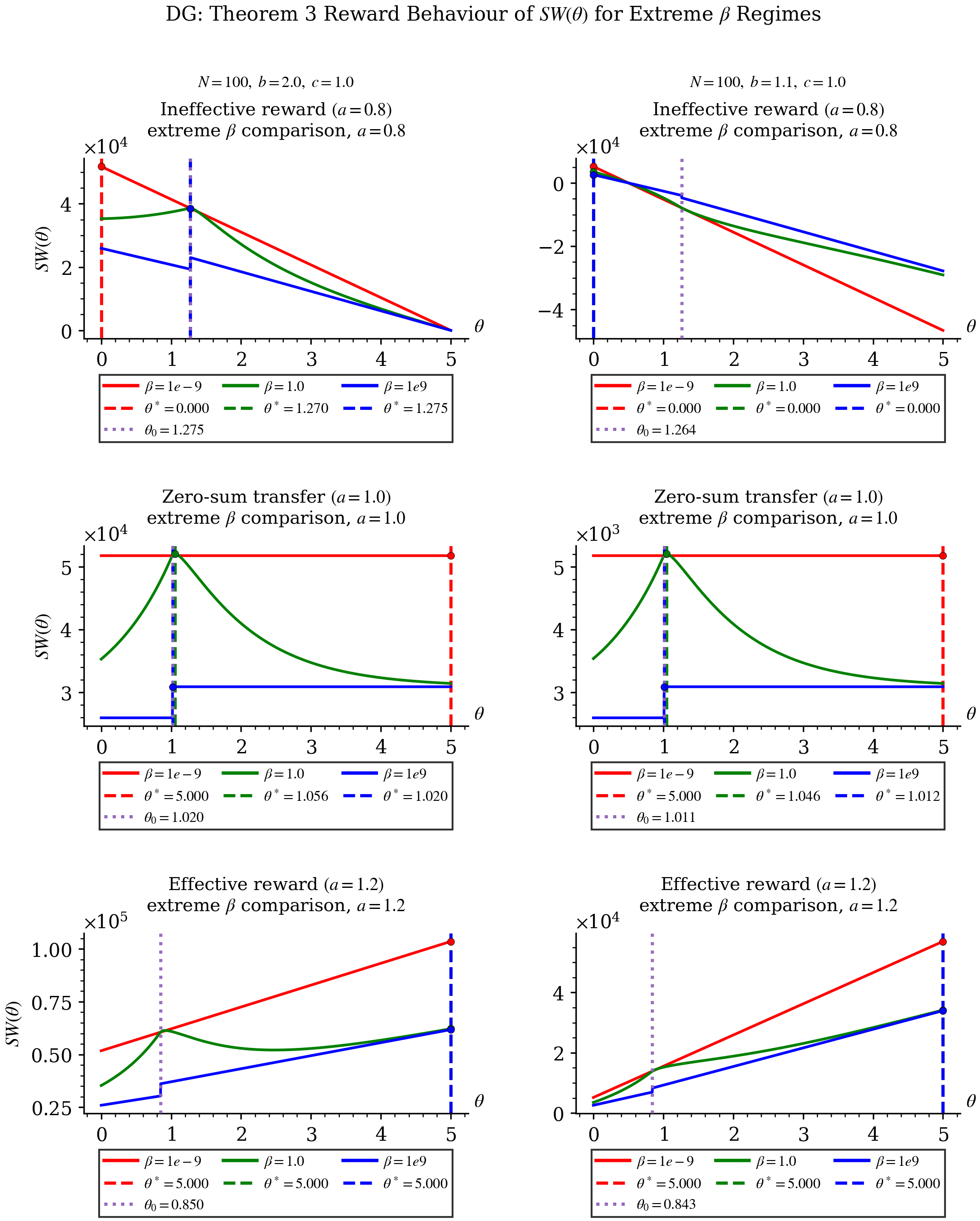}
    \caption{
    \textbf{In the Donation Game, social welfare undergoes a sharp phase transition at a critical incentive threshold under extreme selection intensities.}
    Shown are the numerical results for the Donation Game (DG). The figures illustrate how the $SW$ curve approaches a near-linear form under regimes $\beta \to 0^+$ and $\beta \to +\infty$, compared with its behaviour at an intermediate selection intensity.}
    \label{fig:thm3-reward-dg}
\end{figure}

\begin{figure}[tbp]
    \centering
    \includegraphics[width=1\linewidth]{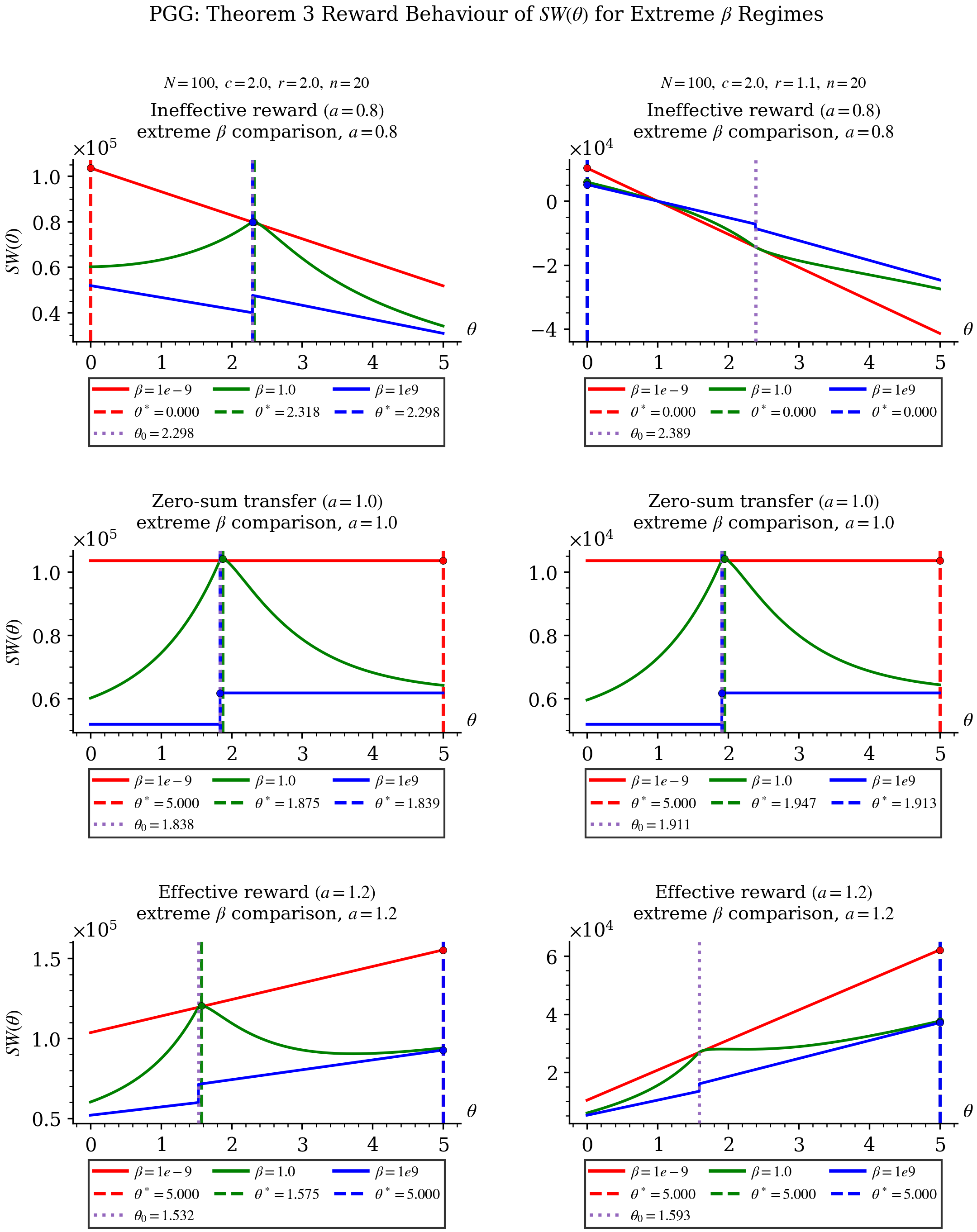}
    \caption{
    \textbf{In the Public Goods Game, social welfare undergoes a sharp phase transition at a critical incentive threshold under extreme selection intensities.}
    Shown are the numerical results for the Public Goods Game (PGG). The figures illustrate how the $SW$ curve approaches a near-linear form under regimes $\beta \to 0^+$ and $\beta \to +\infty$, compared with its behaviour at an intermediate selection intensity.}
    \label{fig:thm3-reward-pgg}
\end{figure}

The $SW$ behaviour observed in Figures \ref{fig:thm3-reward-dg} and \ref{fig:thm3-reward-pgg} are consistent with Theorem \ref{thm: selection-limits}. In the limiting regimes $\beta \to 0^+$ and $\beta \to +\infty$, $SW$ approaches a piecewise-linear profile, with approximately linear behaviour on $[0,\theta_0)$ and $(\theta_0,+\infty)$. However, it exhibits a steep, near-vertical transition around $\theta_0$ in several cases.

\subsection{Convergence of the optimal incentive}

\begin{figure}[htp]
    \centering
    \includegraphics[width=\linewidth]{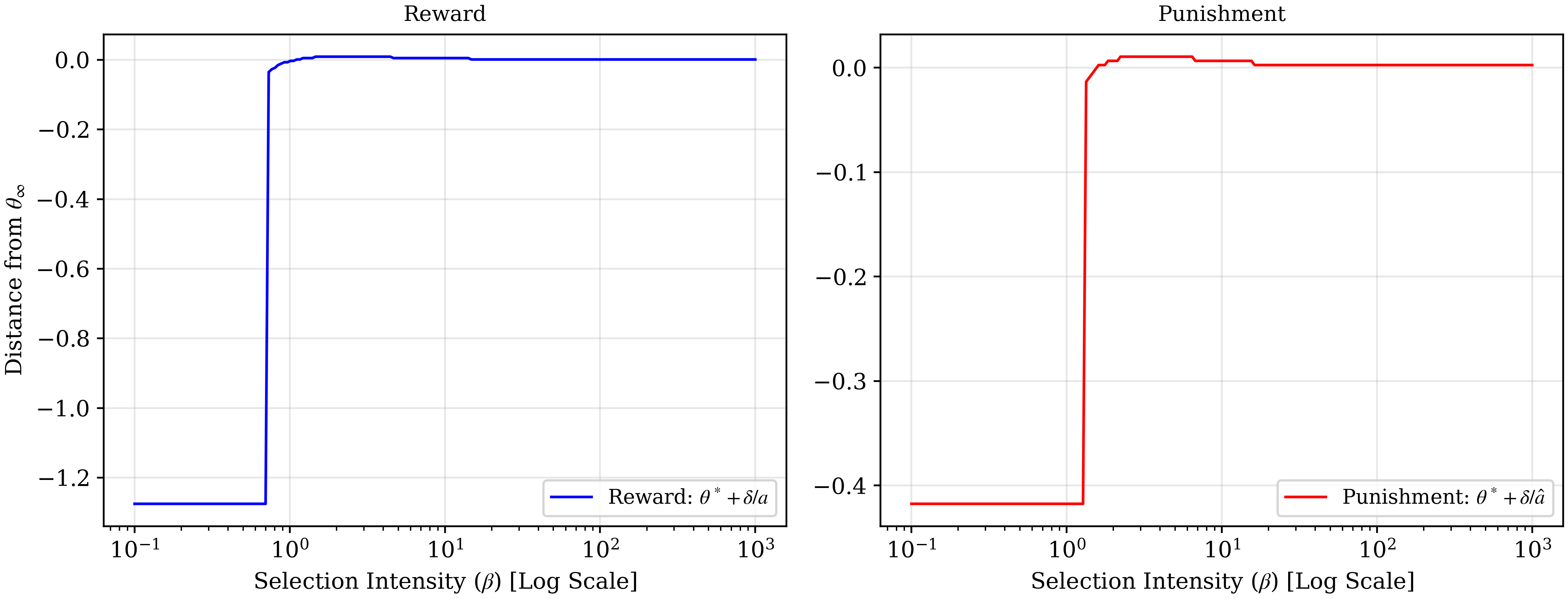}
    \caption{
    \textbf{The optimal institutional incentive undergoes a sharp phase transition at a critical selection threshold, rapidly converging to its theoretical limit under strong selection.}
    Convergence of the optimal incentive $\theta^*$ to theoretical limits as selection intensity $\beta$ increases on a logarithmic scale. (Left) Theorem \ref{thm:localization}: Optimal reward incentive converging to $\theta_\infty$. Simulated using the Donation Game (DG) with population size $N=100$, benefit $b=2.0$, cost $c=1.0$, and a reward transfer $a=0.8$. (Right) Theorem \ref{thm:localization_p}: Optimal punishment incentive converging to $\hat{\theta}_\infty$. Simulated using the DG with $N=100$, benefit $b=5.0$, cost $c=0.2$, and a punishment efficiency $\hat{a}=0.6$.}
    \label{fig:thm2_4_convergence}
\end{figure}

To validate the localisation properties of the optimal incentive $\theta^*$, we evaluate the distance between the numerical maximiser and the theoretical limits: $\theta_\infty$ for reward and $\hat{\theta}_\infty$ for punishment, as the selection intensity $\beta$ increases. Figure \ref{fig:thm2_4_convergence} illustrates this relationship for both mechanisms. Notably, the convergence dynamics exhibit a distinct phase transition: under weak selection, the optimal incentive $\theta^*$ remains at $0$, but once a critical threshold $\beta^*$ is exceeded, $\theta^*$ jumps away from zero and rapidly approaches its theoretical limiting value.

For institutional reward, Theorem \ref{thm:localization} states that the optimal incentive $\theta^*$ satisfies $|\theta^* - \theta_\infty| = \mathcal{O}(\frac{1}{a\beta})$. The left panel of Figure \ref{fig:thm2_4_convergence} visually confirms this bound for the Donation Game ($N=100, b=2.0, c=1.0, a=0.8$), demonstrating that as $\beta \to +\infty$, the distance between $\theta^*$ and $\theta_\infty$ collapses to zero. 

Similarly, for the punishment mechanism, Theorem \ref{thm:localization_p} establishes the bound $|\theta^* - \hat{\theta}_\infty| = \mathcal{O}(\frac{\log \beta}{\hat{a}\beta})$. The right panel validates this constraint using a modified parameter configuration ($N=100, b=5.0, c=0.2, \hat{a}=0.6$), showing a matching pattern of convergence. It is critical to note that for the punishment mechanism, achieving a non-trivial optimal incentive (where $\theta^* > 0$) is highly sensitive to the game's payoff structure. In our observations, this typically occurs only when the benefit-to-cost ratio ($b/c$) is exceptionally large. We utilised a ratio of $b/c = 25$ in this simulation specifically to capture this dynamic. Under less extreme conditions, the optimal punishment incentive frequently collapses to zero, underscoring the limitations of punishment compared to reward.

Regarding the convergence speeds, the theoretical bounds provide vital context for interpreting the numerical results. Theorem \ref{thm:localization} dictates that the reward mechanism converges at a rate of $\mathcal{O}(1/(a\beta))$, while Theorem \ref{thm:localization_p} establishes a slightly slower theoretical convergence rate for punishment at $\mathcal{O}(\log \beta / (\hat{a}\beta))$. Despite this mathematical distinction, the logarithmic scale of the x-axes in both figures reveals that once the phase transition threshold is crossed, both incentive mechanisms exhibit a rapid collapse toward their respective limits. 

Consequently, from an applied institutional perspective, both optimal incentives stabilise almost immediately once selection intensity becomes sufficiently strong.

\subsection{Social welfare dynamics under reward vs punishment}
\begin{figure}[htp]
    \centering
    \includegraphics[width=1\linewidth]{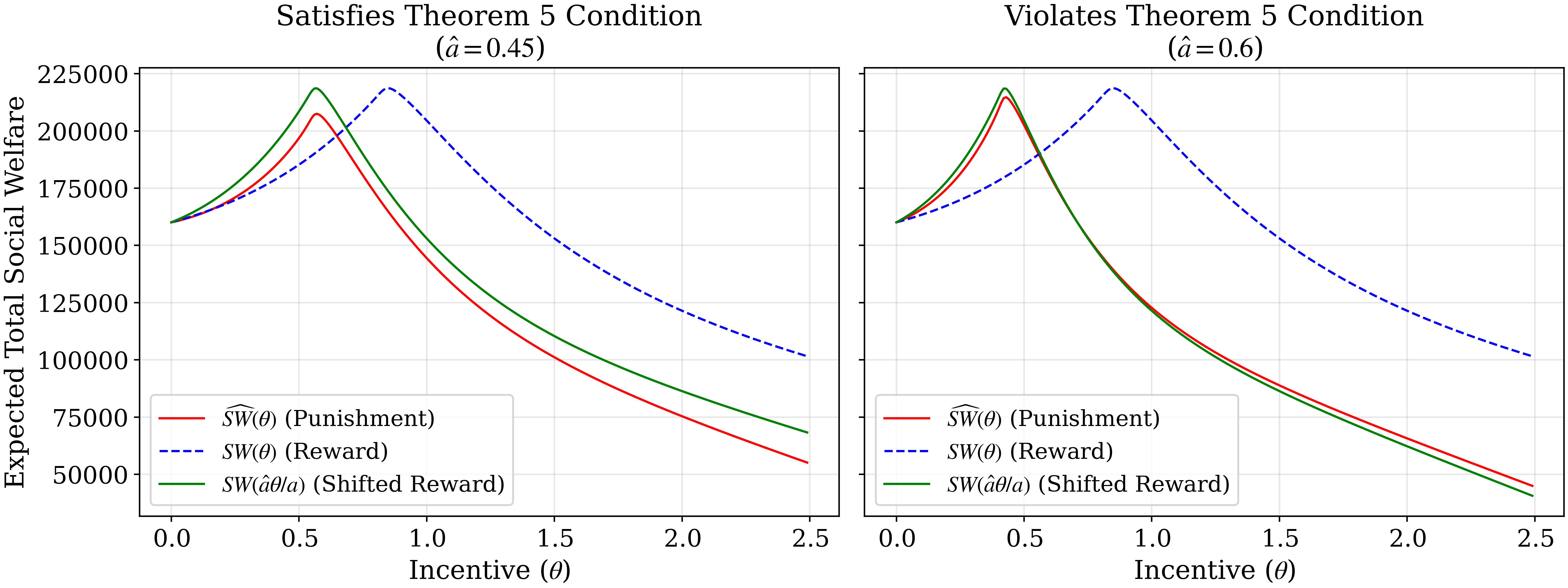}
    \caption{
    \textbf{The efficiency threshold dictates the absolute dominance of institutional reward over punishment.}
    Comparison of reward and punishment in the Donation Game ($N=100$, $b=5.0$, $c=0.2$, $a=0.3$). The left panel ($\hat{a}=0.45$) shows strict dominance of the shifted reward when the theoretical efficiency condition is met (see Theorem 5), while the right panel ($\hat{a}=0.6$) shows punishment outperforming reward when the condition is violated.}
    \label{fig:thm5-dynamics}
\end{figure}

To properly observe the regime where punishment could theoretically dominate, we utilize the same extreme payoff configuration from our convergence analysis: the Donation Game with a population size $N=100$, benefit $b=5.0$, and cost $c=0.2$. This very high benefit-to-cost ratio ($b/c = 25$) is necessary because, under standard payoff conditions, the reward mechanism almost universally outperforms in terms of social welfare maximisation.

In the left panel ($\hat{a} = 0.45$), the parameters satisfy the condition established in Theorem 5. Consequently, the shifted reward curve, $SW(\hat{a}\theta/a)$ (shown in green), acts as a strict upper bound to the punishment curve ($\widehat{SW}(\theta)$, shown in red) across all incentive levels. Not only does reward dominate, but the absolute gap in total social welfare between the two mechanisms remains visible even as the incentive $\theta$ increases. 

Conversely, the right panel illustrates a regime where this efficiency condition is explicitly violated by increasing the punishment efficiency to $\hat{a} = 0.6$. Here, the red punishment curve eclipses the green shifted reward curve for intermediate incentive values, demonstrating that the reward dominance fails to hold under these specific parameters.

\subsection{Social welfare vs institutional cost optimisation: optimal incentives}
\begin{figure}[thp]
    \centering
    \includegraphics[width=1\linewidth]{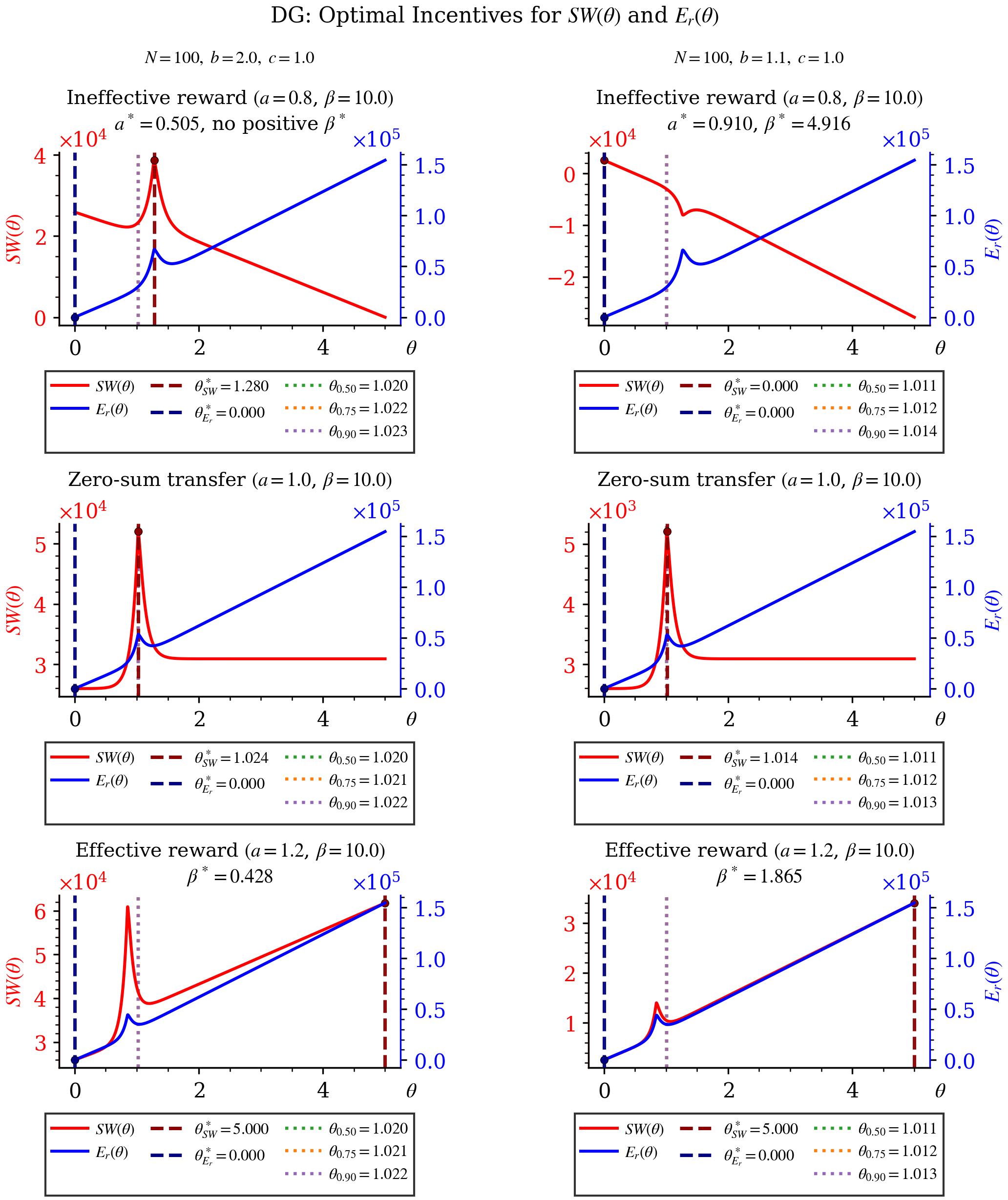}
    \caption{\textbf{Efficient rewards can conflict with cost: multi-objective comparison of optimal incentive levels for the DG (reward case, $\beta=10.0$).} The panels show that the social-welfare–maximising incentive often differs substantially from the cost-minimising incentive assuming  minimum cooperation targets. The figures illustrate how the incentive values that maximise $SW(\theta)$, minimise $E_r(\theta)$, and satisfy cooperation-frequency thresholds vary across three reward-efficiency regimes—ineffective reward ($a<1$), zero-sum transfer ($a=1$), and effective reward ($a>1$). Vertical lines mark the optimal incentives and the threshold values for target cooperation frequencies.}
    \label{fig:SW-vs-Er-DG}
\end{figure}

\begin{figure}[thp]
    \centering
    \includegraphics[width=1\linewidth]{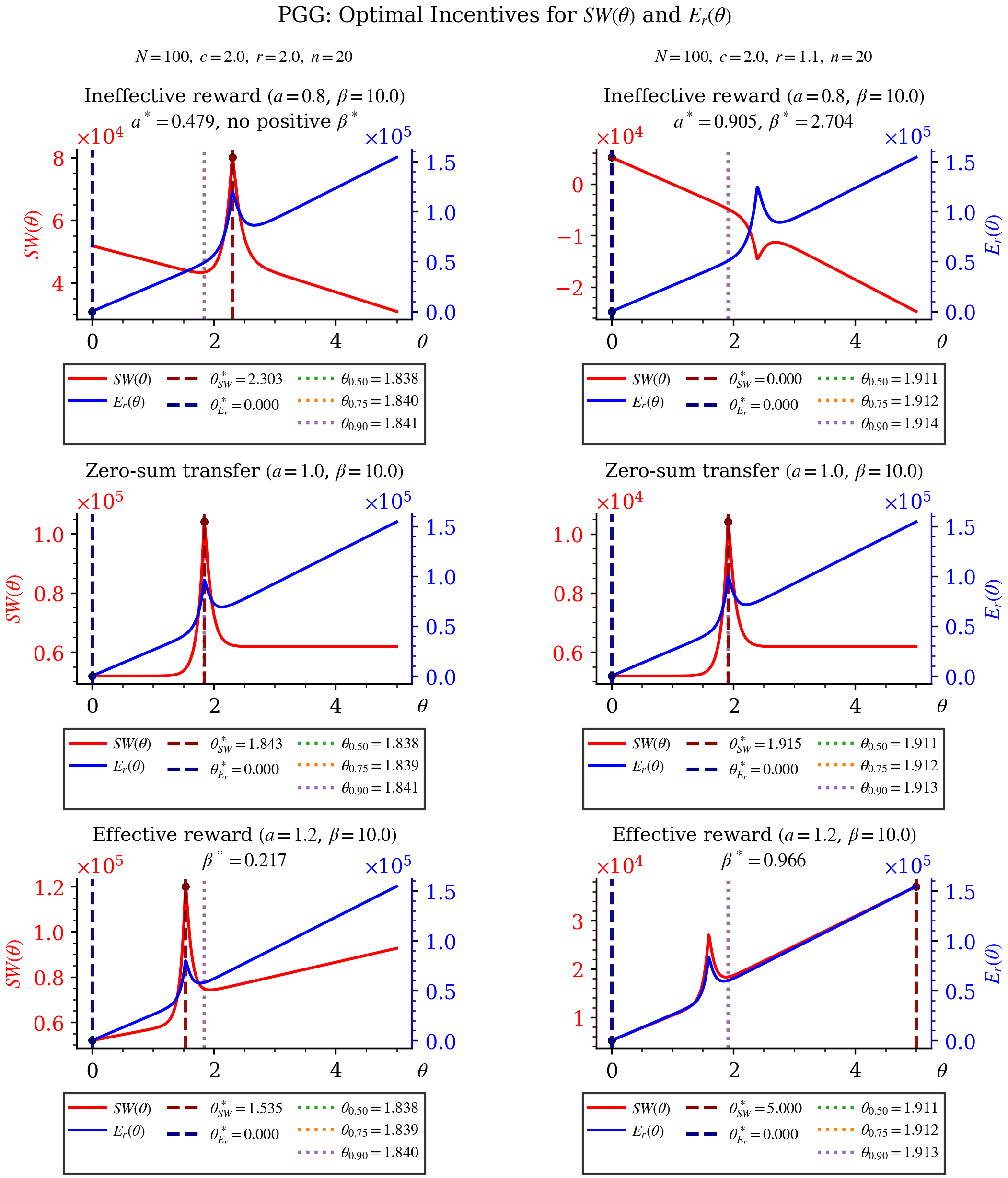}
    \caption{
  \textbf{Efficient rewards can conflict with cost: multi-objective comparison of optimal incentive levels for the PGG (reward case, $\beta=10.0$).} The panels show that the social-welfare–maximising incentive can differ substantially from the cost-minimising incentive. The figures illustrate how the incentive values that maximise $SW(\theta)$, minimise $E_r(\theta)$, and satisfy cooperation-frequency thresholds vary across three reward-efficiency regimes—ineffective reward ($a<1$), zero-sum transfer ($a=1$), and effective reward ($a>1$). Vertical lines mark the optimal incentives and the threshold values for target cooperation frequencies.}
    \label{fig:SW-vs-Er-PGG}
\end{figure}
This section compares the optimal incentive levels to three inter-related optimisation  objectives: social welfare maximisation, institutional cost minimisation, and institutional cost minimisation with a minimum target cooperation frequency. It is crucial to focus on whether the social welfare optimum remains feasible and compatible with the institutional cost, once the frequency of cooperation threshold is imposed. From Figures \ref{fig:SW-vs-Er-DG} and \ref{fig:SW-vs-Er-PGG}, for DG and PGG, respectively, we show that the optimal incentives $\theta^*$ for achieving the objectives can be significantly different, for varying  game parameters (see additional figures A3-A8 in Appendix for other parameter settings).

\textbf{(i) Institutional cost minimisation under the constraints of the frequency of cooperation.} 

The frequency of cooperation imposes a lower bound $\theta_\omega$ on the incentive level (see  Equation \ref{eq:omega_fraction}). Since $E_r(\theta)$ increases with the incentive level in the relevant regimes, institutional cost does not favour incentives above this threshold. Hence, once a target cooperation frequency is fixed, the cost-minimising feasible incentive is naturally located at $\theta_\omega$. This shows that the cost objective and the frequency of cooperation should be interpreted jointly: the threshold defines feasibility, while $E_r(\theta)$ selects the lowest feasible incentive.

\textbf{(ii) Comparison with the social welfare maximisation objective.}

The central comparison is between the social welfare optimum and the feasible incentive to minimise institutional cost. If $\theta^*_{SW}$ lies close to $\theta_\omega$, then social welfare maximisation, cost minimisation, and cooperation enforcement are largely compatible. If $\theta^*_{SW}>\theta_\omega$, then maximising social welfare requires incentives above the minimum level needed to satisfy the cooperation target, creating a trade-off with institutional cost. Conversely, if $\theta^*_{SW}<\theta_\omega$, the frequency of cooperation forces the institution to choose an incentive level beyond the welfare optimum.

In the inefficient and zero-transfer regimes, $a \le 1$, the interaction between the objectives depends on the shape of $SW(\theta)$. When $a \le a^*\text{, }SW(\theta)$ is non-increasing over the feasible interval, or only admits a local minimiser. As a result, social welfare maximisation and institutional cost minimisation are aligned, since both favour the smallest feasible incentive. When $SW(\theta)$ admits an interior local maximiser ($a > a^*$), the welfare optimum may lie above the cooperation threshold, producing a partial trade-off with the institutional cost. However, if the frequency of cooperation threshold exceeds this local maximiser, the feasible solution is instead dictated by the cooperation requirement.

In the efficient-transfer regime, $a>1$, the social welfare objective tends to favour larger incentive levels, whereas institutional cost minimisation favours the smallest feasible incentive. The frequency of cooperation threshold determines the minimum admissible value of $\theta$, but the welfare optimum may lie above this threshold. This creates a stronger conflict between increasing welfare and controlling institutional cost.

The neutral and strong selection limits provide simplified benchmark cases for the multi-objective comparison. Under both limits, $SW(\theta)$ approaches a linear limiting profile, while the thresholds for cooperation either become extremely large or converge to a common value. These limiting behaviours reduce the complexity of the trade-off among the objectives. 



\section{Conclusions and Future Work}
\label{sec: summary}

In this work, we extend and generalise the classical framework of institutional incentives by putting social welfare at the centre of the analysis, rather than focusing only on minimising institutional costs or maximising cooperation levels \cite{HAN2026_social_welfare,duong2023cost,chen2015first,wang2019exploring}. For institutional reward, we derive an explicit expression for population-level social welfare in finite, well-mixed populations and show that both how efficiently rewards are transferred and how strongly selection acts are crucial for whether rewarding cooperation produces a net societal benefit. We identify parameter regimes where social welfare has a single best incentive level (where it first increases and then decreases as incentives grow), and other regimes where it exhibits more complex phase transitions, with multiple local optima as efficiency and selection change. We also show that any welfare-maximising incentive is either zero or very close to a simple, analytically determined target level. 

For institutional punishment, we find qualitatively similar localisation properties and derive explicit conditions under which sufficiently efficient rewards always outperform punishment in terms of social welfare for any given budget. Taken together, these results demonstrate that incentive schemes optimised for cost efficiency or cooperation frequency can be markedly suboptimal or even harmful when social welfare is considered. This challenges much of the existing literature on the evolution of cooperation and institutional incentives, which typically uses cooperation levels or institutional spending cost as the main performance criteria \cite{hilbe2012emergence,sigmund2001reward,sigmund2010social,cimpeanu2019exogenous,chen2015first,wang2021incentive,sasaki2012take,gois2019reward,duong2023cost,garcia2019evolution,wu2017probabilistic,liu2026dynamic}. Moreover, our findings provide a welfare-based analytical framework for evaluating mechanisms of cooperation and clarify when higher cooperation genuinely improves collective outcomes \cite{HAN2026_social_welfare}. As important implications for institutional design and public policy, our findings underscore the importance of welfare-centric optimisation and offer structural guidance on how to calibrate reward and sanction systems. Our framework suggests that centralised controllers and incentive designers should be judged not just on how much cooperation they induce among agents, but on how well they maximise net social welfare under realistic resource constraints.

Several important gaps remain open for future work. Firstly, as our work develops the reward-based and punishment-based social welfare frameworks separately, a natural next step is to  investigate adaptive and hybrid combinations \cite{chen2015first,liu2025evolution,duong2023cost,liu2026dynamic} with  reward and punishment for enhanced social welfare. Secondly, our numerical results suggest clear structural behaviours of social welfare on several key intervals, but a rigorous global characterisation remains open. Addressing this gap would strengthen the proposed framework and provide a more complete understanding of optimal incentive design in finite populations. Thirdly, our current formulation assumes a uniform incentive level across all targeted individuals, whereas a more realistic institutional design would allow incentives to depend on the specific individual, their role, or their local contribution to the population state \cite{perc2015double,cimpeanu2025hidden,cimpeanu2021cost,alalawi2026trust}. The generalisation to a heterogenous incentives could reveal richer optimal policies and better reflect real social systems. Finally, the model currently assumes symmetric mutant emergence from the two absorbing states. Allowing asymmetric mutation probabilities, or more generally state-dependent mutation structures, would make the framework more realistic and may substantially change both the welfare landscape and the resulting optimal intervention strategies. Together, these extensions would help move the present framework from a stylised welfare analysis toward a more general and policy-relevant theory of institutional incentive design in finite populations. 



\section*{Acknowledgements}
 Z.S. and TAH are supported by EPSRC (grant EP/Y00857X/1). MHD was supported by EPSRC grant EP/Y008561/1. 
T.A.H. acknowledges travel support from the HCMUT\text{-}VNUHCM (Adjunct Professorship scheme HCMUT\text{-}VNUHCM).

\section*{Competing interest} Authors declare that they have no conflict of interest.

\bibliographystyle{unsrt}
\bibliography{mybib}

\newpage

\appendix
\setcounter{figure}{0}
\setcounter{equation}{0}

\renewcommand*{\thefigure}{A\arabic{figure}}
\renewcommand*{\theequation}{A\arabic{equation}}

\section{Proof of the main results: reward}
\label{sec: appendix1}
\subsection{Zero Sum Transfer (\texorpdfstring{$a=1$}{a=1})}
\begin{proof}[Proof of the case of zero-sum transfer ($a=1$)]
Under the condition $a=1$, the term $(a-1)\theta$ inside the bracket in the right-hand side of \eqref{eq:SW-theta-final} vanishes. Thus the expected total social welfare, $SW(\theta)$, is simplified to:
\[
SW(\theta)= K  \frac{f(x)}{g(x)},
\]
where we recall that $x=\beta(a\theta+\delta)=\beta(\theta+\delta)$ and $K:= \frac{N^2}{2} (\delta + N\Delta)$. 
For the DG game:
\begin{equation*}
K=\frac{N^2}{2}(\delta+N\Delta)=\frac{N^2}{2}\Bigg[-\Big(c+\frac{b}{N-1}\Big)+\frac{N b}{N-1}\Bigg]=\frac{N^2}{2}(b-c)>0\quad \text{since}\quad b>c.    
\end{equation*}
For the PGG game:
\begin{equation*}
K= \delta+N\Delta=\frac{N^2}{2}\Bigg[-c \left(1 - \frac{r(N-n)}{n(N-1)} \right)+\frac{N rc(n-1)}{n(N-1)}\Bigg]=\frac{N^2}{2} c(r-1)>0\quad\text{since}\quad c>0,~ r>1.
\end{equation*}
Thus $K>0$ for both games.
Therefore, the problem of maximising $SW(\theta)_{a=1}$ reduces to the following optimisation problem
\[
\max_{\theta\geq \theta_\omega}\Psi(\theta),\quad\text{where}\quad\Psi(\theta):=\frac{f(x)}{g(x)}=\frac{f(\beta(\theta+\delta))}{g(\beta(\theta+\delta))}. 
\]
Using the chain rule, we compute the derivative of $\Psi$:
\begin{equation}
\label{eq: derivative Psi}
\Psi'(\theta)=x'(\theta)\frac{d}{dx}\Big(\frac{f(x)}{g(x)}\Big)=\beta\frac{f'(x)g(x)-f(x)g'(x)}{g(x)^2}:=\beta\frac{u P(u)}{g^2(x)},    
\end{equation}
where following \cite{duonghan2021cost} we have defined $u= e^x$ and
\begin{equation}
P(u) = \dfrac{f(x)g'(x)-f'(x)g(x)}{u}.    
\end{equation}
More precisely,
\begin{align}
\label{eq: def of P}
P(u)&:=(1+u)\Bigg[\Big(\sum_{j=0}^{N-2}(H_N+\frac{1}{N-1-j}) u^j\Big)\Big(\sum_{j=1}^{N-1} j u^{j-1}\Big)-\Big(\sum_{j=1}^{N-2}\Big(H_N + \frac{1}{N-1-j}\Big) j u^{j-1}\Big)\Big(\sum_{j=0}^{N-1}u^j\Big)\Bigg]\notag
\\&\qquad-\Big(\sum_{j=0}^{N-2}\Big(H_N + \frac{1}{N-1-j}\Big) u^j\Big)\Big(\sum_{j=0}^{N-1}u^j\Big),
\end{align}
where $H_N$ denotes the harmonic number
\[
H_N=\sum_{j=1}^{N-1}\frac{1}{j}.
\]
We will use the following properties of $P$ which is proved in  \cite[Proposition 1.6, supplementary document]{duonghan2021cost}
\begin{itemize}
    \item[(P1)] $P(u)$ is a polynomial of order $2N-4$.
    \item[(P2)] For $u > 0$, $P(u)$ has exactly one solution $u_0 > 1$ and $\textrm{sign}(P(u)) = \textrm{sign}(u - u_0)$.
\end{itemize}
From \eqref{eq: derivative Psi} and the second property above, we have $\Psi'(\theta)$ has a unique positive root $\theta_0 =\frac{\log(u_0)}{\beta}-\delta$. In addition, $\Psi(\theta)$ is increasing on $(0, \theta_0)$ and decreasing on $(\theta_0, +\infty)$. This guarantees that $\Psi$ has a global maximum at $\theta_0$. 
\end{proof}
\subsection*{Numerical calculation of the maximiser}
We have established the existence of a unique global maximiser for the expected total welfare $SW(\theta)$ in the case $a=1$. The maximiser $\theta_0$ which is computed from the unique positive root $u_0>1$ of the polynomial $P$ defined in \eqref{eq: def of P}. However, for large $N$, according to Abel's impossibility theorem, finding the analytical value for $u_0$ (thus $\theta_0$) is analytically intractable since $P(u)$ is a polynomial of degree $2N-4$.

Therefore, we compute $u_0$ numerically using Brent’s method, which combines bisection, secant, and inverse quadratic interpolation to ensure both reliability and fast convergence \cite{numericalrecipes}. Since $P(u)$ is a polynomial admitting a unique positive root, Brent’s method can be used to compute $u_0$ accurately and efficiently once a valid bracketing interval is established. To this end, we first determine an interval $[u_{\min}, u_{\max}]$ such that $P(u_{\min})P(u_{\max}) < 0$. Noting that $P(1) < 0$, we construct the upper bound by iteratively doubling:
\[
u_{\max} = 2^p,\quad \text{where } p = \min \{k \in \mathbb{Z}^+ : P(2^k) > 0\},
\]
thereby ensuring the existence of a sign change within the interval.

The maximum overall social welfare is achieved when the incentive $\theta$ forces the system into the state defined by $x_0$. Since $x = \beta(a\theta + \delta)$, the Optimal Social Welfare Incentive ($\theta_0^{SW(\theta)}$) is derived as:\[\theta_0^{SW(\theta)}=\dfrac{x_0}{a\beta}-\frac{\delta}{a}\]

\subsection{For $a>1$}

\begin{proof}[Proof of the case of sufficient transfer ($a>1$)]

To determine the optimal incentive $\theta$ that maximises social welfare, we analyse the derivative of $SW(\theta)$ with respect to $\theta$. Substituting the expressions for $f(x)$, $g(x)$, and their derivatives, the derivative $dSW(\theta)/d\theta$ can be computed explicitly as follows:
We take the derivative of Social Welfare objective function from  \ref{eq:SW-theta-final}, for $\theta$ such that $u > u_0$, as follows:
\begin{align}
    \frac{dSW(\theta)}{d\theta} &= \frac{N^2}{2}\left[ a\beta\frac{f'(x)g(x) - f(x)g'(x)}{g^2(x)}\left(\delta + N\Delta + (a-1)\theta\right) + (a-1)\frac{f(x)}{g(x)}\right]\notag \\
    &= \frac{N^2(a-1)}{2g^2(x)}\left[a\beta(f'(x)g(x)-f(x)g'(x)) \left( \frac{\delta + N\Delta}{a-1} + \theta \right) + f(x)g(x) \right]\notag\\
&= \frac{N^2(a-1)uP(u)}{2g^2(x)}\left[-a\beta \left( \frac{\delta + N\Delta}{a-1} + \theta \right) + \frac{f(x)g(x)}{uP(u)} \right]\notag\\
    &= \frac{N^2(a-1)uP(u)}{2g^2(x)} \left[\frac{f(x)g(x)}{uP(u)} - a\beta\theta + a\beta\frac{\delta + N\Delta}{1-a} \right] \notag
\\&= \frac{N^2 u P(u)}{2 g^2(x)} (a-1) \left[ F(u) + \beta \mathcal{K} \right],\label{dSW}
\end{align}
where we have defined
\begin{equation}
\label{eq: def F}
F(u):=\frac{f(x)g(x)}{uP(u)} - x\quad
    \text{and } \mathcal{K}:=\frac{\delta + aN\Delta}{1-a}.
\end{equation} 
From \cite{duonghan2021cost} we have important properties of $F(u)$ on $(u_0, +\infty)$, where $u_0$ is the unique positive root of the polynomial $P$,  as follows:
\begin{enumerate}
    \item[(F1)] $F(u)>0$,
    \item[(F2)] $F(u) > 1+u-\log(u)$. In particular, $ \lim{F(u)} = +\infty$.
    \item[(F3)] There $u^*$ such that $ F^*:=F(u^*)=\min_{u>u_0}F(u)$.
    \item[(F4)] When $N \le N_0 = 100$, $u^*$ is unique and $\textrm{sign}(dF/du)=\textrm{sign}(u-u^*)$.
\end{enumerate}

We consider the sign of $\mathcal{K}$. For Donation Game:
\begin{align*}
    \mathcal{K} &= \frac{1}{1-a}\left( -c-\frac{b}{N-1}+\frac{aNb}{N-1}\right) \\
    &= \frac{Nb}{(1-a)(N-1)}\left[ a -\frac{c(N-1)}{Nb} - \frac{1}{N} \right] \\
    &= \frac{Nb}{(1-a)(N-1)}\left[ a -\frac{c(N-1) + b}{Nb} \right].
\end{align*}
For Public Good Game:
\begin{align*}
    \mathcal{K} &= \frac{1}{1-a}\left[ -c +c\frac{r(N-n)}{n(N-1)}+aN\frac{rc(n-1)}{n(N-1)}
    \right] \\
    &= \frac{c}{(1-a)n(N-1)} \left[  -n(N-1) + r(N-n)+aNr(n-1)\right]\\
    &= \frac{cNr(n-1)}{(1-a)n(N-1)} \left[ a - \frac{n(N-1)-r(N-n)}{Nr(n-1)} \right].
\end{align*}
We define the threshold value of $a$
\begin{equation*}
a^*:=\begin{cases}
    \frac{c(N-1) + b}{Nb}\quad \text{in Donation Game},\\
    \frac{n(N-1)-r(N-n)}{Nr(n-1)}\quad \text{in Public Goods Game}.
\end{cases}
\end{equation*}
Then it follows that in both games $\textrm{sign}(\mathcal{K})=\textrm{sign}(1-a)\textrm{sign}(a-a^*)$. Note further that since \begin{equation*}
    \begin{cases}
        c(N-1)+b >0 \\
        [c(N-1)+b] - Nb = (c-b)(N-1)<0\\
        n(N-1) - r(N-n) = (n-r)(N-1)+r(n-1) > 0\\
        [n(N-1) - r(N-n)] - Nr(n-1) = n(N-1)(1-r) < 0
    \end{cases}
\end{equation*}
we deduce that $0 < a^* < 1$.

In the case of sufficient transfer, since $a > 1 > a^*$, $\mathcal{K}$ is always negative. We recall the threshold value $\beta^*$
\begin{equation*}
            \beta^* = -\frac{F^*}{\mathcal{K}}>0.
        \end{equation*}
(i) Then for $\beta \le \beta^*$, $SW(\theta)$ is non-decreasing on $(\theta_0, +\infty)$, where we recall that $\theta_0 = \frac{\log u_0 - \beta \delta}{\beta a}$.

(ii) For $\beta > \beta^*$, the number of changes of the sign of $dSW(\theta)/d\theta$ is at least two for all $N$ and there exists an $N_0$ such that the number of changes is exactly two for $N \le N_0$. As a consequence, for $N \le N_0$, there exist $\theta_1 < \theta_2$ such that, for $\beta > \beta^*$, $SW(\theta)$ is increasing when $\theta < \theta_1$, decreasing when $\theta_1 < \theta < \theta_2$ and increasing when $\theta > \theta_2$. 
\end{proof}
\subsection{For $a<1$}
\begin{proof}[Proof of the case of insufficient transfer ($a<1$)]

(i) In the case of insufficient transfer $a<1$, then by the definition of $a^*$, we have $\mathcal{K} > 0$ when $a > a^*$ and $\mathcal{K} < 0$ when $a < a^*$. For $\mathcal{K} > 0$, $SW(\theta)$ is strictly decreasing on $(\theta_0, +\infty)$.

(ii) With the threshold $\beta^*$ from \ref{beta_star}, for $\beta < \beta^*$, $SW(\theta)$ is non-decreasing on $(\theta_0, +\infty)$. Consequently:
        \[
        \max_{\theta \ge \theta_0} SW(\theta)=SW(\theta_0)
        \]

(iii) For $\beta > \beta^*$, the number of changes of the sign of $dSW(\theta)/d\theta$ is at least two for all $N$ and there exists an $N_0$ such that the number of changes is exactly two for $N \le N_0$. As a consequence, for $N \le N_0$, there exist $\theta_1 < \theta_2$ such that, for $\beta > \beta^*$, $SW(\theta)$ is decreasing when $\theta < \theta_1$, increasing when $\theta_1 < \theta < \theta_2$ and decreasing when $\theta > \theta_2$. Thus, for $N \le N_0$:
        \[
        \max_{\theta \ge \theta_0}SW(\theta)=\max\{SW(\theta_0),SW(\theta_2)\}  
        \]

(iv) Moreover, for sufficiently large $\beta$ and small $\theta$, $SW(\theta)$ is increasing as $\theta \to 0^+$ (shown in Lemma \ref{lem:lower_bound_c}).

\end{proof}
\subsection*{Proof of Theorem \ref{thm:localization}} \label{proof-thm}
In this section, we will prove Theorem \ref{thm:localization}. To this end, we first need some axillary results.  The following proposition presents some properties of the expected total social welfare $SW$.
\begin{proposition}[Basic properties of $SW(\theta)$]
\label{prop:sw_basic}
Recall the Social Welfare objective
\[
SW(\theta)= \frac{N^2}{2}\frac{f(x)}{g(x)} \big(\delta + N\Delta + (a-1)\theta\big) 
\]
where $x = \beta(a\theta+\delta)$. With $a < 1$, the following properties hold:
\begin{enumerate}
    \item \textbf{Positive interval:} $
    SW(\theta) \ge 0 \text{ ~ iff ~ } \theta \in I := \left[ 0, \frac{\delta+N\Delta}{1-a}\right]$.
    
    \item \textbf{Local extrema:} $SW'(\theta)=0$ if and only if:
    \begin{align}
         \frac{1-a}{\delta+N\Delta+(a-1)\theta}
    =-a\beta\frac{uP(u)}{f(x)g(x)}. \label{critical points}
    \end{align}   
\end{enumerate}
\end{proposition}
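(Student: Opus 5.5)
The plan is to read both claims directly off the product form
\[
SW(\theta)=\frac{N^2}{2}\,\frac{f(x)}{g(x)}\,L(\theta),\qquad L(\theta):=\delta+N\Delta+(a-1)\theta,
\]
using the positivity of $f/g$ and a quotient-rule computation of $SW'$.

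\textbf{Part 1 (positive interval).} We first observe that $f(x)>0$ and $g(x)>0$ for every real $x$. Indeed, $g$ is a finite sum of exponentials, and $f$ is $(1+e^x)$ times a bracket made only of positive terms, since $H_N>0$ and each term $e^{-jx}/j$ is positive. Hence $\operatorname{sign}SW(\theta)=\operatorname{sign}L(\theta)$.

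We already computed $\delta+N\Delta>0$ in both games: it equals $b-c$ for the Donation Game and $c(r-1)$ for the Public Goods Game. Since $a<1$, $L$ is a strictly decreasing affine function with $L(0)>0$. It vanishes exactly at $\theta=(\delta+N\Delta)/(1-a)>0$. Therefore, for $\theta\ge 0$, we get $SW(\theta)\ge 0$ if and only if $\theta\in I$.

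\textbf{Part 2 (critical points).} We differentiate exactly as in \eqref{dSW}, using $dx/d\theta=a\beta$ and the definition of $P$, namely $f'g-fg'=-uP(u)$. This gives
\[
SW'(\theta)=\frac{N^2}{2}\left[-a\beta\,\frac{uP(u)}{g^2(x)}\,L(\theta)+(a-1)\frac{f(x)}{g(x)}\right].
\]
Multiplying by the positive factor $2g^2/(N^2 fg)$ shows that $SW'(\theta)=0$ is equivalent to
\[
-a\beta\,\frac{uP(u)}{f(x)g(x)}\,L(\theta)=1-a .
\]

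To divide by $L(\theta)$ we need $L(\theta)\neq 0$. At the endpoint $L=0$ the left-hand side vanishes while $1-a>0$, so that point is never critical. Away from it, dividing by $L(\theta)$ gives exactly \eqref{critical points}. Each step is reversible, so the equivalence holds in both directions.

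\textbf{Main obstacle.} The only delicate point is keeping the sign convention of $P$ consistent with \eqref{eq: derivative Psi} and \eqref{dSW}. I will rederive $f'g-fg'=-uP$ directly from the definition $P(u)=(fg'-f'g)/u$ rather than trusting the intermediate displays, and I will carry out the division by $L$ only away from the endpoint of $I$.
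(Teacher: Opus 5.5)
Your proposal is correct and follows essentially the same route as the paper's proof. Positivity of $f$ and $g$ reduces the sign of $SW$ to the sign of the affine factor $\delta+N\Delta+(a-1)\theta$. The derivative formula with $f'g-fg'=-uP(u)$ then rearranges directly to the stated critical-point equation. Your explicit checks are small but welcome additions the paper leaves implicit: that $\delta+N\Delta>0$ in both games, that the equivalence in Part 1 needs $\theta\ge 0$, and that the zero of the affine factor is never a critical point.
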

\begin{proof}[Proof of Proposition \ref{prop:sw_basic}]
The first statement follows directly from the formula of $SW(\theta)$. In fact, since $f(x)$ and $g(x)$ are positive polynomials, the sign of $SW(\theta)$ is the same as the sign of factor $\delta + N\Delta + (a-1)\theta$.
For the second statement, we recall from \eqref{dSW}
that 
\[
    \frac{dSW(\theta)}{d\theta} = \frac{N^2(a-1)}{2g^2(x)} \left[ a\beta uP(u) \left( \frac{\delta + N\Delta}{1-a} - \theta \right) + f(x)g(x) \right],
    \]
    where $u=e^x$. From this we deduce \eqref{critical points}.
\end{proof}
In the following proposition, we give an alternative representation for $f(x)$ and $g(x)$.
\begin{proposition}
\label{prop:f and g}
The functions $f(x)$ and $g(x)$ can be expressed in the following forms:
\begin{equation*}
    f(x) = \sum_{j=0}^{N-1} \eta_j u^j, \quad\text{and}\quad g(x) = \sum_{j = 0}^{N-1} u^j,
\end{equation*}    
where $u(\theta) = e^{\beta(a\theta + \delta)} = e^{x}$ and
\[
\eta_0 = \frac{1}{N-1} + H_{N},\quad \eta_j = 2H_{N} + \frac{1}{N-j} + \frac{1}{N-j-1} \quad \text{for } 1 \le j \le N-2,\quad \text{and}\quad
    \eta_{N-1} = 1 + H_{N}.
\]    
\end{proposition}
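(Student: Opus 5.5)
The plan is to expand both $f$ and $g$ as polynomials in $u=e^x$ directly from the definitions \eqref{fx} and \eqref{gx}. For $g$ there is nothing to do: \eqref{gx} is already the geometric sum $\sum_{j=0}^{N-1}u^j$. All the work is in $f$. Write
\[
f(x)=(1+u)\Big[H_N\sum_{k=0}^{N-2}u^k+\sum_{j=1}^{N-1}\frac{u^{N-1-j}}{j}\Big],
\]
where we used $e^{(N-1)x}e^{-jx}=u^{N-1-j}$. Reindex the second sum with $k=N-1-j$, so that $k$ runs over $0,\dots,N-2$ and $1/j=1/(N-1-k)$. The bracket then becomes a single polynomial of degree $N-2$,
\[
B(u)=\sum_{k=0}^{N-2}c_k u^k,\qquad c_k=H_N+\frac{1}{N-1-k}.
\]
This is consistent with the coefficients appearing in \eqref{eq: def of P}, which is a useful sanity check.

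Next, multiply by $(1+u)$. The coefficient of $u^j$ in $(1+u)B(u)$ is $c_j+c_{j-1}$, with the conventions $c_{-1}=c_{N-1}=0$. This gives three cases.
\begin{itemize}
\item For $j=0$, the coefficient is $c_0=H_N+\frac{1}{N-1}$, which is $\eta_0$.
\item For $1\le j\le N-2$, the coefficient is $c_j+c_{j-1}=2H_N+\frac{1}{N-1-j}+\frac{1}{N-j}$, matching the stated $\eta_j$.
\item For $j=N-1$, the coefficient is $c_{N-2}=H_N+\frac{1}{1}=1+H_N$, which is $\eta_{N-1}$.
\end{itemize}
These together give $f(x)=\sum_{j=0}^{N-1}\eta_j u^j$.

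There is no real obstacle here; the only care needed is in the index bookkeeping. The two things to get right are the reindexing $j\mapsto N-1-j$ in the harmonic-type sum, and the boundary coefficients $j=0$ and $j=N-1$, where only one of the two shifted sequences contributes. I would also check the degenerate case $N=2$: there the middle range $1\le j\le N-2$ is empty, and the formula still gives $f=(1+u)(H_2+1)$, with $\eta_0=\eta_1=2$ since $H_2=1$.
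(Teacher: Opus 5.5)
Your proposal is correct and takes the same route as the paper. The paper says only that the expansion follows directly from the defining formulas for $f$ and $g$; your reindexing $k=N-1-j$, followed by reading off the coefficient $c_j+c_{j-1}$ of $u^j$ in $(1+u)B(u)$ with the boundary cases $j=0$ and $j=N-1$, is exactly the computation left implicit there.
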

\begin{proof}[Proof of Proposition \ref{prop:f and g}]
This follows directly from the formula of $f(x)$ and $g(x)$ in \eqref{fx}  and \eqref{gx} respectively. 
\end{proof}
Let $S(u) = \frac{f(x)}{g(x)}$ and for convenience, $R(\theta) = S(u(\theta))$. Specifically, let:
\[
S(u) = \frac{\sum_{j=0}^{N-1} \eta_j u^j}{\sum_{j=0}^{N-1} u^j} \qquad \text{ and } \qquad R(\theta) = \frac{\sum_{j=0}^{N-1} \eta_j e^{j\beta(a\theta+\delta)}}{\sum_{j=0}^{N-1} e^{j\beta(a\theta+\delta)}}.
\]
\begin{lemma} \label{obs_eta0}
$S(u) > \eta_0$ for all $u > 0$.
\end{lemma}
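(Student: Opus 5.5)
The plan is to prove $S(u)>\eta_0$ by showing that every coefficient of the numerator of $S$ except the constant one is strictly larger than $\eta_0$. We start from the representation in Proposition \ref{prop:f and g}, which writes $S(u)$ as a weighted average of the coefficients $\eta_0,\dots,\eta_{N-1}$ with positive weights $u^j/\sum_k u^k$. We then write
\[
S(u)-\eta_0=\frac{\sum_{j=0}^{N-1}(\eta_j-\eta_0)u^j}{\sum_{j=0}^{N-1}u^j}=\frac{\sum_{j=1}^{N-1}(\eta_j-\eta_0)u^j}{\sum_{j=0}^{N-1}u^j}.
\]
The denominator is positive for $u>0$. So it suffices to show $\eta_j-\eta_0>0$ for every $1\le j\le N-1$, since then the numerator is a sum of strictly positive terms whenever $u>0$ and $N\ge 2$.

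For the interior coefficients $1\le j\le N-2$ we have
\[
\eta_j-\eta_0=H_N+\frac{1}{N-j}+\frac{1}{N-j-1}-\frac{1}{N-1}.
\]
Since $N-j-1\le N-2<N-1$, the term $\frac{1}{N-j-1}$ already exceeds $\frac{1}{N-1}$. The remaining terms are positive, so this difference is positive.

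For the last coefficient,
\[
\eta_{N-1}-\eta_0=1-\frac{1}{N-1}.
\]
This is nonnegative, and strictly positive for $N\ge 3$. In the degenerate case $N=2$ there are no interior terms and $\eta_1=\eta_0$, so $S\equiv\eta_0$. We will therefore state that we assume $N\ge3$, the standing nontrivial case also used in Theorem \ref{thm:eff}. Alternatively, for $N=2$ the inequality should be read as non-strict.

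The only delicate point is this boundary case $N=2$ and the edge index $j=N-1$; everything else is a direct coefficient comparison. Strict positivity of the sum for $N\ge 3$ follows because, for example, the $j=N-1$ term $(1-\frac{1}{N-1})u^{N-1}$ is strictly positive for $u>0$.
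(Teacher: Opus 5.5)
Your proof is correct and follows the same route as the paper, which also observes $\eta_j>\eta_0$ for $1\le j\le N-1$ and concludes $f>\eta_0 g$, hence $S(u)>\eta_0$. Your check that $\eta_{N-1}-\eta_0=1-\frac{1}{N-1}$ vanishes when $N=2$ (so $S\equiv\eta_0$ there) is a correct refinement: the paper's strict inequality tacitly requires $N\ge 3$.
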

\begin{proof}[Proof of Lemma \ref{obs_eta0}]
It follows from the formula of $\eta_j$ that $\eta_j > \eta_0$ for all $1 \le j \le N-1$. Therefore, we have $f(u) > \sum_{j=0}^{N-1} \eta_0 u^j = \eta_0 g(u)$, which implies $S(u) > \eta_0$.
\end{proof}
\begin{lemma} \label{obs_lipschitz}
The function $S(u)$ is Lipschitz continuous and strictly increasing for all $u \in (0, 1)$.
\end{lemma}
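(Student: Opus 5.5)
The plan is to reduce the monotonicity claim to the sign property (P2) of the polynomial $P$ from \cite{duonghan2021cost}, and to obtain Lipschitz continuity from boundedness of the derivative on a compact interval.

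\textbf{Step 1: express $dS/du$ through $P$.} By Proposition \ref{prop:f and g}, $f$ and $g$ are polynomials in $u=e^x$, with $g(u)=\sum_{j=0}^{N-1}u^j\ge 1$ for $u\ge 0$. Hence $S=f/g$ is a rational function with no poles on $[0,\infty)$, and it is $C^\infty$ there. Since $du/dx=u$, the chain rule gives
\[
\frac{dS}{du}=\frac{1}{u}\,\frac{d}{dx}\Big(\frac{f(x)}{g(x)}\Big)=\frac{f'(x)g(x)-f(x)g'(x)}{u\,g^2(x)}.
\]
By the definition of $P$ preceding \eqref{eq: def of P}, $f'(x)g(x)-f(x)g'(x)=-uP(u)$. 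Therefore
\[
\frac{dS}{du}=-\frac{P(u)}{g(u)^2}.
\]
I will check this identity against the coefficient form: the numerator equals
\[
\tfrac12\sum_{j,k}(j-k)(\eta_j-\eta_k)u^{j+k-1},
\]
and the two expressions should agree. This avoids depending on sign conventions in \eqref{eq: derivative Psi}.

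\textbf{Step 2: sign.} By (P2), $P$ has exactly one positive root $u_0$, with $u_0>1$, and $\mathrm{sign}(P(u))=\mathrm{sign}(u-u_0)$. So $P(u)<0$ for every $u\in(0,1]$, which gives $dS/du>0$ there. Hence $S$ is strictly increasing on $(0,1)$.

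\textbf{Step 3: Lipschitz.} $dS/du=-P/g^2$ is continuous on the compact interval $[0,1]$, because $P$ is a polynomial and $g\ge 1$. So it is bounded by some $L=\max_{[0,1]}|P|/g^2$, which can be made explicit, for instance $L\le \max_{[0,1]}|P|$. The mean value theorem then gives $|S(u)-S(v)|\le L|u-v|$ on $(0,1)$.

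\textbf{Main obstacle.} The only delicate point is the identification in Step 1: that the derivative numerator is exactly $-uP(u)$ with the stated sign, including the conversion from $x$-derivatives to $u$-derivatives. If the sign convention turned out to be reversed, I would instead argue directly from the coefficients. Monotonicity is not automatic there, because $\eta_0<\eta_1<\dots<\eta_{N-2}$ but $\eta_{N-1}<\eta_{N-2}$, so some terms of the symmetrized sum are negative. In that case I would show that on $u<1$ the positive low-degree terms, for example $(\eta_1-\eta_0)u^0$, dominate the negative terms involving $\eta_{N-1}$, which carry high powers of $u$. That direct route is harder to control for $u$ close to $1$, which is why the reduction to (P2) is preferable.
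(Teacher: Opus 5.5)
Your proposal is correct and follows essentially the same route as the paper. The chain rule gives $dS/du=-P(u)/g(u)^2$, and property (P2) with $u_0>1$ gives $P<0$ on $(0,1)$, hence strict monotonicity. For Lipschitz continuity, you bound the derivative on the closed interval $[0,1]$, where $g\ge 1$; the paper only remarks that $dS/du$ is continuous on $(0,1)$, so your step is slightly more careful, and it is what justifies the constant $L$ on $[0,1]$ that the paper uses later.
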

\begin{proof}[Proof of Lemma \ref{obs_lipschitz}]
By the chain rule, we have:
\begin{align*}
\frac{dS}{du} &= \frac{dS}{dx} \frac{dx}{du} = \frac{-uP(u)}{g(x)^2} \frac{1}{u} = \frac{-P(u)}{g(x)^2}
\end{align*}

Thus, $dS/du$ is continuous on $(0, 1)$. Furthermore, the sign of $dS/du$ matches the sign of $-P(u)$. Since $P(u) < 0$ for all $u \in (0, u_0)$ and $u_0 > 1$, $S(u)$ is strictly increasing for all $u \in (0, 1)$.
\end{proof}

Let $L$ be Lipschitz bound for $S(u)$ in the interval $[0, 1]$.  The property stated in the following Lemma yields the lower bound for the search interval of Algorithm \ref{alg:pgg_main}

\begin{lemma}[Left-side dominance by $SW(0)$]
\label{lem:lower_bound_c}

There exists a constant $r_l > 1$, independent of $\beta$, such that, for all sufficiently large $\beta$,  $SW(\theta) \le SW(0)$ for all $\theta \in [0, \mu]$ where $\mu = \frac{-\delta}{a} - \frac{\log r_l}{a\beta}$.
\end{lemma}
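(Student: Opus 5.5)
The plan is a direct comparison of $SW(\theta)$ with $SW(0)$. I use the factorisation $SW(\theta)=\frac{N^2}{2}S(u(\theta))\,\ell(\theta)$ with $\ell(\theta):=A-(1-a)\theta$ and $A:=\delta+N\Delta>0$. Two effects pull in opposite directions, and the constant $r_l$ is chosen so that the second dominates:
\begin{itemize}
\item on $[0,\mu]$ the factor $S(u(\theta))$ can only increase slightly above $S(u(0))$, since $u(\theta)\le 1/r_l<1$;
\item the linear factor $\ell(\theta)$ strictly decreases, because we are in the regime $a<1$ of Proposition \ref{prop:sw_basic}.
\end{itemize}

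\textbf{Reduction.} First, any $\theta\in[0,\mu]$ with $\ell(\theta)\le 0$ is trivial. There $SW(\theta)\le 0<SW(0)$, using the positive interval of Proposition \ref{prop:sw_basic}. So assume $\ell(\theta)>0$. Writing $u_0':=u(0)=e^{\beta\delta}$, the inequality $SW(\theta)\le SW(0)$ is equivalent to
\[
\bigl(S(u(\theta))-S(u_0')\bigr)\,\ell(\theta)\;\le\; S(u_0')\,(1-a)\,\theta .
\]
The right-hand side is at least $\eta_0(1-a)\theta$ by Lemma \ref{obs_eta0}. On the left, $\ell(\theta)\le A$. Moreover $u(\theta)\le u(\mu)=1/r_l<1$ for $\theta\le\mu$, so Lemma \ref{obs_lipschitz} and the Lipschitz constant $L$ of $S$ on $[0,1]$ give
\[
0\le S(u(\theta))-S(u_0')\le L\,(u(\theta)-u_0').
\]
Here $L$ depends only on $N$ and the $\eta_j$, not on $\beta$. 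It therefore suffices to show
\[
L A\,(u(\theta)-u_0')\le \eta_0(1-a)\theta\quad\text{on }[0,\mu].
\]

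\textbf{Key step.} The main obstacle is the behaviour near $\theta=0$. There the right-hand side vanishes linearly in $\theta$, so a crude bound $u(\theta)\le 1/r_l$ is useless. I handle this with convexity of $\theta\mapsto u(\theta)=e^{\beta\delta}e^{a\beta\theta}$. On $[0,\mu]$ the convex function lies below its chord:
\[
u(\theta)-u_0'\le \frac{\theta}{\mu}\bigl(u(\mu)-u_0'\bigr)\le \frac{\theta}{\mu\, r_l}.
\]
The required inequality then reduces to the $\theta$-free condition
\[
\frac{LA}{\mu\,r_l}\le \eta_0(1-a).
\]

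\textbf{Choice of constants.} Since $\mu=-\delta/a-\log r_l/(a\beta)\to -\delta/a>0$ as $\beta\to\infty$, we have $\mu\ge -\delta/(2a)$ for all sufficiently large $\beta$ (this also ensures $\mu>0$). It therefore suffices to choose
\[
r_l>\max\Bigl\{1,\;\frac{2aLA}{(-\delta)\,\eta_0\,(1-a)}\Bigr\},
\]
which is independent of $\beta$ because $L$, $A$, $\delta$, $\eta_0$ and $a$ are. Combining the three steps gives $SW(\theta)\le SW(0)$ on $[0,\mu]$ for all large $\beta$.
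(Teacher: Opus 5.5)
Your proof is correct. It handles the small-$\theta$ end of the interval differently from the paper, and more economically.

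The paper splits $[0,\mu]$ into two pieces, $I_0=[0,1/(a\beta)]$ and $I_1=[1/(a\beta),\mu]$.
\begin{itemize}
\item On $I_0$ it shows $SW'<0$, using the derivative bound $R'(\theta)\le a\beta e^{\beta\delta+1}L$ together with $\beta e^{\beta\delta}\to 0$.
\item On $I_1$ it uses the same direct comparison you do, but with the crude estimate $u(\theta)-u(0)<u(\theta)=1/\varepsilon$. It then argues that $\varepsilon(-\delta-\log\varepsilon/\beta)$ is increasing in $\varepsilon$.
\end{itemize}
The crude estimate does not vanish as $\theta\to 0$. That is why the paper needs a separate derivative argument near the origin.

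Your chord bound avoids this. Convexity of $\theta\mapsto e^{a\beta\theta}$ gives $u(\theta)-u(0)\le \theta/(\mu r_l)$. This is linear in $\theta$ and vanishes at $0$, exactly like the right-hand side $\eta_0(1-a)\theta$. So the whole interval reduces to the single $\theta$-free condition $LA/(\mu r_l)\le \eta_0(1-a)$. This buys you two things:
\begin{itemize}
\item an explicit $r_l$ and an explicit threshold $\beta\ge 2\log r_l/(-\delta)$;
\item a built-in factor-$2$ slack from $\mu\ge -\delta/(2a)$. This sidesteps a small gap in the paper's choice $-\delta\varepsilon_0\ge C$. At $\varepsilon=\varepsilon_0$ the paper's left side is $\varepsilon_0(-\delta-\log\varepsilon_0/\beta)<-\delta\varepsilon_0$, so some strict slack is actually needed there.
\end{itemize}

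In exchange, the paper's two-piece route shows $SW'<0$ on $[0,1/(a\beta)]$, i.e.\ monotone decrease near $0$. Your argument yields only the pointwise comparison with $SW(0)$ on $[0,\mu]$. That is all the lemma asserts and all that its use in Theorem \ref{thm:localization} requires.
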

To prove lemma \ref{lem:lower_bound_c}, we divide $[0, \mu]$ into two parts and use the following two claims.
\begin{claim} \label{claim:decrease_near_zero}
For sufficiently large $\beta$, $SW(\theta)$ is strictly decreasing on the interval $I_0 := \left[0, \frac{1}{a\beta}\right]$.
\end{claim}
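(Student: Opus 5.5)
The plan is to show directly that $dSW/d\theta<0$ on $I_0$ once $\beta$ is large, using the explicit derivative and the fact that on $I_0$ the variable $u=e^{x}$ is exponentially small in $\beta$. As in Proposition \ref{prop:sw_basic}, I work in the regime $a<1$. Differentiating \eqref{eq:SW-theta-final}, and using $\frac{d}{dx}\frac{f}{g}=-\frac{uP(u)}{g^2(x)}$ from \eqref{eq: derivative Psi}, I would write
\[
\frac{dSW}{d\theta}=\frac{N^2}{2}\left[-a\beta\,\frac{uP(u)}{g^2(x)}\bigl(\delta+N\Delta+(a-1)\theta\bigr)-(1-a)\,S(u)\right].
\]
The second term is strictly negative and bounded away from zero uniformly in $\beta$. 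The first term will be shown to vanish as $\beta\to\infty$, uniformly on $I_0$.

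\textbf{Step 1: locate $u$.} For $\theta\in I_0=[0,\frac{1}{a\beta}]$ we have $x=\beta(a\theta+\delta)\in[\beta\delta,\beta\delta+1]$. Since $\delta<0$ in both games, this gives
\[
0<u\le e\cdot e^{\beta\delta}\longrightarrow 0 \quad (\beta\to\infty).
\]
In particular $u\in(0,1)$ for large $\beta$.

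\textbf{Step 2: bound the first term.} Since $P$ is a polynomial, $|P|$ is bounded on $[0,1]$ by some $M$ independent of $\beta$. Also $g(x)=\sum_{j}u^j\ge 1$. On $I_0$ the affine factor satisfies $0\le \delta+N\Delta+(a-1)\theta\le \delta+N\Delta$, because $(a-1)\theta\le 0$ and $\delta+N\Delta>0$ (this is $b-c$ or $c(r-1)$, as computed in the $a=1$ proof). It is nonnegative for large $\beta$ since $I_0\subset I$ by Proposition \ref{prop:sw_basic}. Hence the first term is at most
\[
a\beta\, e\, e^{\beta\delta} M(\delta+N\Delta),
\]
which tends to $0$ because $\beta e^{\beta\delta}\to0$. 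The sign of $P$ is irrelevant here, since only an absolute bound is used.

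\textbf{Step 3: bound the second term and conclude.} By Lemma \ref{obs_eta0}, $S(u)>\eta_0>0$, so $-(1-a)S(u)<-(1-a)\eta_0$, a fixed negative constant. Choose $\beta$ large enough that
\[
a\beta e^{1+\beta\delta}M(\delta+N\Delta)<(1-a)\eta_0 .
\]
Then $dSW/d\theta<0$ on all of $I_0$, which proves the claim.

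The only mildly delicate point is the uniformity: the constants $M$ and $\eta_0$ must not depend on $\beta$. This holds because they come from a fixed polynomial on $[0,1]$ and from explicit harmonic-type coefficients. The rest is routine.
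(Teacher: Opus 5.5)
Your proposal is correct and is essentially the paper's argument. The paper writes $SW'(\theta)$ in terms of $R(\theta)=S(u(\theta))$, bounds $R'(\theta)=u'(\theta)S'(u)\le a\beta e^{\beta\delta+1}L$ with $L$ a Lipschitz constant of $S$ on $[0,1]$, and then uses $R>\eta_0$ from Lemma \ref{obs_eta0} together with $\beta e^{\beta\delta}\to 0$; your bound $M \ge |P|/g^2 = |S'|$ plays exactly the role of $L$. The differences are cosmetic: you argue directly rather than by contradiction, and you use an absolute-value bound instead of the positivity of $R'$, which incidentally means you never rely on the sign convention for $P$.
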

\begin{proof}[Proof of Claim \ref{claim:decrease_near_zero}]
First, we make sure that $\beta$ is large enough such that $u(\frac{1}{a\beta}) = e^{\beta\delta+1} < 1$, so that $S'(u(\theta)) > 0$, and so is $R'(\theta)=u'(\theta)S'(u(\theta))$, for all $\theta \in I_0$.

We proceed by contradiction. Suppose there exists a point $\tilde{\theta} \in I_0$ such that $SW'(\tilde{\theta}) \ge 0$. Differentiating the objective function $SW(\theta)$ yields:
\[ 
SW'(\theta) = \frac{N^2}{2}\left( (\delta + N\Delta)R'(\theta) - (1-a)\theta R'(\theta) - (1-a)R(\theta) \right).
\]
For $SW'(\tilde{\theta}) \ge 0$ to hold, we require:
\[
(\delta + N\Delta)R'(\tilde{\theta}) - (1-a)\tilde{\theta} R'(\tilde{\theta}) \ge (1-a)R(\tilde{\theta})
\]

We know $a < 1$, and $u'(\theta) = a\beta e^{\beta(a\theta + \delta)} \le a\beta e^{\beta \delta + 1}$ for all $\theta \in I_0$. Combining these facts with Lemma \ref{obs_eta0} and Lemma \ref{obs_lipschitz}, the inequality implies:
\[
(\delta + N\Delta)a\beta e^{\beta\delta + 1} L \geq (\delta + N\Delta)R'(\tilde{\theta}) - (1-a)\tilde{\theta} R'(\tilde{\theta}) \ge (1-a) R(\tilde{\theta}) > (1-a)\eta_0.
\]
Since $\delta < 0$, we have  $\lim_{\beta \to \infty} \beta e^{\beta\delta} = 0$. Thus, for sufficiently large $\beta$, the above inequality fails. Thus, we conclude that for sufficiently large $\beta$, it holds that $SW'(\theta) < 0$ for all $\theta \in I_0$. This implies $SW(\theta) < SW(0)$ for all $\theta \in I_0$ for sufficiently large $\beta$.
\end{proof}

\begin{claim} \label{claim:bound_away_from_zero}
There exists an $\varepsilon_0 > 1$, independent of $\beta$, such that for all $\theta \in I_1:=\left[\frac{1}{a\beta}, \frac{-\delta}{a} - \frac{\log \varepsilon_0}{a\beta}\right]$, $SW(\theta) \le SW(0)$.
\end{claim}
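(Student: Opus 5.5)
The plan is to compare $SW(\theta)$ with the crude lower bound on $SW(0)$ coming from Lemma \ref{obs_eta0}. On $I_1$ the variable $u=e^{\beta(a\theta+\delta)}$ is small, so $R(\theta)$ is only slightly above $\eta_0$. Meanwhile the linear factor $\delta+N\Delta-(1-a)\theta$ has dropped by an amount proportional to $\theta$, and this drop should beat the gain in $R$. Throughout, $a<1$ and we write $K_0:=\delta+N\Delta>0$, as computed in the $a=1$ case.

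\emph{Step 1 (reduction and lower bound for $SW(0)$).} If $K_0-(1-a)\theta<0$, then $SW(\theta)<0<SW(0)$ by Proposition \ref{prop:sw_basic}, so we may assume this factor is nonnegative. By Lemma \ref{obs_eta0},
\[
SW(0)=\tfrac{N^2}{2}S(e^{\beta\delta})K_0\ \ge\ \tfrac{N^2}{2}\eta_0K_0 .
\]

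\emph{Step 2 (upper bound for $SW(\theta)$).} Since $f$ and $g$ are polynomials in $u$ with $S(0)=\eta_0$, Lemma \ref{obs_lipschitz} with Lipschitz constant $L$ on $[0,1]$ gives $S(u)\le \eta_0+Lu$ for $u\in[0,1]$. On $I_1$ we have $u\le 1/\varepsilon_0<1$. Hence
\[
\tfrac{2}{N^2}SW(\theta)\le(\eta_0+Lu)\bigl(K_0-(1-a)\theta\bigr)\le \eta_0K_0+LK_0u-\eta_0(1-a)\theta .
\]
It therefore suffices to show
\[
LK_0\,e^{\beta\delta}e^{t}\le \eta_0(1-a)\,\frac{t}{a\beta},\qquad t:=a\beta\theta\in\bigl[1,\,-\beta\delta-\log\varepsilon_0\bigr].
\]

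\emph{Step 3 (worst case at the right endpoint).} The map $t\mapsto e^t/t$ is increasing for $t\ge1$, which is exactly why the left endpoint of $I_1$ is $1/(a\beta)$. So the inequality only needs checking at $t=-\beta\delta-\log\varepsilon_0$. There it reads
\[
\frac{LK_0}{\varepsilon_0}\le \eta_0(1-a)\Bigl(\frac{-\delta}{a}-\frac{\log\varepsilon_0}{a\beta}\Bigr).
\]
Choose
\[
\varepsilon_0:=\max\Bigl\{2,\ \frac{2aLK_0}{\eta_0(1-a)|\delta|}\Bigr\},
\]
which is independent of $\beta$. Then for all $\beta$ large enough that $\log\varepsilon_0/\beta\le|\delta|/2$, the right-hand side is at least $\eta_0(1-a)|\delta|/(2a)\ge LK_0/\varepsilon_0$. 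Combining Steps 1–3 gives $SW(\theta)\le SW(0)$ on $I_1$. If $I_1$ is empty the claim is vacuous.

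The main obstacle is Step 3. One has to see that the exponential growth of $u$ in $\theta$ is controlled on the whole interval only because $I_1$ stops $\log\varepsilon_0/(a\beta)$ short of $\theta_\infty$; the monotonicity of $e^t/t$ reduces everything to that endpoint. A secondary point to verify is that "independent of $\beta$" is compatible with requiring $\beta$ sufficiently large. That requirement is already present in Claim \ref{claim:decrease_near_zero} and Lemma \ref{lem:lower_bound_c}, so it costs nothing.
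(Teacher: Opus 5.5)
Your proposal is correct and follows essentially the paper's argument. Both proofs use Lemma \ref{obs_eta0} and the Lipschitz bound of Lemma \ref{obs_lipschitz} to reduce the claim to the key inequality $LK_0\,u(\theta)\le \eta_0(1-a)\theta$, then use monotonicity to check only the right endpoint; your monotonicity of $e^t/t$ for $t\ge 1$ is exactly the paper's monotonicity of $\varepsilon(-\delta-\log\varepsilon/\beta)$ for $\varepsilon\le e^{-\beta\delta-1}$. The only differences are that you argue directly rather than by contradiction, and you use $S(u)\le \eta_0+Lu$ instead of $R(\tilde\theta)-R(0)\le L(u(\tilde\theta)-u(0))$.

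Your factor-$2$ choice of $\varepsilon_0$, together with $\log\varepsilon_0/\beta\le|\delta|/2$, is slightly more careful than the paper's condition $-\delta\varepsilon_0\ge \frac{(\delta+N\Delta)La}{(1-a)\eta_0}$. That condition on its own does not absorb the $-\varepsilon_0\log\varepsilon_0/\beta$ term at the endpoint.
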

\begin{proof}[Proof of Claim \ref{claim:bound_away_from_zero}]
First, note that this interval is valid for sufficiently large $\beta$ since $\frac{1}{a\beta} \le \frac{-\delta}{a} - \frac{\log \varepsilon_0}{a\beta}$, for $\beta \ge \frac{1 + \log \varepsilon_0}{-\delta}$.

Suppose by contradiction that there exists a $\tilde{\theta} \in I_1$ such that $SW(\tilde{\theta}) > SW(0)$. This is equivalent to:
\[
(\delta + N\Delta)R(\tilde{\theta}) - (1-a) \tilde{\theta} R(\tilde{\theta}) > (\delta + N\Delta) R(0).
\]
Rearranging the terms yields:
\begin{align}
(\delta + N\Delta)(R(\tilde{\theta}) - R(0)) > (1-a)\tilde{\theta} R(\tilde{\theta}),\label{ineq:lower_bound} 
\end{align}

which implies $R(\tilde{\theta}) - R(0) > 0$. Applying Lemma \ref{obs_eta0} and Lemma \ref{obs_lipschitz} yields:
\begin{align}
    (1-a)\tilde{\theta} \eta_0 < (1-a)\tilde{\theta} R(\tilde{\theta}) &< (\delta + N\Delta)(R(\tilde{\theta}) - R(0)) \notag \\
    &< (\delta + N\Delta)L(u(\tilde{\theta}) - u(0)) < (\delta + N\Delta)Lu(\tilde{\theta}).\label{ineq:3sides}
\end{align}
By making the substitution $\tilde{\theta} = \frac{-\delta}{a} - \frac{\log \varepsilon}{a\beta}$ (where $\varepsilon_0 \le \varepsilon \le e^{-\beta\delta - 1}$), we have $ u(\tilde{\theta}) = \varepsilon^{-1}$. Rearranging the left-most and right-most sides of the inequality, we have:
\[
\varepsilon \left(-\delta- \frac{\log \varepsilon}{\beta}\right) < \frac{(\delta + N\Delta)La}{(1-a)\eta_0}.
\]
Observe that the left-hand side is an increasing function of $\varepsilon$, for all $1 \le \varepsilon \le e^{-\beta\delta - 1}$. Therefore, by choosing an $\varepsilon_0$ strictly independent of $\beta$ such that $-\delta \varepsilon_0 \ge \frac{(\delta + N\Delta)La}{(1-a)\eta_0}$, and a $\beta$ large enough such that $\varepsilon_0$ is valid (i.e. $1 \le \varepsilon_0 \le e^{-\beta\delta-1}$) the inequality fails for all valid $\varepsilon \ge \varepsilon_0$, a contradiction. Hence, proving our claim.
\end{proof}
\begin{proof}[Proof of Lemma \ref{lem:lower_bound_c}]
Lemma \ref{lem:lower_bound_c} follows by combining Claim \ref{claim:decrease_near_zero} and Claim \ref{claim:bound_away_from_zero} and
setting $r_l = \varepsilon_0$.
\end{proof}


We are now in the position to prove Theorem \ref{thm:localization}.
\begin{proof}[Proof of Theorem \ref{thm:localization}]

We establish the upper and lower bound of the search interval.
\textit{For the upper bound:}  Since the polynomial $P(u)$ has a unique root $u_0 > 1$, from \eqref{critical points}, for every $\theta \in I$, the LHS is strictly positive and strictly increasing with respect to $\theta$, while the RHS is strictly negative for all $u > u_0$. Therefore, any extrema $\theta^* \in I$ (i.e. solution to equation \eqref{critical points}) must yield $u \in (0, u_0)$. This implies that $ e^{\beta(a\theta^* + \delta)} < u_0$
which is equivalent to
\begin{equation*}
\theta^* - \left( \frac{-\delta}{a} \right) = \theta^* - \theta_{\infty} < \frac{\log u_0}{a\beta}.  
\end{equation*}
This gives the upper bound.

The lower bound is a consequence of Lemma \ref{lem:lower_bound_c}. In fact, since on $[0, \mu]$, $SW(\theta)\leq SW(0)$), any meaningful extrema $\theta^*$ must satisfy $\theta^* > \mu$, or equivalently
\[
\theta^* - \frac{-\delta}{a} > -\frac{\log r_l}{a\beta}.
\]
This establishes the lower bound of $\theta^*$ and completes the proof of Theorem \ref{thm:localization}.
\end{proof}
\subsection{Influence of Selection Intensity (\texorpdfstring{$ \beta $}))} \label{beta_theory}
The intensity of selection $\beta$ also plays a crucial role in deciding the overall stability and long\text{-}term optimal structure of the system's Social Welfare ($SW$).

We recall from \eqref{eq:SW-theta-final} that
\[
SW(\theta)=\frac{N^2}{2}\frac{f(x)}{g(x)}(\delta+N\Delta+(a-1)\theta),
\]
where $x=\beta(a\theta+\delta)$.
\subsection*{Neutral Selection Limit ($\beta \to 0^+$)}
The weak selection (neutral) limit corresponds to $\beta\to 0$, where payoff differences have only a small effect on strategy adoption probabilities. In this regime, updates are nearly random and the dynamics are dominated by neutral drift, with selection acting only as a weak perturbation. 

It follows from the above formula for the total expected social welfare and Proposition \ref{prop:f and g} that
\[ 
\lim_{\beta\to 0} SW(\theta)=\frac{N^2}{2}\frac{f(0)}{g(0)}(\delta+N\Delta+(a-1)\theta)=\frac{N^2}{2}\frac{\sum_{j=0}^{N-1}\eta_j}{N}(\delta+N\Delta+(a-1)\theta)=N^2 H_N (\delta+N\Delta+(a-1)\theta).
\]

\subsection*{Strong Selection Limit ($\beta \to +\infty$)}
By a straightforward adaption of the proof of \cite[Proposition 1.12]{duonghan2021cost} we have
\begin{equation*}
    \lim_{\beta \to +\infty}\frac{f(x)}{g(x)}=\begin{cases}
      2 H_N\quad\text{for}\quad \theta=-\frac{\delta}{a},\\
      H_N+1\quad\text{for}\quad \theta>-\frac{\delta}{a},\\
      H_N+\frac{1}{N-1}\quad\text{for}\quad \theta<-\frac{\delta}{a}.
    \end{cases}    
    \end{equation*}
It immediately follows that
\begin{equation*}
    \lim_{\beta \to +\infty}SW(\theta)=\begin{cases}
      N^2 H_N (\delta+N\Delta+(a-1)\theta)\quad\text{for}\quad \theta=-\frac{\delta}{a},\\
      \frac{N^2}{2}(H_N+1)(\delta+N\Delta+(a-1)\theta)\quad\text{for}\quad \theta>-\frac{\delta}{a},\\
      \frac{N^2}{2}\Big(H_N+\frac{1}{N-1}\Big) (\delta+N\Delta+(a-1)\theta)\quad\text{for}\quad \theta<-\frac{\delta}{a}.
    \end{cases}    
    \end{equation*}

\section{Proof of the  main results: punishment}
\label{sec: appendix2}
\subsection{Social welfare in institutional punishment}
\label{sec: punishment calculations}
In our model where the institution punishes Defectors (instead of rewarding Cooperators in above sections), the total payoff received in the population with $i$ cooperators and incentive efficiency $\hat{a}$, is: 
\[
\hat{P}_i = i\,\Pi_C(i) + (N - i)\big[ \Pi_D(i) - \hat{a}\theta \big].
\]
Combining with the total institutional cost $\hat{\theta}_i = (N - i)\theta$ yields aggregate the social welfare in state $S_i$ as:
\begin{equation*}
\widehat{SW}_i(\theta) = \hat{P}_i - \hat{\theta}_i =  i\,\Pi_C(i) + (N - i)\big[ \Pi_D(i) - (1+\hat{a})\theta \big]
\end{equation*}
Which then by carrying out the same algebraic process as the reward case, yields the Social Welfare function:
\begin{align}
\nonumber
\widehat{SW}(\theta) &=\frac{1}{2}\sum_i \widehat{SW}_i(\theta)(n_{1,i} + n_{N-1,i}) \\
&= \frac{1}{2}\sum_{i}( i(\delta + N\Delta) - (N - i)(1+\hat{a})\theta ) 
(n_{1,i} + n_{N-1,i}) \nonumber \\ 
\nonumber
&= \frac{1}{2}(\delta + N\Delta) \sum_{i}i (n_{1,i} + n_{N-1,i}) - \frac{1}{2}(1+\hat{a})\theta\sum_{i}(N-i) (n_{1,i} + n_{N-1,i})
\\
&= \frac{N^2}{2}\frac{f(x)}{g(x)} \big(\delta + N\Delta) - \frac{N^2}{2}\frac{\hat{f}(x)}{g(x)}(1+\hat{a})\theta, \label{eq:swp}
\end{align}
where as derived in \cite{duonghan2021cost}
\[
\hat{f}(x) = (1 + e^x)\left[ \left(1 + e^x + \cdots + e^{(N-2)x}\right) H_N 
+ \sum_{j=1}^{N-1} \frac{e^{(j-1)x}}{j} \right]
\]
Moreover, one can write $\hat{f}(x)$ in the following form:
\[
\hat{f}(x) = \sum_{j=0}^{N-1} \hat{\eta}_j u^j,
\]
where $u = e^{\beta(\hat{a}\theta + \delta)} = e^x$, and $\hat{\eta}_j = \eta_{N-1-j}$ for all $0 \le j \le N-1$.

\subsection{Algorithm \ref{alg:pgg_main} for punishment}
We claim that Algorithm \ref{alg:pgg_main} also works for the punishment case. We will justify the our algorithm via the following theorem, which is the counterpart of Theorem \ref{thm:localization} in the reward case.

The proof will be carried out in a similar manner to the reward case. That is, by providing an upper and lower bound for the quantity $|\theta^* - \hat{\theta}_{\infty}|$.

Firstly, by applying the chain rule, we have:
\[
\frac{d\widehat{SW}}{d\theta} = \frac{d\widehat{SW}}{dx} \frac{dx}{d\theta}
\]

Since $x = \beta(\hat{a}\theta + \delta)$, we have $\frac{dx}{d\theta} = \hat{a}\beta$. On the other hand, from equation \ref{eq:swp}, we have:
\begin{align*}
    \frac{d\widehat{SW}}{dx} &= \frac{N^2}{2} (\delta + N\Delta) \frac{f'(x)g(x) - f(x)g'(x)}{g(x)^2} - \frac{N^2}{2} (1+\hat{a}) \left( \frac{\hat{f}'(x)g(x) - \hat{f}(x)g'(x)}{g(x)^2} \theta + \frac{\hat{f}(x)}{g(x)} \frac{1}{\hat{a}\beta} \right) \\
    &= \frac{N^2(1+\hat{a})}{2g(x)^2} \left( - \frac{\delta + N\Delta}{1+b}uP(u) + \theta u \hat{P}(u) - \frac{\hat{Q}(u)}{\hat{a}\beta} \right)
\end{align*}

where $u\hat{P}(u) = \hat{f}(x)g'(x) - \hat{f}'(x)g(x)$ and $\hat{Q}(u) = \hat{f}(x)g(x)$ are polynomials defined in \cite{duonghan2021cost}. Therefore, we have:

\begin{equation}
\frac{d\widehat{SW}}{d\theta} = \frac{N^2 \hat{a} \beta (1+\hat{a})}{2g(x)^2} \left( \theta u \hat{P}(u) - \frac{\delta + N\Delta}{1+\hat{a}}uP(u)  - \frac{\hat{Q}(u)}{\hat{a}\beta} \right) \label{eq:swp_derivative}
\end{equation}

\textbf{Upper Bound} \\
We will prove the following lemma:
\begin{lemma} \label{lem:}
There exists a positive number $p_u$, independent of $\beta$, such that for large enough $\beta$, if $\theta$ is an extrema of $\widehat{SW}$ then
\[
\theta < \frac{-\delta}{\hat{a}} + \frac{\log p_u \beta}{\hat{a}\beta}
\] \label{lem:upper_bound_punish}
\end{lemma}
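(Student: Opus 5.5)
The plan is to show that for all sufficiently large $\beta$ the bracket in \eqref{eq:swp_derivative},
\[
B(\theta):=\theta\, u\hat{P}(u) - \frac{\delta+N\Delta}{1+\hat{a}}\,uP(u) - \frac{\hat{Q}(u)}{\hat{a}\beta},
\]
is strictly negative whenever $u=e^{\beta(\hat{a}\theta+\delta)}\ge p_u\beta$. Since the prefactor in \eqref{eq:swp_derivative} is positive, $\widehat{SW}$ then has no critical point there. Solving $u<p_u\beta$ for $\theta$ gives exactly $\theta<\frac{-\delta}{\hat{a}}+\frac{\log(p_u\beta)}{\hat{a}\beta}$. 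The key point is a degree comparison: $\hat{Q}=\hat{f}g$ has degree $2N-2$, whereas $u\hat{P}(u)$ and $uP(u)$ have degree only $2N-4$. The extra factor $u^2$ in the negative term $\hat{Q}/(\hat{a}\beta)$ therefore has to beat a factor of order $\theta\lesssim 1+\log u/\beta$. This happens as soon as $u$ is a large multiple of a power of $\beta$, so the threshold $p_u\beta$ is more than enough.

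\textbf{Step 1: expand the three polynomials.} I will use Proposition \ref{prop:f and g} and $\hat{\eta}_j=\eta_{N-1-j}$ to write
\[
u\hat{P}(u)=\hat{f}g'-\hat{f}'g=\sum_{j,k}(k-j)\hat{\eta}_j u^{j+k-1}.
\]
The top term $j=k=N-1$ cancels. The coefficient of $u^{2N-4}$ is $\hat{\eta}_{N-2}-\hat{\eta}_{N-1}=\eta_1-\eta_0>0$. All coefficients depend only on $N$ and $H_N$, not on $\beta$. The same expansion for $uP(u)=fg'-f'g$ gives degree $2N-4$ with leading coefficient $\eta_{N-2}-\eta_{N-1}=H_N+\tfrac12>0$. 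Finally $\hat{Q}(u)=\hat{f}g$ has leading term $\eta_0u^{2N-2}$ and nonnegative coefficients. Consequently there are constants $C_1,C_2$ depending only on $N,\delta,\Delta,\hat{a}$ such that, for $u\ge1$,
\[
|u\hat{P}(u)|\le C_1u^{2N-4},\qquad \Big|\tfrac{\delta+N\Delta}{1+\hat{a}}uP(u)\Big|\le C_2u^{2N-4},\qquad \hat{Q}(u)\ge\eta_0u^{2N-2}.
\]

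\textbf{Step 2: bound $\theta$ in terms of $u$.} On the relevant range $\theta\ge0$ we have $\theta=\frac{\log u}{\hat{a}\beta}-\frac{\delta}{\hat{a}}$. Since $\delta<0$, this gives $0\le\theta\le C_3+\frac{\log u}{\hat{a}\beta}$.

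\textbf{Step 3: conclude.} Combining the two steps, for $u\ge1$,
\[
B(\theta)\le u^{2N-4}\Big[C_1C_3+\frac{C_1\log u}{\hat{a}\beta}+C_2-\frac{\eta_0u^2}{\hat{a}\beta}\Big].
\]
Choose $p_u$ large enough, depending only on $N$ and the game parameters and not on $\beta$. Then for $u\ge p_u\beta$ and $\beta\ge1$ we get $\eta_0u^2/(\hat{a}\beta)\ge \eta_0p_u^2\beta/\hat{a}$, which dominates both $C_1C_3+C_2$ and $C_1\log u/(\hat{a}\beta)$. The last comparison uses $\log u\le u$ together with $u^2\gg u$. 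Hence $B<0$ and $\widehat{SW}'<0$ on this range, so every extremum has $u<p_u\beta$, which is the claimed bound.

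The main obstacle is Step 1. I need to identify the exact leading degrees of $u\hat{P}$ and $uP$, and to confirm that the degree-$(2N-3)$ coefficients, which are antisymmetric in $(j,k)$, really cancel, so that the gap of $u^2$ against $\hat{Q}$ is genuine. If that cancellation were only partial, a gap of $u^1$ would still suffice, because $u\ge p_u\beta$ still makes $u/\beta$ large. So I expect the argument to survive even with a cruder degree count.
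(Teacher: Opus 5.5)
Your route works, but only through the fallback you mention at the end, because Step~1 has an off-by-one error. With derivatives in $x$, as in the paper ($u\hat P(u)=\hat f(x)g'(x)-\hat f'(x)g(x)$ and $\frac{d}{dx}=u\frac{d}{du}$), one gets
\[
u\hat P(u)=\sum_{j,k}(k-j)\hat\eta_j\,u^{j+k},
\]
so the sum you wrote is $\hat P(u)$ itself, not $u\hat P(u)$. Hence $u\hat P$ and $uP$ have degree $2N-3$, consistent with (P1). Their leading coefficients are $\eta_1-\eta_0=H_N+\frac{1}{N-2}$ and $H_N+\frac12$. The $u^{2N-3}$ coefficients do \emph{not} cancel, since $(k-j)\hat\eta_j$ is not antisymmetric in $(j,k)$.

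The gap against $\hat Q$ (degree $2N-2$, leading coefficient $\eta_0$) is therefore exactly one power of $u$, and Step~3 must read
\[
B(\theta)\le u^{2N-3}\Big[C_1C_3+C_2+\frac{C_1\log u}{\hat a\beta}-\frac{\eta_0 u}{\hat a\beta}\Big].
\]
Take $u\ge p_u\beta$ and $\beta\ge 1$. Choose $p_u$ so that $\frac{\eta_0}{2}u\ge C_1\log u$ for all $u\ge p_u$, and so that $\frac{\eta_0 p_u}{2\hat a}>C_1C_3+C_2$. Then the bracket is negative, and the conclusion stands. Your remark that $p_u\beta$ is ``more than enough'' is wrong, though. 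The constant part $-\delta/\hat a$ of $\theta$, multiplied by $u^{2N-3}$, can only be beaten by $\eta_0u^{2N-2}/(\hat a\beta)$ when $u\gtrsim\beta$. A linear-in-$\beta$ threshold is therefore exactly what is needed, and this is where the $\log\beta$ in Theorem~\ref{thm:localization_p} comes from.

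Compared with the paper: the paper works on $u>u_0$ and uses $P,\hat P>0$ there to drop the middle term. It then imports from \cite{duonghan2021cost} the inequality $\hat Q(u)\ge\frac{4}{3(N-1)}(1+u)u\hat P(u)$, which gives $\theta>\frac{4(1+u)}{3\hat a\beta(N-1)}$. Comparing this with $\theta=-\frac{\delta}{\hat a}+\frac{\log u}{\hat a\beta}$ and writing $u=\hat a\beta v$ forces $v<v^*$. That cited inequality is a sharp, all-$u$ version of your one-degree gap, so the underlying mechanism is the same.

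The trade-off is as follows. Your version is self-contained: it needs only the nonnegativity of the coefficients of $\hat Q$ and crude coefficient bounds, with no sign information on $P,\hat P$ and no external lemma. The cost is less explicit constants $C_1,C_2$. The paper's version gives a cleaner $v^*$ depending only on $N,\delta,\hat a$, because the $(\delta+N\Delta)uP$ term is discarded by sign.
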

\begin{proof}
Two results from \cite{duonghan2021cost} showed that:
\[
\hat{P}(u)= -u^{2N-4} P \left( \frac{1}{u} \right)
\]
and
\[
\hat{Q}(u) \ge \frac{4}{3(N-1)}(1+u)u\hat{P}(u) \quad \forall u >0
\]

And since $P(u)$ has a unique positive root $u_0 > 1$, $\hat{P}(u)$ also has a unique positive root $u^{*}_0 = \frac{1}{u_0} < 1$. Furthermore, since $P(u)$ is strictly negative for all $u \in (0, u_0)$ and strictly positive for all $u \in (u_0, +\infty)$, it follows that $\hat{P}(u)$ is strictly negative for all $u \in (0, \frac{1}{u_0})$ and strictly positive for all $u \in (\frac{1}{u_0}, +\infty)$.

Consider $\theta$ such that $u(\theta) \in (u_0, +\infty)$ where we define $u(\theta) = e^{\beta(\hat{a}\theta + \delta)}$, a bit different from the reward case. From (\ref{eq:swp_derivative}), it holds that if $\tilde{\theta}$ (corresponding to $\tilde{u}$) is an extrema of $\widehat{SW}(\theta)$ then:
\begin{equation}
\tilde{\theta} \tilde{u} \hat{P}(\tilde{u}) = \frac{\delta + N\Delta}{1+\hat{a}}\tilde{u}P(\tilde{u}) + \frac{\hat{Q}(\tilde{u})}{\hat{a}\beta} > \frac{\hat{Q}(\tilde{u})}{\hat{a}\beta} \ge \frac{4}{3\hat{a}\beta(N-1)}(1+\tilde{u})\tilde{u}\hat{P}(\tilde{u}) \label{ineq:extrema_punishment}
\end{equation}

Comparing left-most and right-most side of the above inequality yields:
\[
\tilde{\theta} > \frac{4}{3\hat{a}\beta(N-1)}(1+\tilde{u}) > \frac{4\tilde{u}}{3\hat{a}\beta(N-1)}
\]

Let $\tilde{u} = \hat{a}\beta v$, by substituting $\tilde{\theta} = \frac{-\delta}{\hat{a}} + \frac{\log \tilde{u}}{\hat{a}\beta}$ into the above inequality, we have:
\[
\frac{-\delta}{\hat{a}} + \frac{\log \hat{a}\beta}{\hat{a}\beta} + \frac{\log v}{\hat{a}\beta} > \frac{4 v}{3(N-1)}
\]

Using the inequality $x \ge e \log x$ for all $x > 0$ gives us:
\[
\frac{-\delta}{\hat{a}} + \frac{1}{e} + \frac{\log v}{\hat{a}\beta} \ge \frac{-\delta}{\hat{a}} + \frac{\log \hat{a}\beta}{\hat{a}\beta} + \frac{\log v}{\hat{a}\beta} > \frac{4 v}{3(N-1)}
\]

Observe that there exists a constant $v^*$ independent of $\beta$ such that for all large enough $\beta$, the above inequality fails for all $v \ge v^*$, which means that if $\tilde{\theta}$ is an extrema of $\widehat{SW}$, and therefore a root of the expression in (\ref{eq:swp_derivative}), it holds that $u(\tilde{\theta}) < \hat{a}\beta v^*$, and therefore 
\[
\tilde{\theta} < \frac{-\delta}{\hat{a}} + \frac{\log \hat{a}\beta v^*}{\hat{a}\beta}
\]
By setting $p_u = \hat{a}v^*$, we have proven our lemma.
\end{proof}

\textbf{Lower Bound} \\
Let $\hat{S}(u) = \frac{\hat{f}(x)}{g(x)}$ and for convenience, $\hat{R}(\theta) = \hat{S}(u(\theta))$. Specifically, let:
\[
\hat{S}(u) = \frac{\sum_{j=0}^{N-1} \hat{\eta_j} u^j}{\sum_{j=0}^{N-1} u^j} \qquad \text{ and } \qquad \hat{R}(\theta) = \frac{\sum_{j=0}^{N-1} \hat{\eta}_j e^{j\beta(\hat{a}\theta+\delta)}}{\sum_{j=0}^{N-1} e^{j\beta(\hat{a}\theta+\delta)}}
\]

First, notice that Lemma \ref{obs_eta0} also applies for $\hat{S}(u)$, that is $\hat{S}(u) > \eta_0$ for all $u > 0$. Furthermore, we have:
\[
\frac{d\hat{S}}{du} = \frac{d\hat{S}}{dx} \frac{dx}{du} = \frac{-u\hat{P}(u)}{g(x)^2} \frac{1}{u} = \frac{-\hat{P}(u)}{g(x)^2}
\]
which, from the analyses in the Upper Bound section, is positive for all $u \in (0, \frac{1}{u_0})$ and negative for all $u \in (\frac{1}{u_0}, +\infty)$. 

With that, we present a similar Lemma as in the reward case:
\begin{lemma}[Left-side dominance by $\widehat{SW}(0)$]
\label{lem:lower_bound_p}

There exists a constant $p_l > 1$, independent of $\beta$, such that for all sufficiently large $\beta$, the following holds: If $\mu = \frac{-\delta}{\hat{a}} - \frac{\log p_l}{\hat{a}\beta}$, then $\widehat{SW}(\theta) \le \widehat{SW}(0)$ for all $\theta \in [0, \mu]$.
\end{lemma}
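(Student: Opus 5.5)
The plan mirrors the proof of Lemma \ref{lem:lower_bound_c} for reward. Write $u(\theta)=e^{\beta(\hat{a}\theta+\delta)}$ and
\[
\widehat{SW}(\theta)=\frac{N^2}{2}\Big[(\delta+N\Delta)\,S(u(\theta))-(1+\hat{a})\,\theta\,\hat{S}(u(\theta))\Big],
\]
with $S$ as in the reward section (now evaluated at the punishment variable $u$) and $\hat{S}$ as defined just before the statement. I would split $[0,\mu]$ into $I_0=[0,\frac{1}{\hat{a}\beta}]$ and $I_1=[\frac{1}{\hat{a}\beta},\mu]$ and prove the analogues of Claims \ref{claim:decrease_near_zero} and \ref{claim:bound_away_from_zero}.

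The facts I will use are these. Since $\delta<0$, for $\beta$ large we have $u(\theta)\le e^{\beta\delta+1}<1$ on $I_0$, and $u(\theta)\le 1/p_l<1$ on all of $[0,\mu]$. On $(0,1)$, $S$ is increasing with Lipschitz constant $L$, by Lemma \ref{obs_lipschitz}. The function $\hat{S}$ is a smooth rational function on $[0,1]$, hence has some Lipschitz bound $\hat{L}$ there. We have $\hat{S}>\eta_0$ (the analogue of Lemma \ref{obs_eta0} noted above). Finally, $\delta+N\Delta>0$ (computed in the case $a=1$ as $b-c$ or $c(r-1)$). A key simplification compared with reward is that the cost term enters with the coefficient $-(1+\hat{a})<0$, so no condition like $a<1$ is needed.

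\textbf{Step 1 (decrease on $I_0$).} I would differentiate:
\[
\widehat{SW}'(\theta)=\frac{N^2}{2}\Big[(\delta+N\Delta)S'(u)u'-(1+\hat{a})\hat{S}(u)-(1+\hat{a})\theta\,\hat{S}'(u)u'\Big],\qquad u'=\hat{a}\beta u .
\]
On $I_0$ we have $u'\le \hat{a}\beta e^{\beta\delta+1}$. The first term is therefore at most $(\delta+N\Delta)L\hat{a}\beta e^{\beta\delta+1}$. The third term is at most $(1+\hat{a})\frac{1}{\hat{a}\beta}\hat{L}\,\hat{a}\beta e^{\beta\delta+1}$ in absolute value. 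Note that $\hat{S}'$ may have either sign, since $\hat{P}$ changes sign at $1/u_0<1$, so this term must be bounded in absolute value rather than discarded. Both terms tend to $0$ as $\beta\to\infty$ because $\beta e^{\beta\delta}\to 0$. The middle term, by contrast, is below $-(1+\hat{a})\eta_0$. Hence $\widehat{SW}'<0$ on $I_0$ for large $\beta$, which gives $\widehat{SW}\le\widehat{SW}(0)$ there.

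\textbf{Step 2 (domination on $I_1$).} I would argue by contradiction. Suppose $\widehat{SW}(\tilde\theta)>\widehat{SW}(0)$ for some $\tilde\theta\in I_1$. Rearranging gives
\[
(\delta+N\Delta)\big(S(\tilde u)-S(u(0))\big)>(1+\hat{a})\tilde\theta\,\hat{S}(\tilde u)>(1+\hat{a})\tilde\theta\,\eta_0 .
\]
By the Lipschitz bound, the left side is at most $(\delta+N\Delta)L\tilde u$. Parametrise $\tilde\theta=\frac{-\delta}{\hat{a}}-\frac{\log\varepsilon}{\hat{a}\beta}$, so that $\tilde u=\varepsilon^{-1}$ and $p_l\le\varepsilon\le e^{-\beta\delta-1}$. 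This yields
\[
\varepsilon\Big(-\delta-\frac{\log\varepsilon}{\beta}\Big)<\frac{(\delta+N\Delta)L\hat{a}}{(1+\hat{a})\eta_0}.
\]
The left side is increasing in $\varepsilon$ on this range. So it suffices to make the inequality fail at $\varepsilon=p_l$. For $\beta\ge 2\log p_l/(-\delta)$, the left side at $p_l$ is at least $-\delta p_l/2$. Choosing
\[
p_l>\max\Big\{1,\ \frac{2(\delta+N\Delta)L\hat{a}}{(-\delta)(1+\hat{a})\eta_0}\Big\},
\]
which is independent of $\beta$, gives the contradiction.

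\textbf{Main obstacle.} The step needing most care is Step 1. Unlike the reward case, the term $\theta\hat{S}'(u)u'$ has no definite sign. It has to be handled by the fact that $\theta\le 1/(\hat{a}\beta)$ cancels the factor $\beta$ in $u'$, leaving only the vanishing factor $e^{\beta\delta+1}$. I would also check that the constants $L$ and $\hat{L}$, taken on $[0,1]$, are genuinely independent of $\beta$. This holds because $S$ and $\hat{S}$, as functions of $u$, do not involve $\beta$.
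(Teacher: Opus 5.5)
Your proof is correct and follows the paper's route. The paper's proof transplants Claims~\ref{claim:decrease_near_zero} and~\ref{claim:bound_away_from_zero} with $(1-a)R(\theta)$ replaced by $(1+\hat{a})\hat{R}(\theta)$, which is exactly your $I_0$/$I_1$ split. Your absolute-value bound on the extra term $(1+\hat{a})\theta\hat{S}'(u)u'$ works. That term is in fact nonpositive on $I_0$ once $e^{\beta\delta+1}<1/u_0$, so it could simply be discarded. Your factor-$2$ slack in choosing $p_l$ makes rigorous a step the paper leaves implicit.
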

\begin{proof}
The proof is carried out similarly like in the reward case, with the only difference being that the quantity $(1-a)R(\theta)$ is replaced with $(1+\hat{a})\hat{R}(\theta)$.

\end{proof}
\subsection{Comparison between reward and punishment}
In this section, we demonstrate that rewarding cooperators frequently yields better outcomes than punishing defectors. Specifically, we observed that if the transfer efficiency of rewards is sufficiently high relative to punishments, any budget allocated to punishment can be replaced by a corresponding reward level that achieves equal or greater Social Welfare.

We define the following functions for all $\theta \ge 0$ and incentive efficiency $a >0$ and $\hat{a} > 0$: 
\begin{equation}
\label{eq: F and hatF}
F(a,\theta) = \sum_{j=0}^{N-1} \eta_je^{j\beta(a\theta+\delta)}, \qquad \hat{F}(\hat{a},\theta) = \sum_{j=0}^{N-1} \hat{\eta_j}e^{j\beta(\hat{a}\theta+\delta)}\quad \text{ and} \qquad G(a,\theta) = \sum_{j=0}^{N-1} e^{j\beta(a\theta+\delta)}.
\end{equation}

With these definitions, we can write the Social Welfare functions as follow:
\begin{subequations}
\begin{align}
SW(\theta) &= \frac{N^2}{2}\left(\delta+N\Delta - (1-a)\theta \right) \frac{F(a, \theta)}{G(a, \theta)},\label{sw1} \\
\widehat{SW}(\theta) &= \frac{N^2}{2} \big(\delta + N\Delta) \frac{F(\hat{a}, \theta)}{G(\hat{a}, \theta)} - \frac{N^2}{2}(1+\hat{a})\theta \frac{\hat{F}(\hat{a}, \theta)}{G(\hat{a}, \theta)}.\label{swb1}
\end{align}    
\end{subequations}

To prove Theorem \ref{thm:eff}, we will need two auxiliary lemmas. The first one expresses the difference between the two (rescaled) objective functions.
\begin{lemma}
For all $\theta \ge 0$, we have
\begin{equation}
\widehat{SW} \left( \frac{\theta}{\hat{a}} \right) - SW \left( \frac{\theta}{a} \right) = \frac{N^2\theta}{2G(1,\theta)} \left( \frac{1-a}{a}F(1,\theta) - \frac{1+\hat{a}}{\hat{a}}\hat{F}(1,\theta) \right)
\label{eq:sw_diff}
\end{equation}
\end{lemma}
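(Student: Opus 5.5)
The identity follows by direct substitution into the representations \eqref{sw1} and \eqref{swb1}. The key observation is that both $F$ and $G$ depend on $(a,\theta)$ only through the product $a\theta$. Indeed, each $e^{j\beta(a\theta+\delta)}$ is a function of $a\theta$ alone. Hence $F(a,\theta/a)=F(1,\theta)$, $G(a,\theta/a)=G(1,\theta)$, and likewise $F(\hat{a},\theta/\hat{a})=F(1,\theta)$, $\hat F(\hat a,\theta/\hat a)=\hat F(1,\theta)$ and $G(\hat a,\theta/\hat a)=G(1,\theta)$. So after rescaling, the reward and punishment welfare share the common denominator $G(1,\theta)$ and the common ratio $F(1,\theta)/G(1,\theta)$.

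Next I would substitute these into \eqref{sw1} with $\theta\mapsto\theta/a$:
\[
SW\!\left(\tfrac{\theta}{a}\right)=\frac{N^2}{2}\Big(\delta+N\Delta-\tfrac{1-a}{a}\theta\Big)\frac{F(1,\theta)}{G(1,\theta)}.
\]
Similarly, \eqref{swb1} with $\theta\mapsto\theta/\hat a$ gives
\[
\widehat{SW}\!\left(\tfrac{\theta}{\hat a}\right)=\frac{N^2}{2}(\delta+N\Delta)\frac{F(1,\theta)}{G(1,\theta)}-\frac{N^2}{2}\frac{1+\hat a}{\hat a}\theta\frac{\hat F(1,\theta)}{G(1,\theta)}.
\]

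Subtracting, the terms $\frac{N^2}{2}(\delta+N\Delta)F(1,\theta)/G(1,\theta)$ cancel exactly. What remains is $\frac{N^2\theta}{2G(1,\theta)}\big(\frac{1-a}{a}F(1,\theta)-\frac{1+\hat a}{\hat a}\hat F(1,\theta)\big)$, which is \eqref{eq:sw_diff}.

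No step is a genuine obstacle. The only point needing care is the bookkeeping of the efficiency scaling: one must check that the exponent in $\hat F$ is also $\beta(\hat a\theta+\delta)$, so that it rescales the same way. One must also check that \eqref{swb1} is consistent with \eqref{eq:swp}, which holds because $f/g$ and $\hat f/g$ evaluated at $x=\beta(\hat a\theta+\delta)$ are exactly $F(\hat a,\theta)/G(\hat a,\theta)$ and $\hat F(\hat a,\theta)/G(\hat a,\theta)$ by Proposition \ref{prop:f and g} and the expansion $\hat f=\sum\hat\eta_j u^j$.
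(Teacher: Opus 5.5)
Your proposal is correct and follows the paper's proof. You substitute $\theta/a$ into \eqref{sw1} and $\theta/\hat a$ into \eqref{swb1}, reduce everything to $F(1,\theta)$, $\hat F(1,\theta)$, $G(1,\theta)$, and cancel the common $\frac{N^2}{2}(\delta+N\Delta)F(1,\theta)/G(1,\theta)$ term. You also spell out the rescaling fact $F(a,\theta/a)=F(1,\theta)$ (and likewise for $\hat F$ and $G$), which the paper uses without comment.
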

\begin{proof}
From \eqref{sw1}-\eqref{swb1}, we have:
\[
SW \left( \frac{\theta}{a} \right) = \frac{N^2}{2} \left( \delta + N\Delta - \frac{1-a}{a}\theta \right) \frac{F(1, \theta)}{G(1, \theta)},
\]
and
\[
\widehat{SW} \left( \frac{\theta}{\hat{a}} \right) = \frac{N^2}{2} \big(\delta + N\Delta)\frac{F(1, \theta)}{G(1, \theta)} - \frac{N^2}{2}\frac{1+\hat{a}}{\hat{a}}\theta\frac{\hat{F}(1, \theta)}{G(1, \theta)}.
\]
From these expressions, by rearranging terms, then we obtain \eqref{eq:sw_diff} as claimed.
\end{proof}
The second lemma estimates the ratio between $F$ and $\hat{F} $ that appear in \eqref{eq:sw_diff}.
\begin{lemma}
    For all $a, \theta \in \mathbb{R}$, we have:
\[
\frac{\eta_0}{\eta_{N-1}} \le \frac{F(a, \theta)}{\hat{F}(a, \theta)} \le \frac{\eta_{N-1}}{\eta_0},
\]
or, equivalently:
\[
1 - \frac{N-2}{(N-1)(H_N + 1)}  \le  \frac{F(a, \theta)}{\hat{F}(a, \theta)} \le 1 + \frac{N-2}{(N-1)H_N + 1}
\] \label{lem:frac_ratio}
\end{lemma}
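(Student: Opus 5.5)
Both $F(a,\theta)$ and $\hat F(a,\theta)$ are sums of the same positive weights $w_j:=e^{j\beta(a\theta+\delta)}>0$, $j=0,\dots,N-1$, with coefficients $\eta_j$ and $\hat\eta_j=\eta_{N-1-j}$ respectively. The plan is therefore to use the elementary mediant inequality: if $w_j>0$ and $m\le \eta_j/\hat\eta_j\le M$ for every $j$, then
\[
m\le \frac{\sum_j \eta_j w_j}{\sum_j \hat\eta_j w_j}\le M .
\]
This reduces the statement to a termwise bound
\[
\frac{\eta_0}{\eta_{N-1}}\le \frac{\eta_j}{\eta_{N-1-j}}\le \frac{\eta_{N-1}}{\eta_0}\qquad\text{for all } 0\le j\le N-1,
\]
which is independent of $a,\theta,\beta,\delta$. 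That independence is why the lemma holds for all real $a,\theta$.

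To verify the termwise bound, I would use the explicit coefficients from Proposition \ref{prop:f and g}.

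\emph{Endpoints.} For $j=0$ the ratio is $\eta_0/\eta_{N-1}$, and for $j=N-1$ it is $\eta_{N-1}/\eta_0$. Note that $\eta_0=H_N+\frac{1}{N-1}\le H_N+1=\eta_{N-1}$, so the first of these is the lower bound and the second the upper bound.

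\emph{Interior indices.} For $1\le j\le N-2$, the index $N-1-j$ also lies in $\{1,\dots,N-2\}$. Writing
\[
\eta_j=2H_N+\tfrac{1}{N-j}+\tfrac{1}{N-j-1},\qquad \eta_{N-1-j}=2H_N+\tfrac{1}{j+1}+\tfrac{1}{j},
\]
I need to show $\eta_j/\eta_{N-1-j}\le \eta_{N-1}/\eta_0$; the lower bound then follows by the symmetry $j\leftrightarrow N-1-j$. The crude route is to bound the interior ratio by (largest interior $\eta$)/(smallest interior $\eta$). Using $\tfrac1k+\tfrac1{k+1}\le \tfrac32$ for $k\ge1$ and $\ge \tfrac{2}{N-1}$ for $k\le N-2$, this ratio is at most
\[
\frac{2H_N+\tfrac32}{2H_N+\tfrac{2}{N-1}}.
\]
Cross-multiplying against $\frac{H_N+1}{H_N+\frac{1}{N-1}}$ leaves
\[
\Bigl(2H_N+\tfrac32\Bigr)\Bigl(H_N+\tfrac{1}{N-1}\Bigr)\le (H_N+1)\Bigl(2H_N+\tfrac{2}{N-1}\Bigr),
\]
which reduces to $\tfrac32 H_N+\tfrac{3}{2(N-1)}\le 2H_N+\tfrac{2}{N-1}$. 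This is true, so the crude route should suffice. The interior case is also the one place where the unbalanced term $\tfrac1j$ vs $\tfrac1{N-1-j}$ appears, so it is the main obstacle and exactly where this check is needed.

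Finally, the "equivalently" form is algebra. Since
\[
\frac{\eta_{N-1}}{\eta_0}=1+\frac{1-\frac1{N-1}}{H_N+\frac1{N-1}}=1+\frac{N-2}{(N-1)H_N+1},
\]
and $\frac{\eta_0}{\eta_{N-1}}=1-\frac{N-2}{(N-1)(H_N+1)}$ similarly, the two displayed forms coincide. I would close by noting that the bounds are sharp in the limits $\beta\to\pm\infty$, where the weight concentrates on $j=N-1$ or $j=0$; this matches the role of $\eta_0$ and $\eta_{N-1}$ in Theorem \ref{thm:eff}.
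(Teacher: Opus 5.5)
Your proposal is correct and takes the same route as the paper: the generalized mediant inequality reduces the claim to the termwise bound $\eta_0/\eta_{N-1}\le \eta_j/\hat{\eta}_j\le \eta_{N-1}/\eta_0$, which does not depend on $a,\theta$. The paper only asserts this bound as ``a direct pairwise comparison.'' Your interior-index check, via $(2H_N+\tfrac{3}{2})/(2H_N+\tfrac{2}{N-1})\le (H_N+1)/(H_N+\tfrac{1}{N-1})$, together with the $j\leftrightarrow N-1-j$ symmetry for the lower bound, correctly supplies that missing verification.
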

\begin{proof}
It follows from the formulas of $F$ and $\hat F$ in \eqref{eq: F and hatF} that
\[
\frac{F(a, \theta)}{\hat{F}(a, \theta)} = \frac{\sum_{j=0}^{N-1} \eta_j e^{j\beta(a\theta +\delta)}}{\sum_{j=0}^{N-1} \hat{\eta}_j e^{j\beta(a\theta+\delta)}}.
\]
Therefore, by the Generalized Mediant Inequality, the value of this ratio lies between the smallest and largest component fraction of the mediant
\[
\min_{j} \frac{\eta_j e^{j\beta(a\theta +\delta)}}{\hat{\eta}_j e^{j\beta(a\theta+\delta)}} \le \frac{F(a, \theta)}{\hat{F}(a, \theta)} \le \max_j \frac{\eta_j e^{j\beta(a\theta +\delta)}}{\hat{\eta}_j e^{j\beta(a\theta+\delta)}},
\]
that is
\[
\min_{j}\frac{\eta_j}{\hat{\eta}_j}\leq \frac{F(a, \theta)}{\hat{F}(a, \theta)} \le \max_{j}\frac{\eta_j}{\hat{\eta}_j}.
\]
A direct pairwise comparison shows that:
\[
\frac{\eta_0}{\eta_{N-1}} \le \frac{\eta_j}{\hat{\eta}_j} \le \frac{\eta_{N-1}}{\eta_0}
\]

for all $0 \le j \le N-1$. Therefore:
\[
\frac{\eta_0}{\eta_{N-1}} \le \frac{F(a, \theta)}{\hat{F}(a, \theta)} \le \frac{\eta_{N-1}}{\eta_0},
\]
which, after an algebraic manipulation step, yields the second desired inequality. Furthermore, these bounds are tight, as the ratio asymptotically approaches the lower and upper bounds as $x \to -\infty$ and $x \to +\infty$, respectively.
\end{proof}




We are now in the position to prove Theorem \ref{thm:eff}.
\begin{proof}[Proof of Theorem \ref{thm:eff}]

For $a \ge 1$, the right-hand side of equation (\ref{eq:sw_diff}) is strictly non-positive, implying that $\widehat{SW}(\theta) \le SW(\hat{a}\theta/a)$ is trivially true for all $\theta\ge 0$. Therefore, we restrict our focus to the case where $a < 1$. 

The condition for the (shifted) punishment to strictly outperform the (shifted) reward, $\widehat{SW}(\theta/\hat{a}) > SW(\theta/a)$, is equivalent to:
\begin{equation}
    \frac{1-a}{a}F(1,\theta) - \frac{1+\hat{a}}{\hat{a}}\hat{F}(1,\theta) > 0 \label{ineq:comprp}
\end{equation}
which, by letting $a^* = 1/a$ and $\hat{a}^* = 1/\hat{a}$, can be equivalently rewritten as a ratio of the polynomials
\begin{align}
    \frac{F(1, \theta)}{\hat{F}(1, \theta)} &> \frac{\hat{a}^* +1}{a^* - 1} \label{ineq:ratio_bound}.
\end{align}
By combining inequality \eqref{ineq:ratio_bound} with Lemma \ref{lem:frac_ratio}, we know that the ratio $\frac{F(a, \theta)}{\hat{F}(a, \theta)}$ is bounded above by $\frac{\eta_{N-1}}{\eta_0}$. Therefore, if we enforce:
\[
\frac{\hat{a}^*+1}{a^*-1} \ge \frac{\eta_{N-1}}{\eta_0}
\]
or equivalently:
\begin{equation}
    \eta_0 \hat{a}^* + \eta_0  \ge \eta_{N-1}a^* - \eta_{N-1}
    \label{ineq:eff_bound}
\end{equation}
then inequality \eqref{ineq:ratio_bound} can never be satisfied for any $\theta$. This guarantees $\widehat{SW}(\theta/\hat{a}) \le SW(\theta/a)$, or $\widehat{SW}(\theta) \le SW(\hat{a}\theta/a)$ globally. 

By some algebraic manipulation and substitution $a = 1/a^*$ and $\hat{a} = 1/\hat{a}^*$, inequality (\ref{ineq:eff_bound}) is equivalent to:
\[
a \ge \frac{\hat{a} \eta_{N-1}}{\eta_0 + \hat{a}(\eta_0 + \eta_{N-1})}
\]
which is exactly the claimed condition in Theorem \ref{thm:eff}.
\end{proof}
From Theorem \ref{thm:eff}, we have the following Corollary:

\begin{corollary} \label{col:thm3infb}
If it holds true that:
\[
a \ge \frac{\eta_{N-1}}{\eta_0 + \eta_{N-1}}
\]
then $\widehat{SW}(\theta) \le SW(\hat{a}\theta/a)$ for all $\theta \ge 0$, regardless of $\hat{a}$.
\end{corollary}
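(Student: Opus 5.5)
The plan is to reduce the corollary to Theorem \ref{thm:eff} by showing that its threshold, viewed as a function of $\hat{a}$, never exceeds $\frac{\eta_{N-1}}{\eta_0+\eta_{N-1}}$. Define
\[
h(\hat{a}) := \frac{\hat{a}\,\eta_{N-1}}{\eta_0 + \hat{a}(\eta_0+\eta_{N-1})}, \qquad \hat{a} > 0 .
\]
By Theorem \ref{thm:eff}, $\widehat{SW}(\theta) \le SW(\hat{a}\theta/a)$ for all $\theta \ge 0$ as soon as $a \ge h(\hat{a})$. It therefore suffices to prove that $h(\hat{a}) \le \frac{\eta_{N-1}}{\eta_0+\eta_{N-1}}$ for every $\hat{a}>0$. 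The hypothesis $a \ge \frac{\eta_{N-1}}{\eta_0+\eta_{N-1}}$ then gives $a \ge h(\hat{a})$ for every $\hat{a}$, and the conclusion follows uniformly in $\hat{a}$.

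For the key inequality, divide numerator and denominator by $\hat{a}$:
\[
h(\hat{a}) = \frac{\eta_{N-1}}{\eta_0/\hat{a} + \eta_0 + \eta_{N-1}} .
\]
Since $\eta_0 = \frac{1}{N-1} + H_N > 0$ and $\eta_{N-1} = 1 + H_N > 0$ by Proposition \ref{prop:f and g}, the term $\eta_0/\hat{a}$ is strictly positive and strictly decreasing in $\hat{a}$. Hence $h$ is strictly increasing on $(0,\infty)$, with
\[
h(\hat{a}) < \lim_{\hat{a}\to\infty} h(\hat{a}) = \frac{\eta_{N-1}}{\eta_0+\eta_{N-1}} .
\]
Equivalently, cross-multiplying with positive denominators, the inequality $h(\hat{a}) \le \frac{\eta_{N-1}}{\eta_0+\eta_{N-1}}$ reduces to $0 \le \eta_0\eta_{N-1}$, which holds trivially.

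There is no real obstacle here; the only points to check are positivity of the $\eta$'s, so that the cross-multiplication is valid, and the standing assumption $N \ge 3$ of Theorem \ref{thm:eff}, which is inherited. As a sanity check, $\frac{\eta_{N-1}}{\eta_0+\eta_{N-1}} < 1$, so the condition is compatible with the $a<1$ regime. For $a \ge 1$ the conclusion was already shown directly in the proof of Theorem \ref{thm:eff}.
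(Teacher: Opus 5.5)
Your proof is correct and follows the same route as the paper: both show that the threshold $\hat{a}\eta_{N-1}/(\eta_0+\hat{a}(\eta_0+\eta_{N-1}))$ of Theorem \ref{thm:eff} is strictly increasing in $\hat{a}$, with supremum $\eta_{N-1}/(\eta_0+\eta_{N-1})$ as $\hat{a}\to\infty$. Your additional cross-multiplication check, which reduces the needed inequality to $0\le \eta_0\eta_{N-1}$, is a clean direct confirmation of the bound that the paper obtains by passing to the limit.
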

\begin{proof}
Notice that the function
\[
f(\hat{a}) = \frac{\hat{a} \eta_{N-1}}{\eta_0 + \hat{a}(\eta_0 + \eta_{N-1})} 
\]
is strictly increasing for all $\hat{a} > 0$. Therefore, taking the limit $\hat{a} \to +\infty$ yields the desired result.
\end{proof}



\subsection{Additional numerical simulations} \label{add-figures}
This section presents additional figures for the punishment case and the multi-objective comparison, to further illustrate the behaviour of the social welfare function. The supplementary results provide visual evidence for the key structural properties of the $SW$ curve across different game parameter settings. 

\begin{figure}[tbp]
    \centering
    \includegraphics[width=1\linewidth]{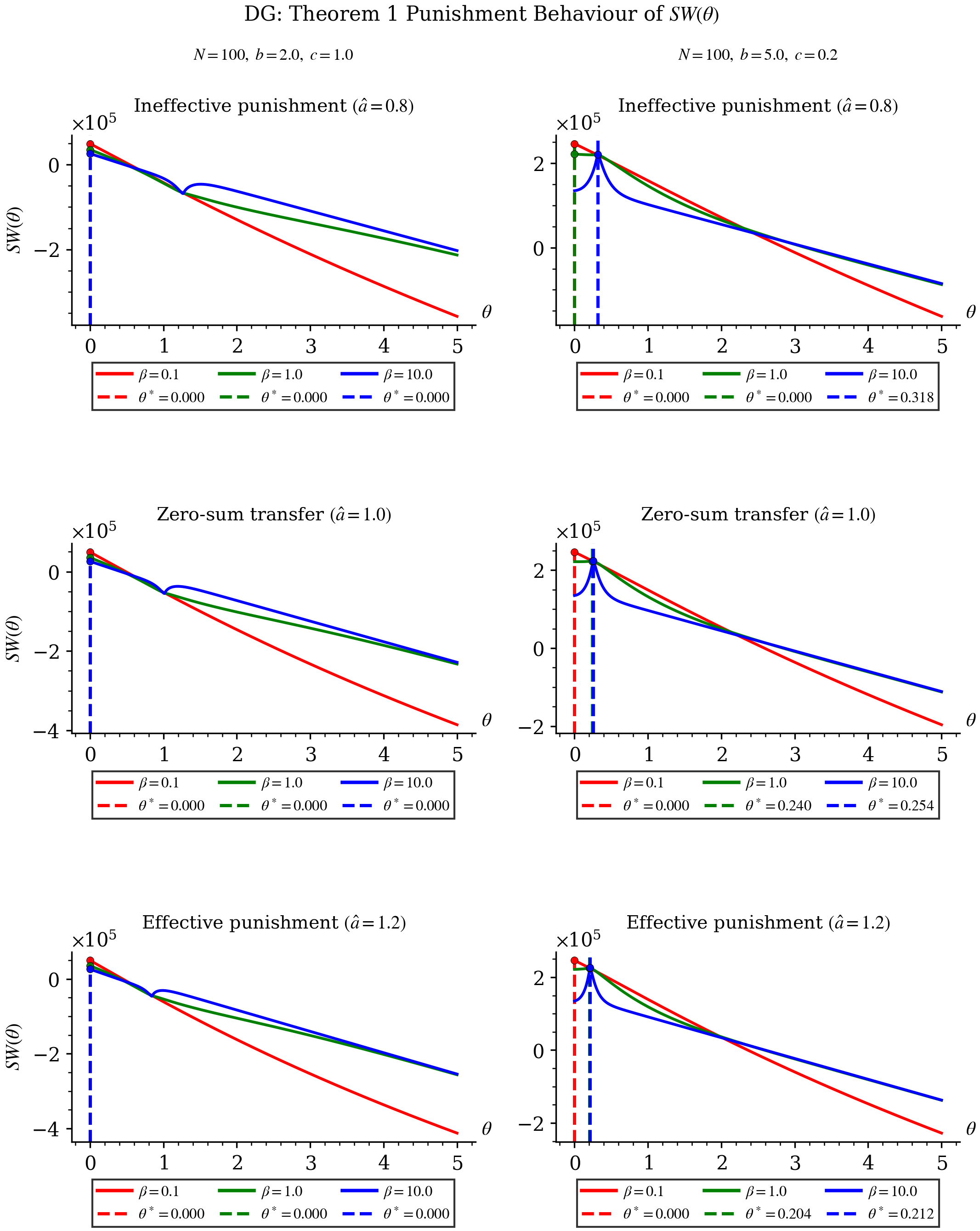}
    \caption{Numerical results for the Donation Game (DG) for the punishment-based policy. The $SW$ curves of varying selection intensity seemingly exhibit phase transitions and monotonic behaviours similar to the reward case, although with different thresholds for $\hat{a}$ and $\beta^*$.}
    \label{fig:thm1-punishment-dg}
\end{figure}

\begin{figure}[tbp]
    \centering
    \includegraphics[width=1\linewidth]{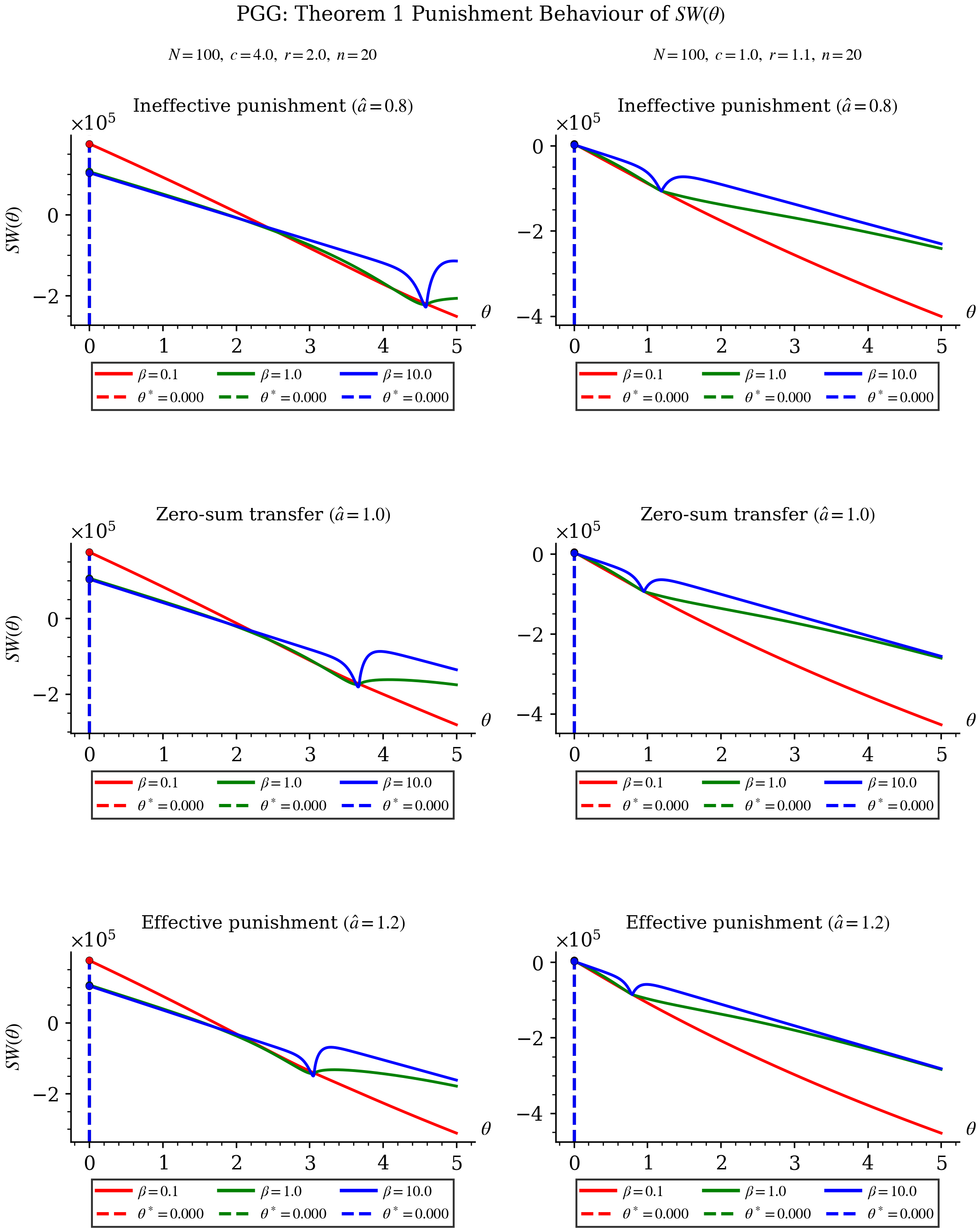}
    \caption{Numerical results for the Public Goods Game (PGG) for the punishment-based policy. The $SW$ curves of varying selection intensity seemingly exhibit phase transitions and monotonic behaviours similar to the reward case, although with different thresholds for $\hat{a}$ and $\beta^*$.}
    \label{fig:thm1-punishment-pgg}
\end{figure}

\begin{figure}[tbp]
    \centering
    \includegraphics[width=1\linewidth]{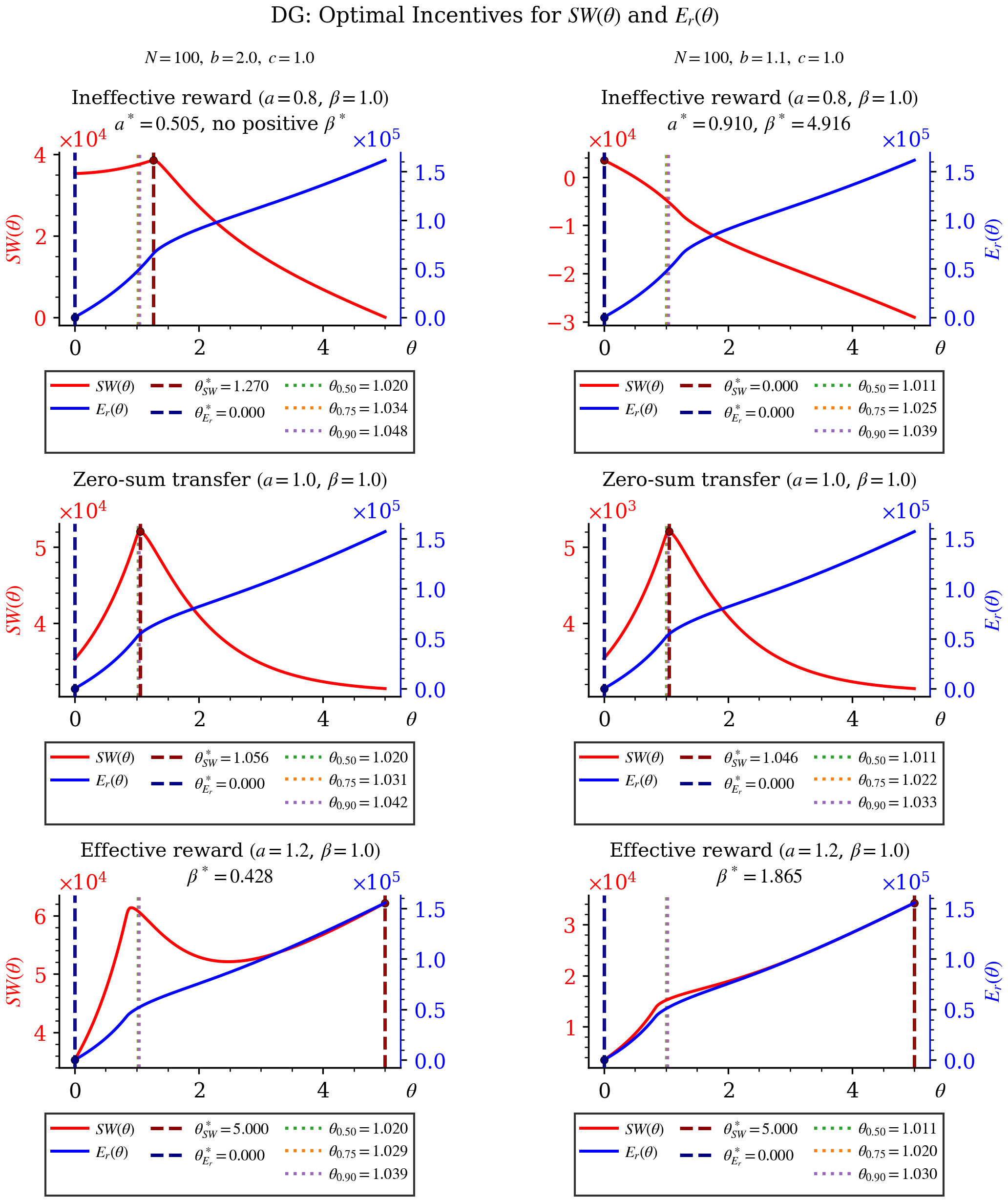}
    \caption{
    Multi-objective comparison of optimal incentive levels for the DG (reward case, $\beta=1.0$). The figures illustrate how the incentive values associated with maximising $SW(\theta)$, minimising $E_r(\theta)$, and satisfying frequency of cooperation thresholds vary across the three reward-efficiency regimes, namely ineffective reward $(a<1)$, zero-sum transfer $(a=1)$, and effective reward $(a>1)$. Vertical lines indicate the relevant optimal incentives and threshold values for target cooperation frequencies.
    }
\end{figure}

\begin{figure}[tbp]
    \centering
    \includegraphics[width=1\linewidth]{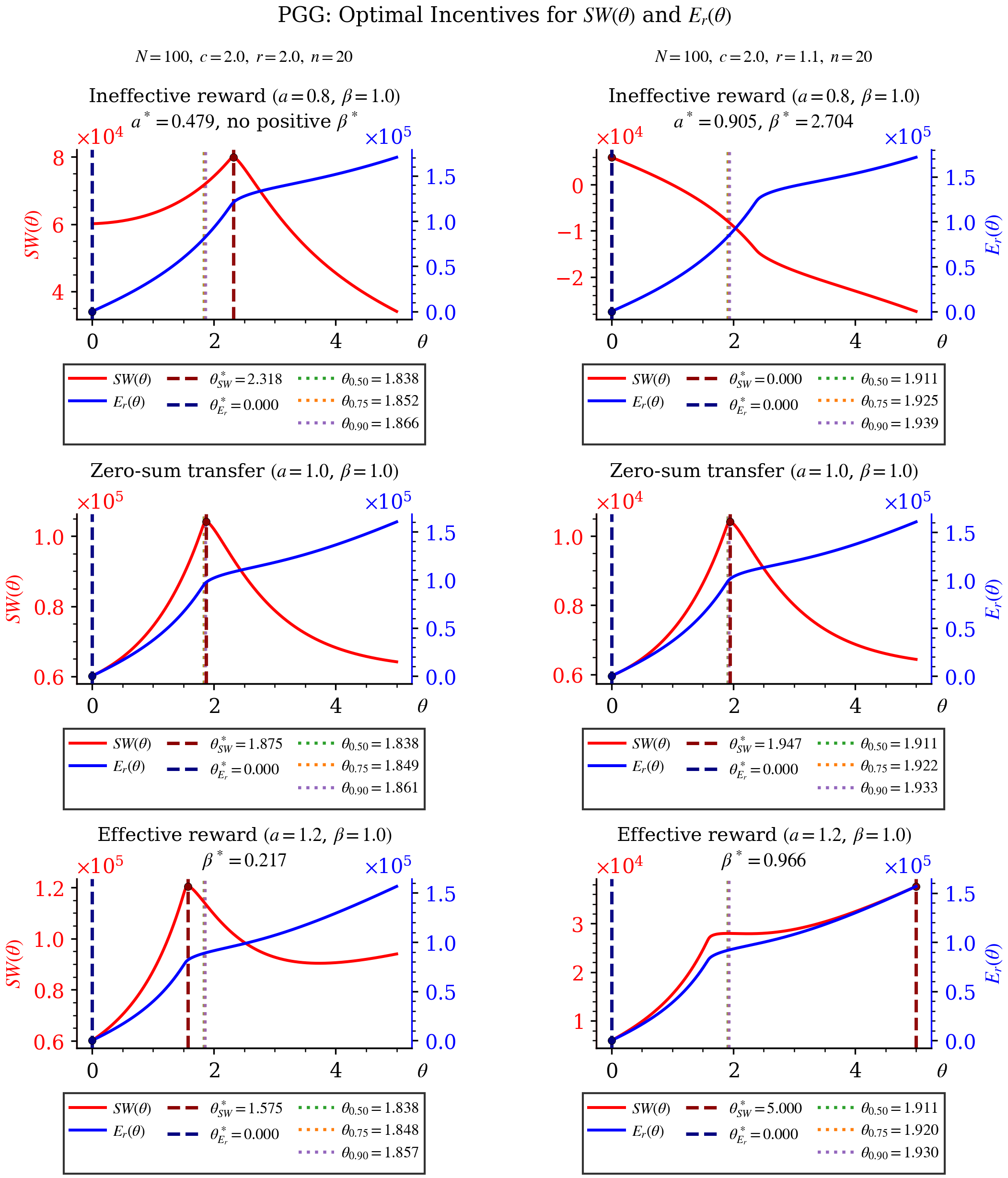}
    \caption{
    Multi-objective comparison of optimal incentive levels for the PGG (reward case, $\beta=1.0$). The figures illustrate how the incentive values associated with maximising $SW(\theta)$, minimising $E_r(\theta)$, and satisfying frequency of cooperation thresholds vary across the three reward-efficiency regimes, namely ineffective reward $(a<1)$, zero-sum transfer $(a=1)$, and effective reward $(a>1)$. Vertical lines indicate the relevant optimal incentives and threshold values for target cooperation frequencies.
    }
\end{figure}

\begin{figure}[tbp]
    \centering
    \includegraphics[width=1\linewidth]{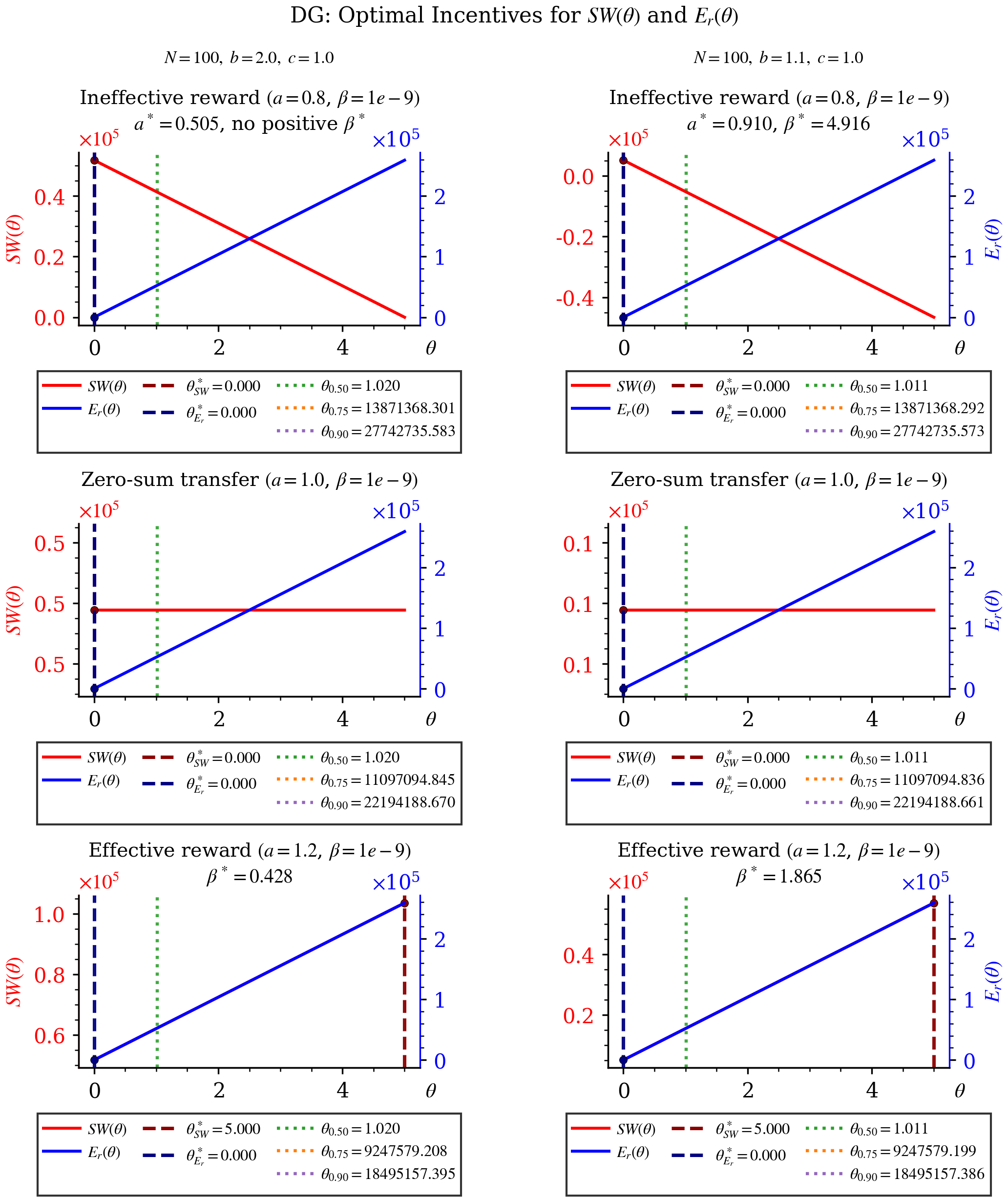}
    \caption{
    Multi-objective comparison of optimal incentive levels for the DG (reward case). Under the limit $\beta \to 0^+$, the minimum incentive levels required to achieve the target cooperation frequencies take extremely large values, causing the cooperation frequency thresholds to dominate the comparison with the social welfare and institutional cost objectives. As a result, the threshold requirements make the other objective-specific optima practically unattainable, except in the efficient-transfer regime $(a>1)$, where the social welfare optimum remains compatible with the required incentive level.
    }
\end{figure}

\begin{figure}[tbp]
    \centering
    \includegraphics[width=1\linewidth]{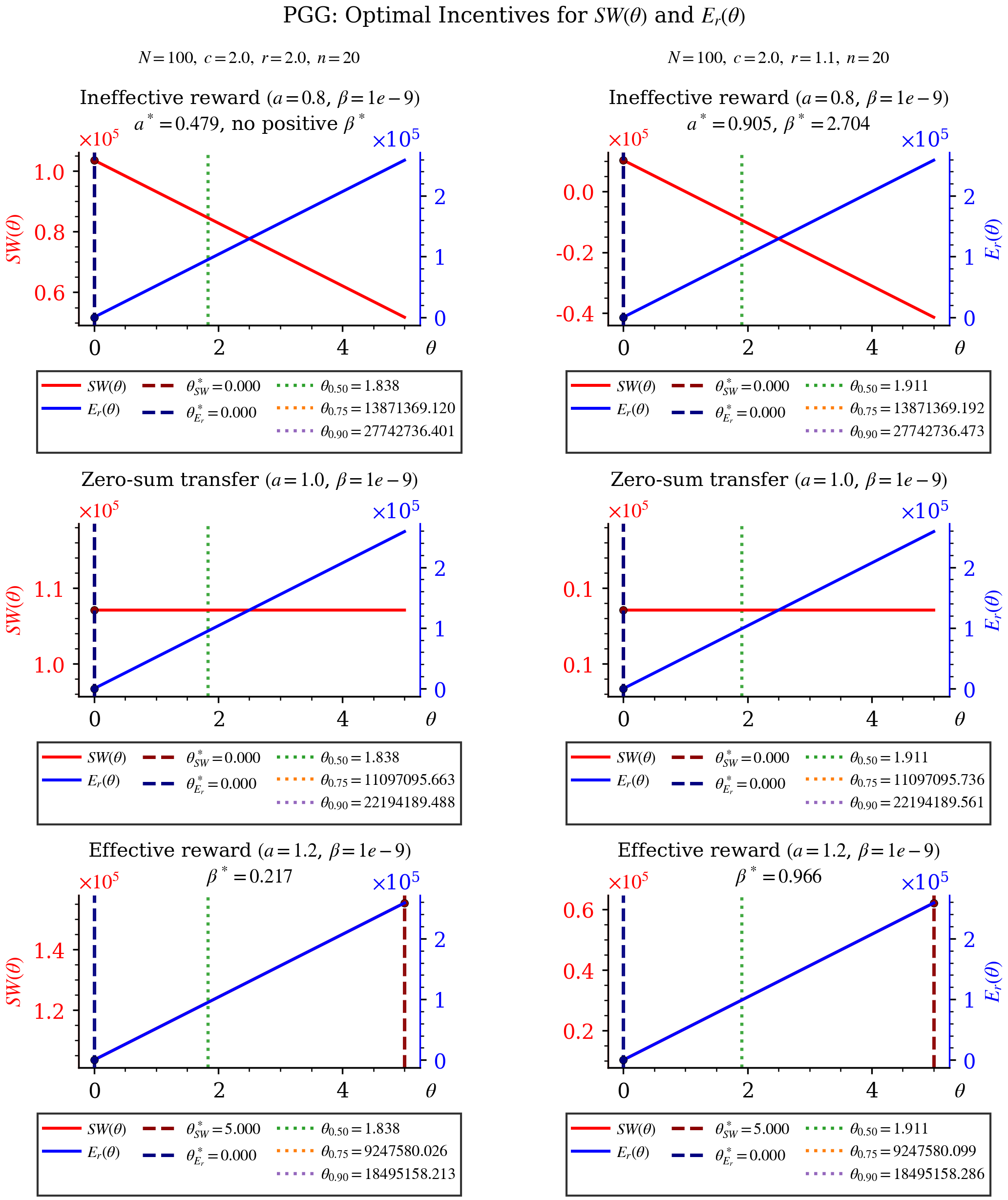}
    \caption{
    Multi-objective comparison of optimal incentive levels for the PGG (reward case). Under the limit $\beta \to 0^+$, the minimum incentive levels required to achieve the target cooperation frequencies take extremely large values, causing the cooperation frequency thresholds to dominate the comparison with the social welfare and institutional cost objectives. As a result, the threshold requirements make the other objective-specific optima practically unattainable, except in the efficient-transfer regime $(a>1)$, where the social welfare optimum remains compatible with the required incentive level.
    }
\end{figure}

\begin{figure}[tbp]
    \centering
    \includegraphics[width=1\linewidth]{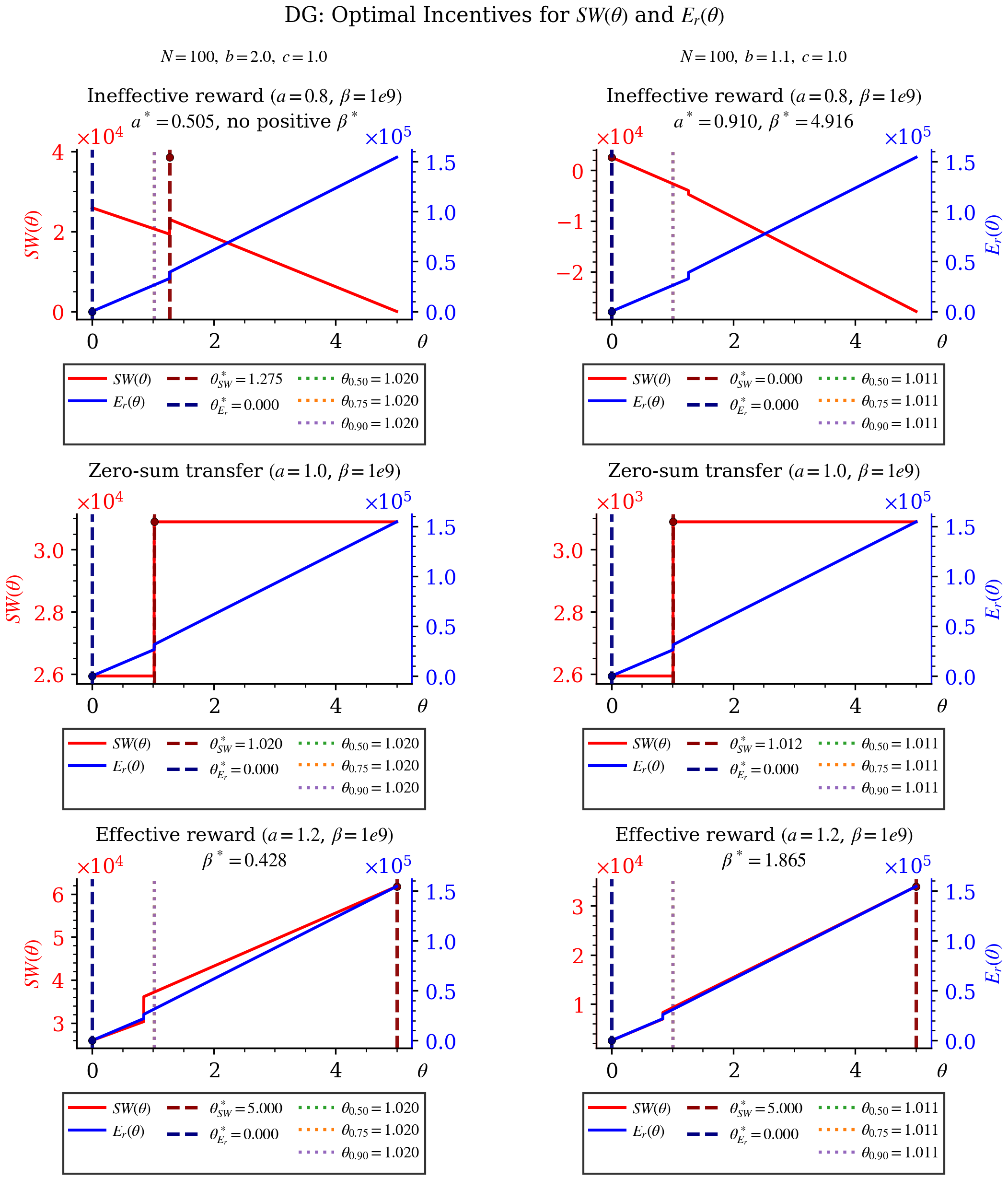}
    \caption{
    Multi-objective comparison of optimal incentive levels for the DG (reward case) under the strong-selection limit $\beta \to +\infty$. In this limit, the minimum incentive levels required to attain the target cooperation frequencies converge to $-\delta$. Consequently, the threshold requirements reduce the effective comparison to the two objective-specific optima: maximising $SW(\theta)$ and minimising $E_r(\theta)$ over the feasible interval $[-\delta,+\infty)$.\\\\For $a \le 1$, the two objectives are largely aligned, since the $SW(\theta)$ curve is  expected to be non-increasing over the feasible range. However, because an extremely large value of $\beta$ is still finite, the sharp near-vertical transition in $SW(\theta)$ can create a small discrepancy between the solutions for the social welfare and the institutional cost objectives. For $a>1$, the objectives become strongly conflicting: social welfare favours larger incentive levels, whereas institutional cost minimisation favours the smallest feasible incentive.
    }
\end{figure}

\begin{figure}[tbp]
    \centering
    \includegraphics[width=1\linewidth]{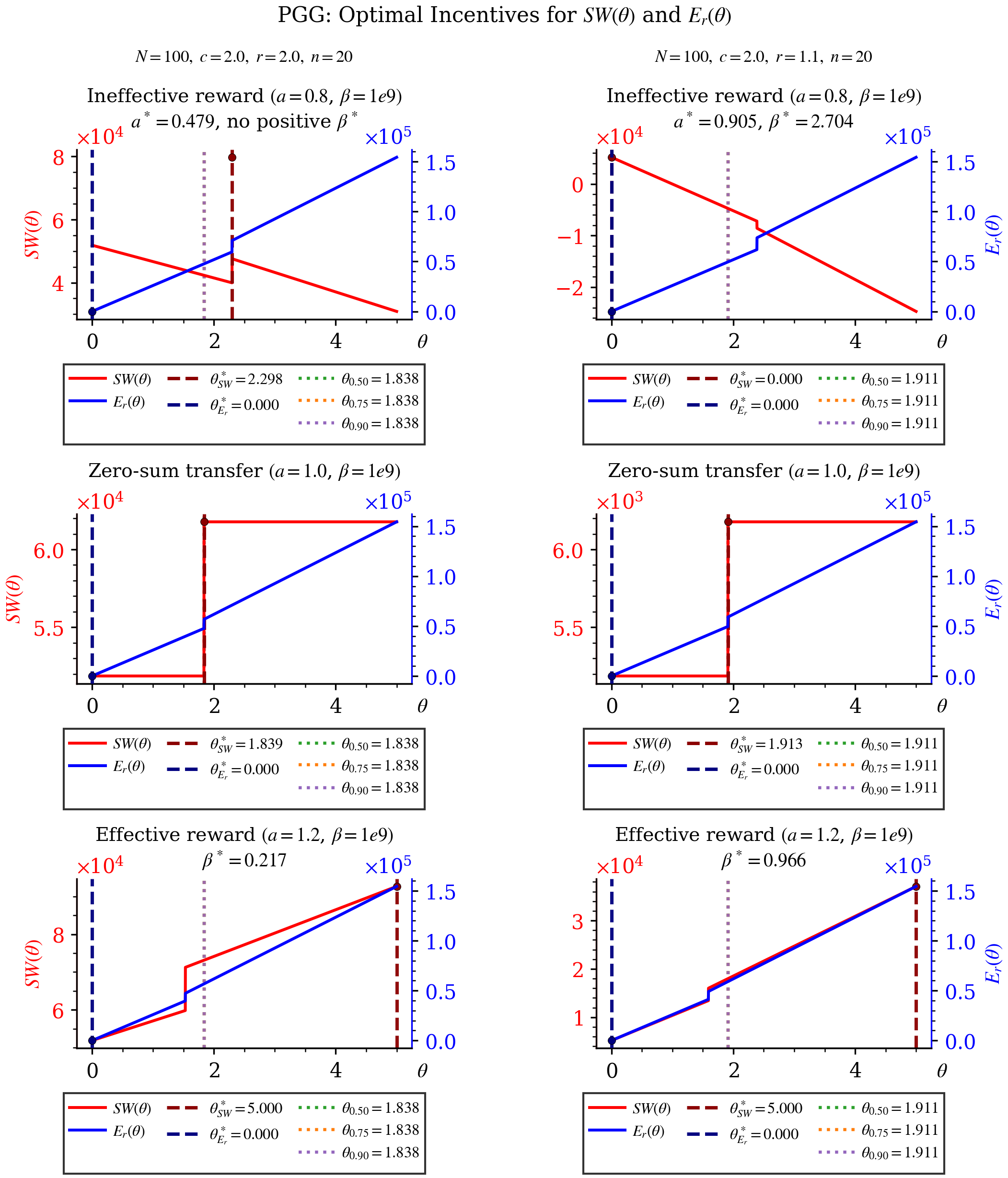}
    \caption{
    Multi-objective comparison of optimal incentive levels for the PGG (reward case) under the strong-selection limit $\beta \to +\infty$. In this limit, the minimum incentive levels required to attain the target cooperation frequencies converge to $-\delta$. Consequently, the threshold requirements reduce the effective comparison to the two objective-specific optima: maximising $SW(\theta)$ and minimising $E_r(\theta)$ over the feasible interval $[-\delta,+\infty)$.\\\\For $a \le 1$, the two objectives are largely aligned, since the $SW(\theta)$ curve is  expected to be non-increasing over the feasible range. However, because an extremely large value of $\beta$ is still finite, the sharp near-vertical transition in $SW(\theta)$ can create a small discrepancy between the solutions for the social welfare and the institutional cost objectives. For $a>1$, the objectives become strongly conflicting: social welfare favours larger incentive levels, whereas institutional cost minimisation favours the smallest feasible incentive.
    }
\end{figure}




\end{document}